\definecolor{light-gray}{gray}{0.95}
\newcommand{\Oh}{\mathcal{O}}
\let\le\leqslant
\let\ge\geqslant
\let\leq\leqslant
\let\geq\geqslant
\let\preceq\preccurlyeq
\newcommand{\itemref}[1]{(\ref{#1})}
\title{\MakeUppercase{Adjacency Labelling for Planar Graphs (and Beyond)}}
\author{
  Vida Dujmović,%
    \thanks{School of Computer Science and Electrical Engineering, University of Ottawa, Canada. This research was partially supported by NSERC.}\quad
  Louis Esperet,%
    \thanks{Laboratoire G-SCOP (CNRS, Univ.\ Grenoble Alpes), Grenoble, France. Partially supported by the French ANR Projects ANR-16-CE40-0009-01 (GATO) and ANR-18-CE40-0032 (GrR).}\quad
  Cyril Gavoille,%
    \thanks{LaBRI, University of Bordeaux, France. This research was partially supported by the French ANR projects ANR-16-CE40-0023 (DESCARTES) and ANR-17-CE40-0015 (DISTANCIA).}\quad
  Gwenaël Joret,%
     \thanks{Département d'Informatique, Université Libre de Bruxelles, Brussels, Belgium. Research supported by an ARC grant from the Wallonia-Brussels Federation of Belgium and by a grant from the National Fund for Scientific Research (FNRS).}\quad
  Piotr Micek,%
    \thanks{Theoretical Computer Science Department, Faculty of Mathematics and Computer Science, Jagiellonian University, Krak\'{o}w, Poland. This research was partially supported by the Polish National Science Center grant (BEETHOVEN; UMO-2018/31/G/ST1/03718).}\newline
  and Pat Morin%
    \thanks{School of Computer Science, Carleton University, Canada. This research was partially supported by NSERC.}
}
\date{}
\begin{document}
\begin{titlepage}
\maketitle

\begin{abstract}
  We show that there exists an adjacency labelling scheme for planar graphs where each vertex of an $n$-vertex planar graph $G$ is assigned a $(1+o(1))\log_2 n$-bit label and the labels of two vertices $u$ and $v$ are sufficient to determine if $uv$ is an edge of $G$.  This is optimal up to the lower order term and is the first such asymptotically optimal result.  An alternative, but equivalent, interpretation of this result is that, for every positive integer $n$, there exists a graph $U_n$ with $n^{1+o(1)}$ vertices such that every $n$-vertex planar graph is an induced subgraph of $U_n$.  These results generalize to a number of other graph classes, including bounded genus graphs, apex-minor-free graphs,  bounded-degree graphs from minor closed families, and $k$-planar graphs.
\end{abstract}
\end{titlepage}
\pagenumbering{roman}
\tableofcontents

\newpage

\setcounter{page}{0}
\pagenumbering{arabic}
\section{Introduction}


A family $\mathcal{G}$ of graphs has an \emph{$f(n)$-bit adjacency labelling scheme} if there exists a function $A:(\{0,1\}^*)^2\to \{0,1\}$ such that for every $n$-vertex graph $G\in \mathcal{G}$ there exists $\ell:V(G)\to\{0,1\}^*$ such that $|\ell(v)|\le f(n)$ for each vertex $v$ of $G$ and such that, for every two vertices $v,w$ of $G$,
\[  A(\ell(v),\ell(w)) =
      \begin{cases}
        0 & \text{if $vw\not\in E(G)$;} \\
        1 & \text{if $vw\in E(G)$.}
      \end{cases}
\]

Let $\log x:=\log_2 x$ denote the binary logarithm of $x$.
In this paper we prove the following result:
\begin{thm}\thmlabel{main}
  The family of planar graphs has a $(1+o(1))\log n$-bit adjacency labelling scheme.
\end{thm}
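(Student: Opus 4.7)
The natural starting point is the recent product structure theorem of Dujmović, Joret, Micek, Morin, Ueckerdt, and Wood: every planar graph $G$ is a subgraph of a strong product $H \boxtimes P$, where $H$ has treewidth at most $8$ and $P$ is a path. This immediately equips each vertex $v \in V(G)$ with coordinates $(h_v, p_v) \in V(H) \times V(P)$, and an edge $uv \in E(G)$ forces both $|p_u - p_v| \le 1$ and $h_u = h_v$ or $h_u h_v \in E(H)$. The task reduces to encoding these two coordinates, together with enough local information, in a single label of length $(1+o(1))\log n$.

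As an intermediate goal I would first build an $O(\log n)$-bit scheme for subgraphs of $H \boxtimes P$ with $H$ of bounded treewidth. Fix a rooted tree decomposition of $H$ of width $8$, and use a balanced/heavy-path decomposition of the decomposition tree to encode, for each $h \in V(H)$, a short ``signature'' listing its ancestor bags and its role inside each one. Two vertices of $H$ are adjacent or equal iff one of them lies in a bag on the ancestor path of the other, a test that can be read off the signatures in constant time. Concatenating the signature with $p_v$ and a few bits identifying multiple vertices in the same cell of $H \boxtimes P$ already yields an $O(\log n)$-bit scheme, but naively this costs roughly $\log|V(H)| + \log|V(P)|$ bits, i.e.\ up to $2\log n$---a constant factor above the target.

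The core of the argument, and the main obstacle, is bringing the leading constant down from $2$ to $1$. The plan is to assign each vertex $v$ a single identifier in a universe of size $n^{1+o(1)}$ that simultaneously packs (a) its path coordinate $p_v$ and (b) its position within the bounded-treewidth scaffold of $H$, with the scaffold contribution squeezed into $o(\log n)$ bits. Concretely I would construct an induced-universal host graph on $n \cdot 2^{o(\log n)}$ vertices whose vertex set enumerates pairs (path slot, scaffold slot), with adjacency defined by local rules mirroring the strong product; the label of $v$ is simply the name of the host vertex it maps to, so adjacency testing splits into the easy check $|p_u - p_v|\le 1$ and an $H$-side check decided from the $o(\log n)$ scaffold bits.

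The hardest step will be showing that the scaffold information can indeed be compressed to $o(\log n)$ bits while still supporting the $H$-adjacency test. I expect this to hinge on designing a small induced-universal graph for the fixed class of bounded-treewidth graphs that can serve as a reusable ``template'' across all path slots, so that only the \emph{identity} of the template slot (a $o(\log n)$-bit address) needs to travel with each vertex. Once this compression is in place, a union bound over path slots will give total label length $\log n + o(\log n)$, and the same scheme will carry over, with larger constants hidden in the $o(1)$, to the generalizations (bounded genus, apex-minor-free, bounded-degree minor-closed, and $k$-planar) promised in the abstract.
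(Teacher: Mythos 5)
Your starting point matches the paper exactly: invoke the product structure theorem to embed $G$ into $H\boxtimes P$ with $\mathrm{tw}(H)\le 8$, and observe that the naive concatenation of a $\log|V(H)|$-bit $H$-label with a $\log|V(P)|$-bit path label costs roughly $2\log n$. You also correctly identify that the crux is halving the leading constant. The gap is in your proposed fix.

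The plan to squeeze the ``scaffold'' ($H$-side) contribution into $o(\log n)$ bits globally cannot work, because $H$ can have $\Theta(n)$ vertices and, in the extreme, \emph{all} of $G$ can live in a single row ($|V(P)|=1$, so $G\subseteq H$). In that case each vertex's label must already distinguish $\Theta(n)$ distinct $H$-vertices, which forces $(1-o(1))\log n$ bits on the $H$-side alone; there is no $o(\log n)$-bit ``template slot'' injective on a row of size $\Theta(n)$. Conversely, when $H$ is tiny and $P$ is long, the roles reverse. So the two coordinates must \emph{trade off} bit-for-bit rather than have one of them uniformly compressed. This is precisely what the paper does: it biases the row code (Lemma giving $|\alpha(y)| \approx \log n - \log W_y$) against a per-row $H$-code of length $\approx \log W_y$, where $W_y$ is a weight tied to the number of relevant $H$-endpoints in rows $y$ and $y{-}1$, so the sum is always $\log n + o(\log n)$. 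Your proposal has no mechanism implementing this trade-off; a ``union bound over path slots'' does not recover it.

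A second, independent gap: even granting biased per-row codes, you still need to test adjacency between a vertex in row $y$ and one in row $y{+}1$, and their $H$-side encodings will generally be built against \emph{different} per-row data structures. Your sketch does not address how a vertex's $o(\log n)$-bit $H$-information in row $y$ can be compared against another vertex's information computed relative to row $y{+}1$. This cross-row consistency problem is where most of the paper's machinery lives: the ``bulk tree sequence'' of binary search trees $T_1,\dots,T_h$, built so that consecutive trees differ by a small set of structured operations (bulk insert, bulk delete, localized rebalancing), and the accompanying \emph{transition codes} of length $\Oh(k\log h(T_y)) = o(\log n)$ that let one translate a signature in $T_y$ into a signature in $T_{y+1}$. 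Without a substitute for this, cross-row adjacency queries are unanswerable from your labels. In short, the proposal reads the right theorem but then heads toward a universal-graph compression that cannot succeed, while the real difficulties (weighted trade-off between coordinates, and coherence of per-row encodings across adjacent rows) are not engaged.
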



\thmref{main} is optimal up to the lower order $o(\log n)$ term,
which is $\Oh\left(\sqrt{\log n\log\log n}\right)$ in our proof.
An alternative, but equivalent, interpretation of \thmref{main} is that, for every integer $n\ge 1$, there exists a graph $U_n$ with $n^{1+o(1)}$  vertices such that every $n$-vertex planar graph is isomorphic to some vertex-induced subgraph of $U_n$.\footnote{There is a small technicality that the equivalence between adjacency labelling schemes and universal graphs requires that $\ell:V(G)\to\{0,1\}^*$ be injective.  The labelling schemes we discuss satisfy this requirement.  For more details about the connection between labelling schemes and universal graphs, the reader is directed to Spinrad's monograph \cite[Section~2.1]{spinrad:efficient}.}

Note that the proof of \thmref{main} is constructive: it gives an algorithm producing the labels in $\Oh(n\log n)$ time.

\subsection{Previous Work}

The current paper is the latest in a series of results dating back to Kannan, Naor, and Rudich \cite{kannan.naor.ea:implicit0,kannan.naor.ea:implicit} and Muller \cite{muller:local} who defined adjacency labelling schemes\footnote{There are some small technical differences between the  definitions in \cite{kannan.naor.ea:implicit} and \cite{muller:local} that have to do with the complexity of computing $\ell(\cdot)$ as a function of $G$ and of computing
$A(\cdot,\cdot)$ as a function of its two arguments.} and described $\Oh(\log n)$-bit adjacency labelling schemes for several classes of graphs, including planar graphs.  Since this initial work, adjacency labelling schemes and, more generally, informative labelling schemes have remained a very active area of research \cite{adjiashvili.rotbart:labeling,alstrup.kaplan.ea:adjacency,abrahamsen.alstrup.ea:near-optimal,alstrup.dahlgaard.ea:sublinear,alstrup.gortz.ea:distance,alstrup.gavoille.ea:simpler,alstrup.rauhe:improved,Alon17}.

Here we review results most relevant to the current work, namely results on planar graphs and their supporting results on trees and bounded-treewidth graphs.  First, a superficial review: Planar graphs have been shown to have $(c+o(1))\log n$-bit adjacency labelling schemes for successive values of $c=6,4,3,2,\tfrac{4}{3}$ and finally \thmref{main} gives the optimal\footnote{It is easy to see that, in any adjacency labelling scheme for any $n$-vertex graph $G$ in which no two vertices have the same neighbourhood, all labels must be distinct, so some label must have length at least $\lceil\log n\rceil$.} result $c=1$.  We now give details of these results.

Muller's scheme for planar graphs \cite{muller:local} is based on the fact that planar graphs are 5-degenerate.  This scheme orients the edges of the graph so that each vertex has 5 outgoing edges, assigns each vertex $v$ an arbitrary $\lceil\log n\rceil$-bit identifier, and assigns a label to $v$ consisting of $v$'s identifier and the identifiers of the targets of $v$'s outgoing edges.  In this way, each vertex $v$ is assigned a label of length at most $6\lceil\log n\rceil$.  Kannan, Naor, and Rudich \cite{kannan.naor.ea:implicit} use a similar approach that makes use of the fact that planar graphs have arboricity 3 (so their edges can be partitioned into three forests \cite{nash-williams:edge-disjoint}) to devise an adjacency labelling scheme for planar graphs whose labels have length at most $4\lceil\log n\rceil$.

A number of $(1+o(1))\log n$-bit adjacency labelling schemes for forests have been devised \cite{chung:universal, alstrup.rauhe:improved,alstrup.dahlgaard.ea:optimal}, culminating with a recent $(\log n + \Oh(1))$-bit adjacency labelling scheme \cite{alstrup.dahlgaard.ea:optimal} for forests.  Combined with the fact that planar graphs have arboricity 3, these schemes imply $(3+o(1))\log n$-bit adjacency labelling schemes for planar graphs.

A further improvement, also based on the idea of partitioning the edges of a planar graph into simpler graphs was obtained by Gavoille and Labourel \cite{gavoille.labourel:shorter}.  Generalizing the results for forests, they describe a $(1+o(1))\log n$-bit adjacency labelling scheme for $n$-vertex graphs of bounded treewidth. As is well known, the edges of a planar graph can be partitioned into two sets, each of which induces a bounded treewidth graph \cite{goncalves.gabow.ea:edge}.
This results in a $(2+o(1))\log n$-bit adjacency labelling scheme for planar graphs.

Very recently, Bonamy, Gavoille, and Pilipczuk \cite{bonamy.gavoille.ea:shorter} described a $(4/3+o(1))\log n$-bit adjacency labelling scheme for planar graphs based on a recent \emph{graph product structure theorem} of Dujmović \etal\ \cite{dujmovic.joret.ea:planar}.  This product structure theorem states that any planar graph is a subgraph of a strong product $H\boxtimes P$ where $H$ is a bounded-treewidth graph and $P$ is a path. See \figref{product}. It is helpful to think of $H\boxtimes P$ as a graph whose vertices can be partitioned into $h:=|V(P)|$ \emph{rows} $H_1,\dots,H_{h}$, each of which induces a copy of $H$ and with vertical and diagonal edges joining corresponding and adjacent vertices between consecutive rows.

\begin{figure}[htbp]
  \begin{center}
    \includegraphics{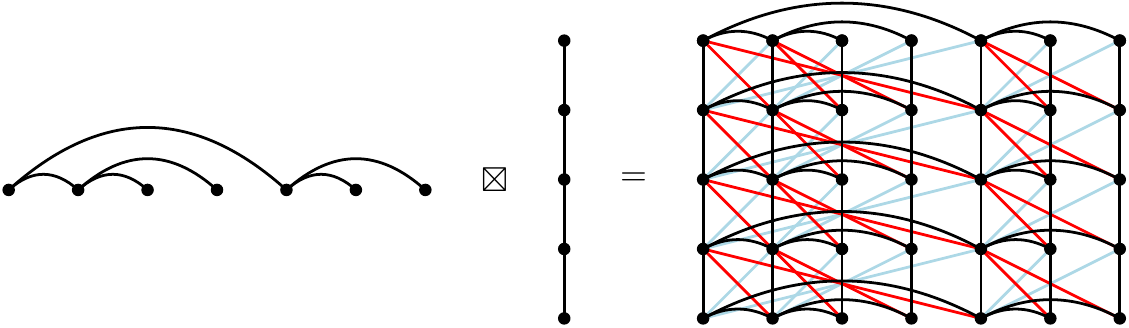}
  \end{center}
  \caption{The strong product $H\boxtimes P$ of a tree $H$ and a path $P$.}
  \figlabel{product}
\end{figure}

The product structure theorem quickly leads to a $(1+o(1))\log(mh)$-bit labelling scheme where $m:=|V(H)|$ and $h:=|V(P)|$ by using a $(1+o(1))\log m$-bit labelling scheme for $H$ (a bounded treewidth graph), a $\ceil{\log{h}}$-bit labelling scheme for $P$ (a path), and a constant number of bits to locally encode the subgraph of $H\boxtimes P$ (of constant arboricity).  However, for an $n$-vertex graph $G$ that is a subgraph of $H\boxtimes P$ in the worst case $m$ and $h$ are each $\Omega(n)$, so this offers no immediate improvement over the existing $(2+o(1))\log n$-bit scheme.

Bonamy, Gavoille, and Pilipczuk improve upon this by cutting $P$ (and hence $G$) into subpaths of length $n^{1/3}$ in such a way that this corresponds to removing $\Oh(n^{2/3})$ vertices of $G$ that have a neighbourhood of size $\Oh(n^{2/3})$. The resulting (cut) graph is a subgraph of $H'\boxtimes P'$ where $H'$ has bounded treewidth, $|H'|\le n$, and $P'$ is a path of length $n^{1/3}$ so it has a labelling scheme in which each vertex has a label of length $(1+o(1))\log (|H'|\cdot|P'|) \le (4/3+o(1))\log n$.  A slight modification of this scheme allows for the $\Oh(n^{2/3})$ \emph{boundary} vertices adjacent to the cuts to have shorter labels, of length only $(2/3+o(1))\log n$.  The cut vertices and the boundary vertices induce a bounded-treewidth graph of size $\Oh(n^{2/3})$.  The vertices in this graph receive secondary labels of length $(2/3+o(1))\log n$.  In this way, every vertex receives a label of length at most $(4/3 + o(1))\log n$.

\subsection{New Results}

The adjacency labelling scheme described in the current paper is also based on the product structure theorem for planar graphs, but it avoids cutting the path $P$, and thus avoids boundary vertices that take part in two different labelling schemes.  Instead, it uses a weighted labelling scheme on the rows $H_1,\dots,H_h$ of $H\boxtimes P$ in which vertices that belong to $H_i$ receive a label of length $(1+o(1))\log n-\log W_i$ where $W_i$ is related to the number of vertices of $G$ contained in $H_i$ and $H_{i-1}$.  The vertices of $G$ in row $i$ participate in a secondary labelling scheme for the subgraph of $G$ contained in $H_i$ and $H_{i-1}$ and the labels in this scheme have length $\log W_i + o(\log n)$. Thus every vertex receives two labels, one of length $(1+o(1))\log n-\log W_i$ and another of length $\log W_i + o(\log n)$ for a total label length of $(1+o(1))\log n$.

The key new technique that allows all of this to work is that the labelling schemes of the rows $H_1,\dots,H_h$ are not independent.  All of these labelling schemes are based on a single balanced binary search tree $T$ that undergoes insertions and deletions resulting in a sequence of related binary search trees $T_1,\dots,T_h$ where each $T_i$ represents all vertices of $G$ in $H_{i}$ and $H_{i-1}$ and the label assigned to a vertex of $H_i$ is essentially based on a path from the root of $T_i$ to some vertex of $T_i$.  By carefully maintaining the binary search tree $T$, the trees $T_{i-1}$ and $T_{i}$ are similar enough so that the label for $v$ in $H_i$ can be obtained, with $o(\log n)$ additional bits from the label for $v$ in $H_{i-1}$.

The product structure theorem has been generalized to a number of additional graph families including bounded-genus graphs, apex-minor free graphs, bounded-degree graphs from minor-closed families, $k$-planar graphs, powers of bounded-degree bounded genus graphs, and $k$-nearest neighbour graphs of points in $\R^2$ \cite{dujmovic.joret.ea:planar,dujmovic.morin.ea:structure}. As a side-effect of designing a labelling scheme to work directly on subgraphs of a strong product $H\boxtimes P$, where $H$ has bounded treewidth and $P$ is a path,
we obtain $(1+o(1))\log n$-bit labelling schemes for all of these graph families.  All of these results are optimal up to the lower order term.

A graph is \emph{apex} if it has a vertex whose removal leaves a planar graph.
A graph is \emph{$k$-planar} if it has a drawing in the plane in which
each edge is involved in at most $k$ crossings. Such graphs provide a natural
generalisation of planar graphs, and have been extensively studied~\cite{kobourov.liotta.ea:annotated}.
The definition of $k$-planar graphs naturally generalises for other surfaces. A graph $G$ is
\emph{$(g,k)$-planar} if it has a drawing in some surface of Euler genus at most
$g$ in which each edge of $G$ is involved in at most $k$ crossings.
Note that already $1$-planar graphs are not minor closed.
The generalization of \thmref{main} provided by known product structure theorems is summarized in the following result:

\begin{thm}\thmlabel{main-all}
  For every fixed integer $t\geq 1$, the family of all graphs $G$ such that $G$ is a subgraph of $H\boxtimes P$ for some graph $H$ of treewidth $t$ and some path $P$ has a $(1+o(1))\log n$-bit adjacency labelling scheme.
  This includes the following graph classes:
  \begin{compactenum}
    \item graphs of bounded genus and, more generally, apex-minor free graphs;
    \item bounded degree graphs that exclude a fixed graph as a minor; and
    \item $k$-planar graphs and, more generally, $(g,k)$-planar graphs.
  \end{compactenum}
\end{thm}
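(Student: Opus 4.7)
My plan is to reduce all of the listed graph classes to the single case of a subgraph of $H\boxtimes P$ with $\mathrm{tw}(H)\le t$, which is exactly what the graph product structure theorems of~\cite{dujmovic.joret.ea:planar,dujmovic.morin.ea:structure} provide (with a constant $t$ depending only on the genus, the excluded apex-minor, the degree together with the excluded minor, or on the pair $(g,k)$). So I focus on constructing a $(1+o(1))\log n$-bit adjacency labelling scheme for an arbitrary $n$-vertex subgraph $G$ of some $H\boxtimes P$ with $\mathrm{tw}(H)\le t$. Writing the rows of $P$ as $H_1,\dots,H_h$, I set $n_i=|V(G)\cap V(H_i)|$, so $\sum_i n_i=n$, and $W_i=n_{i-1}+n_i$. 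Every edge of $G$ lies inside the graph $G_i:=G[(V(G)\cap V(H_{i-1}))\cup(V(G)\cap V(H_i))]$ for some $i$, and each $G_i$ has at most $W_i$ vertices and treewidth at most $2t+1$, since $H\boxtimes P_2$ has treewidth at most $2(t+1)-1$. In particular, by the bounded-treewidth labelling scheme of Gavoille and Labourel~\cite{gavoille.labourel:shorter}, $G_i$ admits a $(1+o(1))\log W_i$-bit adjacency scheme; the key design choice is to produce this scheme so that each label $\lambda_i(v)$ is essentially a root-to-leaf path in a balanced binary search tree $T_i$ of depth $\log W_i + o(\log n)$ whose leaves are the vertices of $G_i$.

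Every vertex $v\in V(G)\cap V(H_i)$ then receives two concatenated labels. Its \emph{local label} $\lambda_i(v)$ has length $\log W_i+o(\log n)$ and, together with the local labels of its row-$i$ and row-$(i-1)$ neighbours, determines adjacency inside $G_i$. Its \emph{global label}, of length $(1+o(1))\log n-\log W_i$, stores (i)~the row index $i$; (ii)~a globally unique identifier for $v$; and, crucially, (iii)~in $o(\log n)$ bits, a \emph{diff} describing how the root-to-leaf path of $v$ in $T_{i-1}$ differs from its path in $T_i$. The total label length is $(1+o(1))\log n$. To test adjacency between $u\in V(G)\cap V(H_i)$ and $v\in V(G)\cap V(H_j)$, one first reads $i$ and $j$; if $|i-j|>1$ the vertices are non-adjacent; otherwise, letting $k=\max(i,j)$, one works inside $T_k$. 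Whichever of $u,v$ lies in row $k-1$ carries only its local label for $T_{k-1}$, but its diff allows $\lambda_k(\cdot)$ to be reconstructed with $o(\log n)$ additional bits, after which the Gavoille--Labourel decoder for $G_k$ answers the query.

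The technical heart of the argument---and the main obstacle---is engineering the sequence $T_1,\dots,T_h$ so that simultaneously (a)~each $T_i$ is a binary search tree of depth $\log W_i + o(\log n)$, giving local labels of the required length, and (b)~the whole sequence is obtained as snapshots of a single weight-balanced tree $T$ evolving under insertions and deletions, in such a way that the total number of bits needed to record the positional changes of any particular vertex $v$ over its entire lifetime in $T$ is $o(\log n)$. Concretely, as $i$ advances by one, $T$ deletes the vertices of $V(G)\cap V(H_{i-2})$ and inserts those of $V(G)\cap V(H_i)$, with weights tuned so that a vertex alive at step $i$ sits at depth $O(\log W_i)$. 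An amortised analysis of the rebalancing operations (analogous to the $O(\log n)$-amortised-rotations analysis of standard weight-balanced trees, but charged to each vertex's diff budget rather than to running time) must yield an $o(\log n)$ bound on the total displacement of any fixed vertex, which can then be split between the at most two diffs that vertex needs. Once this dynamic-tree machinery is in place, the three enumerated graph classes follow immediately by plugging in the corresponding product structure theorem from~\cite{dujmovic.joret.ea:planar,dujmovic.morin.ea:structure}.
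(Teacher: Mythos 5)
Your high-level framework matches the paper's closely: reduce all listed classes to subgraphs of $H\boxtimes P$ via the product structure theorems, weight the rows so that each vertex's row index is encoded in $\log n - \log W_i + o(\log n)$ bits, and give each vertex a local label of length $\log W_i + o(\log n)$ derived from a root-to-leaf path in a binary search tree $T_i$ for the two-row slice, with the trees $T_1,\dots,T_h$ evolving gradually so that a short ``diff'' lets one translate a label from $T_{i-1}$ to $T_i$. That is exactly the paper's strategy.

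However, you explicitly identify the construction of the sequence $T_1,\dots,T_h$ as ``the main obstacle'' and then propose to resolve it by ``an amortised analysis of the rebalancing operations (analogous to the $O(\log n)$-amortised-rotations analysis of standard weight-balanced trees).'' This is a genuine gap, and the proposed fix does not obviously close it, for two reasons. First, what is needed is a \emph{worst-case, per-step} bound on the size of each single transition code, not an amortised or per-lifetime bound: a vertex $(v,i)$ of $G$ stores only the one diff describing $T_i\to T_{i+1}$, and that diff must be short in every case, not merely on average over a sequence of operations. Second, standard amortised rotation counts bound the \emph{number} of structural changes, but a single rotation near the root of a balanced BST can change the depth (and hence the signature prefix) of $\Theta(n)$ nodes by one bit each; an amortised accounting over operations gives no control over how many bits any particular node's signature changes in a single step. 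What the paper actually builds is a bespoke data structure (the ``bulk tree sequence'') with three custom operations---$\textsc{BulkBalance}$, $\textsc{BulkInsert}$, $\textsc{BulkDelete}$---engineered precisely so that \emph{every} node's signature change in one step factors through a fixed depth-$\theta$ prefix, a height-$k$ perfectly balanced top, and at most $O(k)$ split operations, yielding a worst-case transition code of $O(k\log h(T_y))$ bits (Lemmas~\lemref{node-bal-transitions}, \lemref{node-transitions}, \lemref{interval-transitions}). This is a structural argument about the specific rebalancing primitive, not an amortised one, and it is the core technical content that your proposal leaves unproved. (There is also a secondary gap: for general $t$-trees $H$, vertices of $H$ are not real numbers, so one must first pass through an interval representation of $H$ and define the BST position $x_T(v)$ via lowest common ancestors of interval endpoints, as in \lemref{all-the-obs-about-bst-and-intervals}; your proposal implicitly treats vertices of $G_i$ as directly insertable into a BST, which works for $P\boxtimes P$ but not for $H\boxtimes P$.)
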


The case of graphs of bounded degree from minor-closed classes (point~2 in \thmref{main-all}) is particularly interesting since, prior to the current work, the best known bound for adjacency labelling schemes in planar graphs of bounded degree was the same as for general planar graphs, i.e., $(4/3+o(1))\log n$. On the other hand, our \thmref{main-all} now gives an asymptotically optimal bound of $(1+o(1))\log n$ for graphs of bounded degree from any proper minor-closed class.

\subsection{Outline}

The remainder of the paper is organized as follows. \Secref{preliminaries} reviews some preliminary definitions and easy results.  \Secref{bulk-trees} describes a new type of balanced binary search tree that has the specific properties needed for our application. \Secref{pxp} solves a special case, where $G$ is an $n$-vertex subgraph of $P_1\boxtimes P_2$ where $P_1$ and $P_2$ are both paths. We include it to highlight the generic idea behind our adjacency labelling scheme. \Secref{hxp} solves the general case in which $G$ is an $n$-vertex subgraph of $H\boxtimes P$ where $H$ has bounded treewidth and $P$ is a path.  \Secref{conclusion} concludes with a discussion of the computational complexity of assigning labels and testing adjacency and presents directions for future work.

\section{Preliminaries}
\seclabel{preliminaries}

All graphs we consider are finite and simple.  The vertex and edge sets of a graph $G$ are denoted by $V(G)$ and $E(G)$, respectively.  The \emph{size} of a graph $G$ is denoted by $|G|:=|V(G)|$.

For any graph $G$ and any vertex $v\in V(G)$, let $N_G(v):=\{w\in V(G): vw\in E(G)\}$ and $N_G[v]:=N_G(v)\cup\{v\}$ denote the open neighbourhood and closed neighbourhood of $v$ in $G$, respectively.

\subsection{Prefix-Free Codes}

For a string $s=s_1,\dots,s_k$, we use $|s|:=k$ to denote the length of $s$.
A string $s_1,\dots,s_k$ is a \emph{prefix} of a string $t_1,\dots,t_\ell$ if $k\le \ell$ and $s_1,\dots,s_k=t_1,\dots,t_k$.  A \emph{prefix-free code} $c:X\to\{0,1\}^*$ is a one-to-one function in which $c(x)$ is not a prefix of $c(y)$ for any two distinct $x,y\in X$.  Let $\N$ denote the set of non-negative integers.  The following is a classic observation\footnote{The binary representation $w$ of a positive integer $i$ has length  $|w|=\lfloor\log i\rfloor+1$ and begins with $1$. The gamma code for $i$ is given by $\gamma(i):=0^{|w|-1}w$. This give a codeword that is decoded by counting the number, $z$, of leading zeros and then treating the next $z+1$ bits as the binary representation of a positive integer.} of Elias from 1975.

\begin{lem}[Elias \cite{elias:universal}]\lemlabel{elias}
  There exists a prefix-free code $\gamma:\N\to\{0,1\}^*$ such that, for each $i\in\N$, $|\gamma(i)|\le 2\lfloor\log(i+1)\rfloor + 1$.
  \end{lem}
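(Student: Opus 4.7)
The plan is to construct $\gamma$ explicitly following the Elias gamma code described in the footnote and then verify the two required properties: the length bound and the prefix-free property.

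For each $i\in\N$, I will encode $i$ by applying Elias's trick to the positive integer $i+1$. Write $i+1$ in standard binary, obtaining a string $w=w_1\cdots w_k$ with $w_1=1$ and $k=\lfloor\log(i+1)\rfloor+1$. Define
\[
  \gamma(i) := \underbrace{0\cdots 0}_{k-1}\,w_1 w_2\cdots w_k .
\]
So the codeword consists of $k-1$ zeros followed by the $k$-bit binary expansion of $i+1$. In particular, $|\gamma(i)| = 2k-1 = 2\lfloor\log(i+1)\rfloor + 1$, which gives the length bound immediately. (Note that $\gamma(0)=1$.)

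To establish the prefix-free property, I will exhibit a unique decoding procedure that recovers $i$ from $\gamma(i)$ while reading only the characters of $\gamma(i)$ itself, without lookahead. Given any binary string $s$, scan left to right until the first $1$ is encountered; let $k-1$ be the number of leading zeros, so the next symbol is position $k$ in the string. Then read the next $k-1$ symbols (for a total of $k$ symbols after the leading zeros). Interpret those $k$ bits as the binary representation of an integer $m$, and output $i:=m-1$. By construction this procedure, when run on $\gamma(i)$, reads exactly the $2k-1$ symbols of $\gamma(i)$ and returns $i$. Hence if $\gamma(i)$ were a prefix of $\gamma(j)$ for distinct $i,j\in\N$, running the decoder on $\gamma(j)$ would, after reading only the initial $|\gamma(i)|$ symbols, already terminate and output $i\neq j$, contradicting the fact that the same procedure applied to $\gamma(j)$ must output $j$. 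Therefore $\gamma$ is prefix-free, and it is one-to-one because decoding recovers the input.

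The proof has no real obstacle; the only point requiring a touch of care is the bookkeeping around the offset $i\mapsto i+1$, which is needed to accommodate $i=0$ (since $\log 0$ is undefined) and to make the length bound $2\lfloor\log(i+1)\rfloor+1$ come out cleanly for every $i\in\N$.
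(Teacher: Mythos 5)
Your construction is exactly the Elias $\gamma$ code that the paper sketches in its footnote (which describes the code for positive integers), with the natural offset $i\mapsto i+1$ to cover $0\in\N$ and match the stated bound $2\lfloor\log(i+1)\rfloor+1$; the prefix-free argument via the self-delimiting decoder is the standard one the footnote alludes to. Correct, and essentially the same approach as the paper.
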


  In the remainder of the paper, $\gamma$ (which we call an \emph{Elias encoding}) will be used extensively, without referring systematically to \lemref{elias}.

\subsection{Labelling Schemes Based on Binary Trees}

A \emph{binary tree} $T$ is a rooted tree in which each node except the root is either the \emph{left} or \emph{right} child of its parent and each node has at most one left and at most one right child.  For any node $x$ in $T$, $P_T(x)$ denotes the path from the root of $T$ to $x$.  The \emph{length} of a path $P$ is the number of edges in $P$, i.e., $|P|-1$.  The \emph{depth}, $d_T(x)$ of $x$ is the length of $P_T(x)$.  The \emph{height} of $T$ is $h(T):=\max_{x\in V(T)} d_T(x)$.  A \emph{perfectly balanced} binary tree is any binary tree $T$ of height $h(T)=\lfloor\log|T|\rfloor$, where $|T|:=|V(T)|$ denotes the number of nodes in $T$.

A binary tree is \emph{full} if each non-leaf node has exactly two children. For a binary tree $T$, we let $T^+$ denote the full binary tree obtained by attaching to each node $x$ of $T$ $2-c_x$ leaves where $c_x \in\{0,1,2\}$ is the number of children of $x$.  We call the leaves of $T^+$ the \emph{external nodes} of $T$.  (Note that none of these external nodes are in $T$ and, for any non-empty $T$,  $h(T^+)=h(T)+1$.)

A node $a$ in $T$ is a \emph{$T$-ancestor} of a node $x$ in $T$ if $a\in V(P_T(x))$. If $a$ is a $T$-ancestor of $x$ then $x$ is a \emph{$T$-descendant} of $a$. (Note that a node is a $T$-ancestor and $T$-descendant of itself.)  For a subset of nodes $X\subseteq V(T)$, the \emph{lowest common $T$-ancestor} of $X$ is the maximum-depth node $a\in V(T)$ such that $a$ is a $T$-ancestor of $x$ for each $x\in X$.

Let $P_T(x_r)=x_0,\dots,x_{r}$ be a path from the root $x_0$ of $T$ to some node $x_r$ (possibly $r=0$).  Then the \emph{signature} of $x_r$ in $T$, denoted $\sigma_T(x_r)$ is a binary string $b_1,\dots,b_r$ where $b_i=0$ if and only if $x_{i}$ is the left child of $x_{i-1}$.
Note that the signature of the root of $T$ is the empty string.

A \emph{binary search tree} $T$ is a binary tree whose node set $V(T)$ consists of distinct real numbers and that has the \emph{binary search tree property}:  For each node $x$ in $T$, $z<x$ for each node $z$ in $x$'s left subtree and $z>x$ for each node $z$ in $x$'s right subtree. For any $x\in\R\setminus V(T)$, the \emph{search path} $P_T(x)$ in $T$ is the unique root-to-leaf path $v_0,\dots,v_r$ in $T^+$ such that adding $x$ as a (left or right, as appropriate) child of $v_{r-1}$ in $T$ would result in a binary search tree $T'$ with $V(T')=V(T)\cup\{x\}$.

The following observation allows us to compare values in a binary search tree just given their signatures in the tree.

\begin{obs}\obslabel{lexicographic}
  For any binary search tree $T$ and any nodes $x$, $y$ in $T$, we have $x<y$ if and only if $\sigma_T(x)$ is lexicographically less than $\sigma_T(y)$.
\end{obs}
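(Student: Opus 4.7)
The plan is to proceed by induction on the height of $T$, exploiting the recursive structure of a binary search tree. In the inductive step, let $r$ be the root of $T$, with left and right subtrees $T_L$ and $T_R$ (themselves binary search trees by the BST property). The definition of signatures unfolds recursively: $\sigma_T(r)$ is the empty string, while for $z\ne r$ we have $\sigma_T(z)=0\cdot\sigma_{T_L}(z)$ if $z\in V(T_L)$ and $\sigma_T(z)=1\cdot\sigma_{T_R}(z)$ if $z\in V(T_R)$.

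The argument then splits into cases based on where $x$ and $y$ lie among $\{r\}$, $V(T_L)$, and $V(T_R)$. When $x$ and $y$ lie in the same subtree, the shared leading bit reduces the lex comparison of $\sigma_T(x)$ and $\sigma_T(y)$ to the comparison of their subtree-signatures, and the induction hypothesis applied to $T_L$ or $T_R$ delivers the conclusion. When $x\in V(T_L)$ and $y\in V(T_R)$, the BST property at $r$ yields $x<r<y$; at the same time $\sigma_T(x)$ begins with $0$ and $\sigma_T(y)$ with $1$, so $\sigma_T(x)$ is lex-less than $\sigma_T(y)$, as required. The remaining cases are those in which one of $x,y$ equals $r$: the BST property places the other node on one side of $r$ according to the first bit of its signature, and this must be matched against the lex comparison with the empty string $\sigma_T(r)$.

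The one point that requires care is this last case, which depends on the convention used for lexicographically comparing the empty string against a non-empty one; the convention implicit in the statement is that the empty string is treated as lying strictly between strings beginning with $0$ and strings beginning with $1$, which is exactly what aligns the lex comparison with the BST ordering at every internal node. Equivalently, one may compare signatures by thinking of each signature as identifying the unique in-order position of its node, which is what the BST property guarantees. Once this convention is fixed, the case analysis above closes the induction and the observation follows; I do not expect any substantive difficulty beyond ensuring the conventions are stated unambiguously.
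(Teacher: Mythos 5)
The paper states this as an observation without proof, so there is no argument to compare against; your induction is self-contained and correct, and it is also the natural way to prove the statement.

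The most valuable part of your write-up is the point about conventions, and it deserves to be made even more forcefully: under the \emph{standard} lexicographic order on binary strings, where a proper prefix compares as strictly less than any of its extensions, the observation as literally stated is \emph{false}. Take a root $r$ with a left child $\ell$; then $\ell<r$, but $\sigma_T(r)$ is the empty string, which is a proper prefix of $\sigma_T(\ell)=0$, so the standard convention would declare $\sigma_T(r)$ lexicographically smaller. Your fix --- treat the end-of-string marker as a symbol strictly between $0$ and $1$ (equivalently, compare the dyadic rationals $0.\sigma_T(x)1$ and $0.\sigma_T(y)1$, or compare in-order positions) --- is exactly what is needed, and it applies not only at the root but whenever $x$ is a proper $T$-ancestor of $y$, in which case $\sigma_T(x)\prec\sigma_T(y)$ and the comparison is decided by the bit of $\sigma_T(y)$ immediately after the prefix $\sigma_T(x)$. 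Your induction absorbs this into the ``one of $x,y$ equals the root'' case at each level, which is the right bookkeeping. One minor note: the paper's downstream uses of the observation (in the proof of its Lemma~4 and in the adjacency tests of Sections~4 and~5) really do encounter the ancestor--descendant case, so the non-standard convention is not a cosmetic choice; the algorithms would be incorrect under standard lexicographic order. Flagging this explicitly, as you did, is worthwhile.
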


Let $\R^+$ denote the set of positive real numbers. The following is a folklore result about biased binary search trees, but we sketch a proof here for completeness.

\begin{lem}\lemlabel{biased-bst}
  For any finite $S\subset \R$ and any function $w:S\to\R^+$, there exists a binary search tree $T$ with $V(T)=S$ such that, for each $y\in S$, $d_T(y)\le\log(W/w(y))$, where $W:=\sum_{y\in S} w(y)$.
\end{lem}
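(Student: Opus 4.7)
The plan is to build $T$ recursively using a weight-balanced split at each step. Sorting $S$ as $y_1<y_2<\dots<y_n$, I let $k$ be the smallest index such that $\sum_{i=1}^{k} w(y_i)\ge W/2$, make $y_k$ the root of $T$, and recursively construct the left and right subtrees on $\{y_1,\dots,y_{k-1}\}$ and $\{y_{k+1},\dots,y_n\}$ (with the inherited weights). The binary search tree property follows automatically because the left set consists entirely of values smaller than $y_k$ and the right set of values larger than $y_k$.

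The first step in the proof is to verify the invariant that each child subtree has total weight at most $W/2$. By the minimality of $k$, the left subtree has total weight $\sum_{i<k} w(y_i) < W/2$, and since $\sum_{i\le k} w(y_i) \ge W/2$, the right subtree has total weight $W - \sum_{i\le k} w(y_i) \le W/2$. So the particular tie-breaking in the choice of $k$ is what ensures that \emph{both} sides have weight at most $W/2$.

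The second step is to bound $d_T(y)$ by iterating this halving along the root-to-$y$ path. For any $y\in S$ at depth $d:=d_T(y)$, trace the path $x_0,x_1,\dots,x_d=y$ from the root of $T$ down to $y$ and let $W_i$ be the total weight of the subtree rooted at $x_i$. Then $W_0=W$, and by the invariant $W_i\le W_{i-1}/2$ for $i=1,\dots,d$, so $W_d\le W/2^d$. Since $y$ itself contributes to $W_d$, we have $w(y)\le W_d\le W/2^d$, which rearranges to $d\le \log(W/w(y))$.

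There is no real obstacle; the only subtle point is the tie-breaking when some single element $y_k$ has weight exceeding $W/2$. In that case the rule still selects $y_k$ as root, the left and right children carry weight strictly less than $W/2$ as required, and the depth bound $d_T(y_k)=0\le \log(W/w(y_k))$ is satisfied because $w(y_k)\le W$ by definition.
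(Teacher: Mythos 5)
Your proof is correct and follows essentially the same approach as the paper's: recursively choose a root so that both children carry at most half the total weight, then bound the depth by iterated halving. The only cosmetic differences are the tie-breaking rule for choosing the root (you take the smallest $k$ with $\sum_{i\le k}w(y_i)\ge W/2$, while the paper takes the unique split with left weight $\le W/2$ and right weight $<W/2$ — both give the needed bound on each side) and the style of the depth argument (you trace the root-to-$y$ path and multiply out the halving, while the paper phrases the same bound as an explicit induction on $|S|$).
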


\begin{proof}
  The proof is by induction on $|S|$. The base case $|S|=0$ is vacuously true.
  For any $x\in\R$, let $S_{<x}:=\{y\in S: y < x\}$ and $S_{>x}:=\{y\in S: y>x\}$. For $|S|\ge 1$, choose the root of $T$ to be the unique node $y_0\in S$ such that $\sum_{z\in S_{<y_0}} w(z)\le W/2$ and $\sum_{z\in S_{>y_0}}< W/2$. Apply induction on $S_{<y_0}$ and $S_{>y_0}$ to obtain the left and right subtrees of $T$, respectively.

  Then $d_T(y_0)=0=\log 1\le \log (W/w(y_0))$.  For each $y\in S_{<y_0}$,
  \[
    d_T(y) \le 1 + \log\left(\frac{\sum_{z\in S_{<y_0}}w(z)}{w(y)}\right)
            \le 1 + \log \left(\frac{W/2}{w(y)}\right)
            = \log \left(\frac{W}{w(y)}\right) ,
  \]
  and the same argument shows that $d_T(y) < \log (W/w(y))$ for each $y\in S_{>y_0}$.
\end{proof}

The following fact about binary search trees is useful, for example, in the deletion algorithms for several types of balanced binary search trees \cite[Section~6.2.3]{morin:open}, see \figref{sigma-minus-one}:

\begin{figure}
  \begin{center}
    \includegraphics{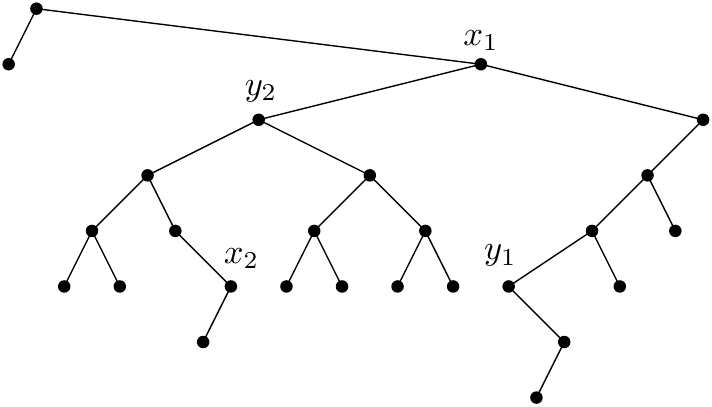}
  \end{center}
  \caption{An illustration of \obsref{predecessor-encoding}: (1)~$\sigma_T(y_1)=11000$ and $\sigma_T(x_1)=1\mbox{\sout{1000}}$ (2)~$\sigma_T(y_2)=10\mbox{\sout{011}}$ and $\sigma_T(x_2)=10011$.}
  \figlabel{sigma-minus-one}
\end{figure}

\begin{obs}\obslabel{predecessor-encoding}
  Let $T$ be a binary search tree and let $x$, $y$ be nodes in $T$ such that $x<y$ and there is no node $z$ in $T$ such that $x<z<y$, i.e., $x$ and $y$ are consecutive in the sorted order of $V(T)$.  Then
  \begin{enumerate}
    \item (if $y$ has no left child) $\sigma_T(x)$ is obtained from $\sigma_T(y)$ by removing all trailing 0's and the last 1; or
    \item (if $y$ has a left child) $\sigma_T(x)$ is obtained from $\sigma_T(y)$ by appending a 0 followed by $d_T(y)-d_T(x)-1$ 1's.
  \end{enumerate}
Therefore, there exists a function $D:(\{0,1\}^*)^2\to\{0,1\}^*$ such that, for every binary search tree $T$ and for every two consecutive nodes $x, y$ in the sorted order of $V(T)$, there exists $\delta_T(y) \in \{0,1\}^*$ with $|\delta_T(y)|=\Oh(\log h(T))$ such that
$D(\sigma_T(y),\delta_T(y))=\sigma_T(x)$.
\end{obs}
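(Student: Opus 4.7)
The plan is to first verify the two structural claims about the signature of $x$ relative to that of $y$, and then to encode the small amount of additional information needed to reconstruct $\sigma_T(x)$ from $\sigma_T(y)$.

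For the structural part I would split on whether $y$ has a left child. In Case~(1), with no left child, $x$ must be a proper ancestor of $y$, and being the largest element less than $y$ it is the lowest ancestor from whose right subtree $y$ descends. Moreover $y$ must in fact be the minimum of $x$'s right subtree: any value in that subtree exceeds $x$, and any value strictly between $x$ and $y$ would contradict $x$ being $y$'s predecessor. Hence the path from $x$ to $y$ goes right once and then repeatedly left, so $\sigma_T(y) = \sigma_T(x) \cdot 1 \cdot 0^k$ for some $k \geq 0$; equivalently, $\sigma_T(x)$ is obtained from $\sigma_T(y)$ by stripping all trailing $0$'s and then a single $1$. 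In Case~(2), the predecessor is the maximum of $y$'s left subtree, obtained by going left once from $y$ and then right as long as possible, so $\sigma_T(x) = \sigma_T(y) \cdot 0 \cdot 1^k$ where $k$ counts the right steps.

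For the encoding, I would let $\delta_T(y)$ consist of a single bit identifying which case applies, followed by an Elias encoding $\gamma(k)$ of the integer $k$ appearing in that case (the number of trailing $0$'s to strip, or the number of $1$'s to append). Since $k \leq h(T)$ in both cases, \lemref{elias} gives $|\delta_T(y)| \leq 2\lfloor \log(h(T)+1)\rfloor + 2 = \Oh(\log h(T))$. The function $D$ then reads the case bit, decodes $k$ from the self-delimiting Elias codeword (whose boundary can be recovered without external information), and applies the prescribed string operation to $\sigma_T(y)$ to output $\sigma_T(x)$.

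The only real subtlety is the justification that in Case~(1), $y$ is the minimum of $x$'s right subtree; this relies on the BST property together with the fact that $x$ is the immediate predecessor of $y$, which forces all values in $x$'s right subtree to be at least $y$. Once this point is pinned down, the rest is routine bookkeeping around prefix-free coding of a single small integer.
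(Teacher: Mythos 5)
Your proposal is correct and follows essentially the same route as the paper: case-split on whether $y$ has a left child, identify $x$ as either the lowest ancestor from whose right subtree $y$ descends (Case~1) or the maximum of $y$'s left subtree (Case~2), derive the string relation between $\sigma_T(x)$ and $\sigma_T(y)$, and then encode a case bit plus an Elias-coded integer of size $\Oh(h(T))$. Your justification of the Case~1 claim (that the path from $x$'s right child down to $y$ consists only of left turns, since a right turn at some $z$ would place $z$ strictly between $x$ and $y$) is exactly the right argument and is the subtle point worth spelling out.

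The one place you are slightly wasteful is Case~1: the number $k$ of trailing zeros to strip is already determined by $\sigma_T(y)$ itself, so the paper's $\delta_T(y)$ is just the single case bit there (encoding $\gamma(k)$ only in Case~2). Your version adds an unnecessary $\gamma(k)$ in Case~1, but this does not change the $\Oh(\log h(T))$ bound, so the result is unaffected. Incidentally, the paper's statement of Case~2 writes the exponent as $d_T(y)-d_T(x)-1$, which is a sign typo (since $x$ is a descendant of $y$ there, it should read $d_T(x)-d_T(y)-1$); your formulation via ``$k$ counts the right steps'' sidesteps that and is the correct reading.
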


The bitstring $\delta_T(y)$ from \obsref{predecessor-encoding} is obtained as follows: It consists of a first bit indicating whether $y$ has a left child in $T$ or not and, in case $y$ does have a left child, an Elias encoding $\gamma(s)$ of the value $s:=d_T(y)-d_T(x)-1$.  More precisely, $\delta_T(y) = 0$ or $\delta_T(y) = 1, \gamma(s)$.

Putting some of the preceding results together we obtain the following useful coding result.

\begin{lem}\lemlabel{row-code}
  There exists a function $A:(\{0,1\}^*)^2\to\{-1,0,1,\perp\}$ such that, for any $h\in\N$, and any $w:\{1,\dots,h\}\to\R^+$ there is a prefix-free code $\alpha:\{1,\dots,h\}\to \{0,1\}^*$ such that
  \begin{compactenum}
    \item for each $i\in\{1,\dots,h\}$, $|\alpha(i)|=\log W -\log w(i) + \Oh(\log\log h)$, where $W:=\sum_{j=1}^h w(j)$;
    \item for any $i,j\in\{1,\dots,h\}$,
    \[   A(\alpha(i),\alpha(j))
    = \begin{cases}
       0 & \text{if $j=i$;}\\
       1 & \text{if $j=i+1$;} \\
       -1 & \text{if $j=i-1$;} \\
       \perp & \text{otherwise.}
      \end{cases}
      \]
    \end{compactenum}
\end{lem}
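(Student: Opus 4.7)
My plan is to extract all labels from a single weighted binary search tree. First I would apply \lemref{biased-bst} to the set $\{1,\dots,h\}$ with weight function $w$, obtaining a binary search tree $T$ with $V(T)=\{1,\dots,h\}$ and $d_T(i)\le\log(W/w(i))$ for every $i$. Each index $i$ then carries a signature $\sigma_T(i)\in\{0,1\}^{d_T(i)}$ which, by \obsref{lexicographic}, is a unique identifier of $i$ inside $T$; and, by \obsref{predecessor-encoding}, a short predecessor hint $\delta_T(i)$ of length $\Oh(\log h(T))$ satisfying $D(\sigma_T(i),\delta_T(i))=\sigma_T(i{-}1)$.

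Next I would define
\[
  \alpha(i) \;:=\; b_i\cdot\gamma\bigl(|\sigma_T(i)|\bigr)\cdot\sigma_T(i)\cdot\delta_T(i),
\]
where $b_i\in\{0,1\}$ is a leading flag bit indicating whether $i>1$, so that the $\delta_T$ field is omitted for the minimum index. The result is prefix-free: the Elias code $\gamma$ is self-delimiting, so a decoder reads $b_i$, then $\gamma(|\sigma_T(i)|)$, then $\sigma_T(i)$, and finally (if $b_i=1$) the self-delimiting $\delta_T(i)$, whose own structure---one bit optionally followed by an Elias code---makes it unambiguously parseable.

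The adjacency function $A$ is then straightforward. Given $\alpha(i)$ and $\alpha(j)$, the decoder recovers $\sigma_T(i),\sigma_T(j)$ and, whenever the flag bits allow, $\sigma_T(i{-}1):=D(\sigma_T(i),\delta_T(i))$ and $\sigma_T(j{-}1):=D(\sigma_T(j),\delta_T(j))$. It outputs $0$ if $\sigma_T(i)=\sigma_T(j)$, $+1$ if $\sigma_T(i)=\sigma_T(j{-}1)$, $-1$ if $\sigma_T(i{-}1)=\sigma_T(j)$, and $\perp$ otherwise. Correctness is immediate since signatures are unique identifiers of the nodes of $T$.

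The main obstacle is the length accounting, which is where the $\Oh(\log\log h)$ slack has to be justified. We have
\[
  |\alpha(i)| \;\le\; d_T(i) + \Oh(\log d_T(i)) + \Oh(\log h(T)) + \Oh(1) \;\le\; \log W - \log w(i) + \Oh(\log h(T)).
\]
Since $d_T(i)\le\log W$ for every $i$, one has $h(T)\le\log W$, so the overhead is $\Oh(\log\log W)$. In the regime where $W$ is polynomially bounded in $h$---which is the setting in which this lemma is invoked in \secref{pxp} and \secref{hxp}---this is $\Oh(\log\log h)$, matching the stated bound. The rest of the argument is routine bit-parsing.
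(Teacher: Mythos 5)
Your argument follows the paper's overall architecture (biased BST via \lemref{biased-bst}, signatures plus predecessor hint via \obsref{predecessor-encoding}, the same decoding logic for $A$), but there is one genuine gap in the length accounting, and you in fact flag it yourself.

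The issue is the bound on $h(T)$. Applying \lemref{biased-bst} directly with the given weight function $w$ yields $d_T(i)\le\log(W/w(i))$, and hence $h(T)\le\log W-\min_i\log w(i)$. This is not bounded by $\log h$ in general: nothing in the hypotheses relates $W$ to $h$, and the weights $w(i)\in\R^+$ are not even assumed to be at least $1$, so even your intermediate claim $d_T(i)\le\log W$ need not hold. Consequently the overhead you obtain is $\Oh(\log\log W)$ (or worse), while the lemma promises $\Oh(\log\log h)$ unconditionally. You acknowledge this and retreat to the regime $W=\mathrm{poly}(h)$, but that weakens the statement: the lemma as written is a clean unconditional claim, and it needs to be, since later applications should not have to re-verify a side condition on $W$.

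The missing idea is a weight-padding trick. Replace $w$ by $w'(i):=w(i)+W/h$, so that $W':=\sum_i w'(i)=2W$. Then \lemref{biased-bst} gives a tree $T$ on $\{1,\dots,h\}$ satisfying both $d_T(i)\le\log(2W)-\log(w(i)+W/h)\le\log W-\log w(i)+1$ and $d_T(i)\le\log(2W)-\log(W/h)=\log h+1$. The second inequality forces $h(T)\le\log h+1$, so every logarithmic-in-height overhead term (the Elias prefix $\gamma(|\sigma_T(i)|)$ and the predecessor hint $\delta_T(i)$) becomes $\Oh(\log\log h)$ as required, while the first inequality still gives $|\sigma_T(i)|=d_T(i)\le\log W-\log w(i)+\Oh(1)$. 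With this one change, the rest of your proof goes through verbatim (your leading flag bit $b_i$ for $i=1$ is a harmless cosmetic difference from the paper, which simply never invokes $\delta_T$ on the minimum index during decoding).
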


\begin{proof}
  Define $w':\{1,\dots,h\}\to \R^+$ as $w'(i)=w(i)+W/h$ and let $W':=\sum_{i=1}^h w'(i)=2W$.
  Using \lemref{biased-bst}, construct a biased binary search tree $T$ on $\{1,\dots,h\}$ using $w'$ so that
  \[
    d_T(i)\le\log (2W)-\log(w(i)+W/h) \le \log W-\log w(i)+1
  \]
  and
  \[
  d_T(i)\le\log (2W)-\log(w(i)+W/h) \le \log W-\log (W/h)+1 \le \log h + 1,
  \]
  for each $i\in\{1,\dots,h\}$.  This latter inequality implies that $h(T)\le\log h + 1$.

  The code $\alpha(i)$ for $i$ consists of three parts.  The first part, $\gamma(|\sigma_T(i)|)$, is the Elias encoding of the length of the path $P_T(i)$ from the root to $i$ in $T$. The second part $\sigma_T(i)$ encodes the left/right turns along this path. The third part, $\delta_T(i)$, is defined in \obsref{predecessor-encoding}.
  The length of $\delta_T(i)$ is $\Oh(\log h(T))=\Oh(\log\log h)$. Note that since $\gamma$ is prefix-free and two distinct sequences $\sigma_T(i)$ and $\sigma_T(j)$ of the same length cannot be prefix of one another, the code $\alpha$ is also prefix-free (and thus injective).

  The function $A$ is given by a simple algorithm: Given $\alpha(i)$ and $\alpha(j)$, first observe that the values of $\gamma(\cdot)$, $\sigma_T(\cdot)$, and $\delta_T(\cdot)$ can be extracted: $\gamma(\cdot)$ is first extracted using the fact that Elias encoding is prefix-free, this then gives us the length of $\sigma_T(\cdot)$, and finally $\delta_T(\cdot)$ consists of the remaining bits.
   The function $A$ extracts the values and lexicographically compares $\sigma_T(i)$ and $\sigma_T(j)$.  If $\sigma_T(i)=\sigma_T(j)$, then $A$ outputs $0$.
   Otherwise, assume for now that $\sigma_T(i)$ is lexicographically less than $\sigma_T(j)$ so that, by \obsref{lexicographic}, $i < j$.  Now $A$ computes $D(\sigma_T(j),\delta_T(j))=\sigma_T(j-1)$ as described in \obsref{predecessor-encoding}.
   If $\sigma_T(j-1)=\sigma_T(i)$ then $A$ outputs $1$, otherwise $A$ outputs $\perp$.
  In the case where $\sigma_T(i)$ is lexicographically greater than $\sigma_T(j)$, $A$ proceeds in the same manner, but reversing the roles of $i$ and $j$ and outputting $-1$ in the case where $\sigma_T(i-1)=\sigma_T(j)$.
\end{proof}

\subsection{Chunked Sets}

For non-empty finite sets $X,Y\subset \R$ and an integer $a$, we say that $X$ \emph{$a$-chunks} $Y$ if, for any $a+1$-element subset $S\subseteq Y$, there exists $x\in X$, such that $\min S\le x\le \max S$. Observe that, if $X$ $a$-chunks $Y$, then $|Y\setminus X|\le a(|X|+1)\le 2a|X|$ so $|X\cup Y|\le (2a+1)|X|$.  A sequence $V_1,\dots,V_h$ of non-empty subsets of $\R$ is \emph{$a$-chunking} if $V_y$ $a$-chunks $V_{y+1}$ and $V_{y+1}$ $a$-chunks $V_y$ for each $y\in\{0,\dots,h-1\}$.

\begin{lem}\lemlabel{fractional}
  For any finite sets $S_1,\dots,S_h\subset\R$ and any integer $a\ge 1$, there exist sets $V_1,\dots,V_{h}\subset\R$ such that
  \begin{compactenum}
    \item for each $y\in\{1,\dots,h\}$, $V_y\supseteq S_y$;
    \item $V_1,\dots,V_h$ is $a$-chunking;
    \item $\sum_{y=1}^h |V_y|\le \left((a+1)/a\right)^2\cdot\sum_{y=1}^h |S_y|$.
  \end{compactenum}
\end{lem}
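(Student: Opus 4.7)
My plan is to build $V_1,\dots,V_h$ in two passes, each responsible for one factor of $(a+1)/a$ in the size bound. In a \emph{forward pass} I process $y=1,2,\dots,h$ in order and set
\[
  V_y^{(1)} := S_y \cup E_y^{(1)},
\]
where $E_y^{(1)}$ consists of every $(a+1)$-th element, in sorted order, of $V_{y-1}^{(1)}$ (with $V_0^{(1)}:=\emptyset$). In a \emph{backward pass} I then process $y=h,h-1,\dots,1$ and set
\[
  V_y := V_y^{(1)} \cup E_y^{(2)},
\]
where $E_y^{(2)}$ consists of every $(a+1)$-th element of the already-computed $V_{y+1}$ (with $V_{h+1}:=\emptyset$). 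The containment $V_y\supseteq S_y$ is immediate.

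For the size bound I would use $|E_y^{(1)}|\le|V_{y-1}^{(1)}|/(a+1)$ and unroll the recursion $|V_y^{(1)}|\le|S_y|+|V_{y-1}^{(1)}|/(a+1)$ to get $|V_y^{(1)}|\le\sum_{j\le y}|S_j|/(a+1)^{y-j}$. Swapping the order of summation and bounding the inner geometric series by $(a+1)/a$ yields $\sum_y|V_y^{(1)}|\le\tfrac{a+1}{a}\sum_y|S_y|$. The identical argument applied backward gives $\sum_y|V_y|\le\tfrac{a+1}{a}\sum_y|V_y^{(1)}|\le\bigl(\tfrac{a+1}{a}\bigr)^2\sum_y|S_y|$.

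For the chunking property the combinatorial engine is: any $a+1$ consecutive integers contain a multiple of $a+1$, so in every $(a+1)$-subset of a sorted set $W$, the "every $(a+1)$-th element" sample of $W$ has a member in the closed range between the subset's minimum and maximum. Applied to the backward pass at index $y$ this shows $V_y$ $a$-chunks $V_{y+1}$; applied to the forward pass at index $y+1$ it shows $V_{y+1}^{(1)}$ $a$-chunks $V_y^{(1)}$.

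The subtle step, and what I expect to be the main obstacle, is to lift this latter chunking from the intermediate sets $V^{(1)}$ to the final sets $V$, so as to conclude that $V_{y+1}$ $a$-chunks $V_y$. I plan a two-case argument for an arbitrary $(a+1)$-subset $T\subseteq V_y$: if $T\cap V_{y+1}\ne\emptyset$ the intersection supplies the required witness; otherwise $T$ avoids $V_{y+1}$, but since the only additions made in the backward pass to $V_y$ beyond $V_y^{(1)}$ are the elements of $E_y^{(2)}\subseteq V_{y+1}$, one gets $T\subseteq V_y\setminus V_{y+1}\subseteq V_y^{(1)}$, and then the forward-pass chunking of $V_{y+1}^{(1)}\subseteq V_{y+1}$ against $V_y^{(1)}$ supplies a witness in $V_{y+1}$. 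A symmetric argument (using the forward pass at $y$ together with the backward-pass additions at $y-1$) handles $V_y$ $a$-chunking $V_{y-1}$. The crucial design choice enabling the lift is that each backward-pass addition is drawn from the \emph{neighbouring} final set, so any subset avoiding that neighbour is forced back into the first-pass skeleton where chunking has already been established.
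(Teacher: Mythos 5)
Your construction is identical to the paper's (a forward $(a+1)$-sampling pass followed by a backward one), the geometric-series accounting is the same, and your two-case argument for lifting the forward-pass chunking to the final sets is exactly the justification the paper compresses into the remark that copying elements of the higher-indexed set into the lower-indexed one cannot break the chunking relation. In short: same proof, with the subtle step spelled out a bit more explicitly than in the paper.
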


A proof of a much more general version of \lemref{fractional} (with larger constants) is implicit in the iterated search structure of Chazelle and Guibas \cite{chazelle.guibas:fractional1}.   For the sake of completeness, we give a proof of \lemref{fractional} below. In the remainder of the paper, we will use \lemref{fractional} with $a=1$. In that case the third item above becomes $\sum_{y=1}^h |V_y|\le 4 \cdot\sum_{y=1}^h |S_y|$.

 \begin{proof}[Proof of \lemref{fractional}]
  Set $W_0=\emptyset$ and then for each $y:=1,\ldots,h$ repeat the following procedure:  Let $W_{y-1}'$ consist of every $(a+1)$th element of the sequence obtained by sorting $W_{y-1}$, beginning with the $(a+1)$th element of this sequence.  Observe that $W_{y-1}'$ has size $\lfloor|W_{y-1}|/(a+1)\rfloor \le |W_{y-1}|/(a+1)$.  Now set $W_y:= S_y\cup W_{y-1}'$.

  After the final ($y=h$) iteration, it is clear that $W_y\supseteq S_y$ and that $W_y$ $a$-chunks $W_{y-1}$ for each $y\in\{1,\ldots,h\}$.  Furthermore, for each $y\in\{1,\ldots,h\}$,
  \[
      |W_y| = |S_y|+|W_{y-1}'| \le |S_y|+|W_{y-1}|/(a+1)
  \]
  and an easy proof by induction on $y$ shows that
  \[
      |W_y| \le \sum_{i=0}^{y-1} |S_{y-i}|/(a+1)^i \enspace .
  \]
  Then
  \[
      \sum_{y=1}^h |W_y| \le \sum_{y=1}^h\sum_{i=0}^{y-1} |S_{y-i}|/(a+1)^i
      \le \sum_{y=1}^h\sum_{i=0}^\infty |S_{y}|/(a+1)^i
      = \frac{a+1}{a} \sum_{y=1}^h |S_y| \enspace .
  \]
  Next, observe that copying any element from $W_y$ into $W_{y-1}$ preserves the fact that $W_y$ $a$-chunks $W_{y-1}$.  This allows us to use exactly the same procedure on the sequence of sets $W_{h},\ldots,W_{1}$ to obtain an $a$-chunking sequence $V_{1},\ldots,V_h$ such that $V_y\supseteq W_y\supseteq S_y$ for each $y\in\{1,\ldots,h\}$ and
  \[  \sum_{y=1}^h |V_y| \le \frac{a+1}{a}\sum_{y=1}^h |W_y| \le \left(\frac{a+1}{a}\right)^2\cdot\sum_{y=1}^h |S_y| \enspace . \qedhere
  \]
\end{proof}

\subsection{Product Structure Theorems}

The \emph{strong product} $A\boxtimes B$ of two graphs $A$ and $B$ is the graph whose vertex set is the Cartesian product $V(A\boxtimes B):=V(A)\times V(B)$ and in which two distinct vertices $(x_1,y_1)$ and $(x_2,y_2)$ are adjacent if and only if:
\begin{enumerate}
  \item  $x_1x_2 \in E(A)$ and $y_1y_2 \in E(B)$; or
  \item $x_1=x_2$ and $y_1y_2\in E(B)$; or
  \item $x_1x_2 \in E(A)$ and $y_1=y_2$.
\end{enumerate}

\begin{thm}[Dujmović \etal\ \cite{dujmovic.joret.ea:planar}]\thmlabel{product-structure}
  Every planar graph $G$ is a subgraph of a strong product $H\boxtimes P$ where $H$ is a graph of treewidth at most 8 and $P$ is a path.
\end{thm}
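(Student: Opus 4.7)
The plan is to prove this by constructing $H$ and $P$ from a BFS layering of $G$ together with a carefully chosen partition of $V(G)$ into vertical paths of the BFS tree.

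First, I would reduce to the case where $G$ is a plane triangulation: one can freely add edges, since if a triangulation containing $G$ embeds as a subgraph of some $H\boxtimes P$, then so does $G$. Fix a planar embedding, pick a root $r$, and run BFS from $r$ to obtain a spanning tree $T$ and layers $V_0=\{r\},V_1,\ldots,V_L$, where $V_i$ consists of the vertices at distance $i$ from $r$. Let $P$ be the path on $\{0,1,\ldots,L\}$ with consecutive integers joined. A key observation is that every edge of $G$ either lies within a single layer $V_i$ or joins consecutive layers $V_i$ and $V_{i+1}$.

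Next, I would construct a partition $\mathcal{Q}$ of $V(G)$ into \emph{vertical paths} of $T$ — paths in $T$ from a vertex to one of its ancestors. Such a vertical path meets every layer in at most one vertex. Given $\mathcal{Q}$, define $H$ to be the quotient graph with vertex set $\mathcal{Q}$ in which $Q_1Q_2 \in E(H)$ iff some vertex of $Q_1$ is adjacent in $G$ to some vertex of $Q_2$. Map each $v\in V_i$ to $(Q(v),i)\in V(H)\times V(P)$, where $Q(v)$ is the part containing $v$. This map is injective because parts meet each layer in at most one vertex, and for any edge $uv\in E(G)$ with $u\in V_i$, $v\in V_j$ we have $|i-j|\le 1$ and either $Q(u)=Q(v)$ or $Q(u)Q(v)\in E(H)$. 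Checking the three cases in the definition of $\boxtimes$ then confirms that $(Q(u),i)(Q(v),j)$ is an edge of $H\boxtimes P$, so $G$ is a subgraph of $H\boxtimes P$.

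The crux, and the main obstacle, is to exhibit such a partition $\mathcal{Q}$ for which $H$ has treewidth at most $8$. My plan would be to follow a Pilipczuk–Siebertz style argument: use the planarity of $G$ to find small separators consisting of the union of at most three vertical paths of $T$, recurse on each side of the separator, and assemble the recursion tree into a tree decomposition of $H$ whose bags each consist of at most a constant number of vertical paths accumulated along the recursion. Obtaining a separator of this form relies on the classical fact that in a BFS tree, a non-tree edge $uv$ together with the tree paths from $u$ and $v$ to their nearest common ancestor forms a short cycle, and that such a cycle in a planar triangulation separates the plane into two regions. The careful bookkeeping needed to bound the number of vertical paths per bag by $9$ (giving treewidth $8$) is where essentially all of the technical work lies.
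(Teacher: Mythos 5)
This theorem is not proved in the paper; it is quoted verbatim from Dujmović et al.\ \cite{dujmovic.joret.ea:planar}, so there is no in-paper proof to compare against. Your outline faithfully reconstructs the strategy used in that reference: reduce to a plane triangulation, take a BFS layering from a root, partition $V(G)$ into vertical paths of the BFS tree, map $v\in V_i$ to $(Q(v),i)$, and observe that the layering makes $P$ a path and that the quotient graph $H$ absorbs all the within-layer and between-layer adjacencies. That reduction and the verification that the map is an embedding into $H\boxtimes P$ are correct as you have written them.

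The gap is precisely where you say ``essentially all of the technical work lies'': you have not constructed the partition $\mathcal{Q}$ nor proved the treewidth-$8$ bound, and without it the statement is not established. The argument in \cite{dujmovic.joret.ea:planar} is not quite the generic Lipton--Tarjan-style ``two fundamental cycles separate'' recursion you gesture at; the crucial structure is the \emph{tripod}: a set of at most three vertical paths whose lower endpoints span a single triangle of the (near-)triangulation. The recursion maintains as invariant that every subproblem is a near-triangulation whose boundary is covered by at most three previously chosen tripods, and a Sperner-lemma-type colouring argument produces a new tripod that splits the subproblem into smaller ones each again bounded by at most three tripods. Each bag of the resulting tree decomposition of $H$ then consists of at most three tripods, i.e.\ at most $9$ vertical paths, giving treewidth at most $8$. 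None of this invariant maintenance, nor the existence argument for the next tripod, appears in your sketch, and it is exactly this that makes the constant $8$ (rather than some unbounded quantity) come out. As it stands your proposal reduces the theorem to a lemma you have not proved.
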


\thmref{product-structure} can be generalized (replacing $8$ with a larger constant) to bounded genus graphs, and more generally to apex-minor free graphs.

Dujmović, Morin, and Wood \cite{dujmovic.morin.ea:structure} gave analogous product structure theorems for some non-minor closed families of graphs including $k$-planar graphs,
powers of bounded-degree planar graphs, and $k$-nearest-neighbour graphs of points in $\R^2$. Dujmović, Esperet, Morin, Walczak, and Wood \cite{dujmovic.esperet.ea:clustered} proved that a similar product structure theorem holds for graphs of bounded degree from any (fixed) proper minor-closed class.  This is summarized in the following theorem:

\begin{thm}[\cite{dujmovic.joret.ea:planar},\cite{dujmovic.esperet.ea:clustered},\cite{dujmovic.morin.ea:structure}]\thmlabel{product-structure-all}
   Every graph $G$ in each of the following families of graphs is a subgraph of a
strong product $H\boxtimes P$ where $P$ is a path and $H$ is a graph of bounded
treewidth:

   \begin{compactitem}
   \item graphs of bounded genus and, more generally, apex-minor free graphs;
     \item bounded degree graphs that exclude a fixed graph as
a minor;
   \item $k$-planar graphs and, more generally, $(g,k)$-planar graphs. 
 \end{compactitem}
\end{thm}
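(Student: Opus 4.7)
The statement is a compilation of three product structure theorems from the cited papers \cite{dujmovic.joret.ea:planar,dujmovic.esperet.ea:clustered,dujmovic.morin.ea:structure}, so my plan is to describe how each of the three bullets would be attacked by reducing, as far as possible, to the planar case (\thmref{product-structure}).

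For \textbf{apex-minor-free graphs} (which includes bounded genus), the first step would be to handle bounded genus directly: given a graph $G$ drawn on a surface of Euler genus $g$, cut along $\mathcal{O}(g)$ simple curves to obtain a planar piece, incorporating the vertices along the cuts as ``apex-like'' vertices that are adjacent to each row. This increases the treewidth of the $H$-factor by $\mathcal{O}(g)$ but allows \thmref{product-structure} to be applied to the planar piece, yielding the result. For the full apex-minor-free case, I would invoke the Robertson--Seymour structure theorem, which decomposes $G$ into a tree of ``nearly embeddable'' torsos, each a bounded-genus graph augmented with a bounded number of apex vertices, vortex content, and attached to neighbouring torsos along small separators. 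I would proceed inductively along this tree decomposition, using a single global path factor $P$ coming from the bounded-genus piece at the root and absorbing all separators, apex vertices, and vortex boundaries into $H$.

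For \textbf{$k$-planar and $(g,k)$-planar graphs}, the plan is an uncrossing reduction. Given a drawing of $G$ with at most $k$ crossings per edge in a surface of Euler genus $g$, subdivide every edge at each of its crossings to obtain an auxiliary graph $G^{*}$, which now embeds without crossings in the same surface and is therefore $(g,0)$-planar. Apply the apex-minor-free case to get $G^{*} \subseteq H^{*}\boxtimes P$ for some $H^{*}$ of bounded treewidth. Each edge of $G$ corresponds to a path of length at most $k+1$ in $G^{*}$, and such a path remains within $\mathcal{O}(k)$ consecutive rows of $H^{*}\boxtimes P$. I would then build $H$ by taking a bounded power of $H^{*}$ (with the subdivision vertices identified back with their original endpoints), which yields $G\subseteq H\boxtimes P$ with treewidth bounded in terms of $g,k,t$.

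For \textbf{bounded-degree graphs excluding a fixed minor}, I would appeal directly to the clustered-colouring approach of \cite{dujmovic.esperet.ea:clustered}: use the Robertson--Seymour structure theorem to decompose $G$ into nearly embeddable pieces, then exploit the bounded degree of $G$ to control the contribution of apex vertices, vortex boundaries, and adhesion sets within each bag, so that the resulting $H$-factor still has bounded treewidth. The bounded-degree hypothesis is essential here because general proper minor-closed classes do not admit such a product decomposition.

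The step I expect to be the main obstacle is the apex-minor-free case: pushing \thmref{product-structure} through the full Robertson--Seymour hierarchy requires a delicate gluing argument so that a \emph{single} global path factor $P$ is consistent across all pieces of the tree decomposition while the treewidth of $H$ remains bounded independently of $|V(G)|$. Both the $(g,k)$-planar extension (via the uncrossing reduction) and the bounded-degree minor-free case (via a clustered-colouring reduction) ultimately rely on a robust enough version of this apex-minor-free product structure, so this is the technical core of the theorem.
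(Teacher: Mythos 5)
First, note that the paper does not prove \thmref{product-structure-all}; it is stated as a citation of three external results (\cite{dujmovic.joret.ea:planar}, \cite{dujmovic.esperet.ea:clustered}, \cite{dujmovic.morin.ea:structure}), so there is no internal proof in the paper to compare your sketch against. That said, your high-level outline is broadly aligned with the strategies used in those papers: a cutting/layering treatment of bounded genus, the Robertson--Seymour decomposition for the apex-minor-free case, an uncrossing reduction for $(g,k)$-planar graphs, and a bounded-degree-aware refinement for general minor-free classes. You have also correctly identified the main technical burden, namely keeping a single consistent path factor across all pieces of the tree decomposition while the treewidth of the $H$-factor stays bounded.

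There is, however, a concrete flaw in the $(g,k)$-planar step. You propose to form $H$ by ``taking a bounded power of $H^*$.'' This does not work in general: bounded powers of bounded-treewidth graphs can have unbounded treewidth. For instance, if $H^*$ is a star with $n$ leaves (treewidth $1$), then $(H^*)^2 = K_{n+1}$ has treewidth $n$. So the assertion that $(H^*)^{O(k)}$ has treewidth bounded in terms of $g,k,t$ is false as stated. The proof in \cite{dujmovic.morin.ea:structure} instead establishes a ``shortcut'' lemma for layered $H$-partitions which crucially exploits the fact that each subdivision vertex lies on at most two shortcut paths (the two edges crossing there), and builds a new bounded-treewidth $H$-partition by a bounded expansion of bags (equivalently, by inserting a $\boxtimes K_\ell$ factor with $\ell$ depending only on $k$ and the treewidth), not by raising $H^*$ to a power. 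A similar caution applies to your bounded-genus step: the vertices along $O(g)$ cutting curves can number $\Theta(|V(G)|)$, so they cannot all be made global apices adjacent to every row; the cited argument absorbs only a bounded number of them per layer into the $H$-factor. Your plan is in the right spirit but needs these replacements before it is sound.
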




\section{Bulk Trees}
\seclabel{bulk-trees}

Our labelling scheme for a subgraph $G$ of $H \boxtimes P$ uses labels that depend in part on the rows ($H$-coordinates) of $G$, where each row corresponds to one vertex of $P$: Say $P$ consists of vertices $1, 2, \dots, h$ in this order, then the \emph{$i$-th row} of $G$ is the subgraph $H_i$ of $G$ induced by the vertex set $\{(v, i) \in V(G)\}$.
A naive approach to create labels for each $H_i$ is to use a labelling scheme for bounded treewidth graphs; roughly, this entails building a specific binary search tree $T_i$ and mapping each vertex $v$ of $H_i$ onto a node $x$ of $T_i$ that we call the \emph{position} of $v$ in $T_i$.
The label of $(v, i)$ encodes the position of $v$ in $T_i$ plus some small extra information (see~\secref{hxp}).
This way, we can determine if two vertices $(v,i)$ and $(w,i)$ in the same row are adjacent.

The key problems that we face here though are queries of the type $(v,i)$ and $(w,i+1)$: We would like to determine adjacency on the $H$-coordinate using $T_i$ or $T_{i+1}$.
We could extend the node set of $T_{i+1}$ so that it represents all vertices from $H_{i}$. This way we know that both $v$ and $w$ are represented in $T_{i+1}$.
However, we still have a major issue: the label of $(v,i)$ describes the position of $v$ in $T_i$ but not in $T_{i+1}$.
In this setup, in order to determine if $v$ and $w$ are adjacent in $H$ we need to know their respective positions in the same binary search tree.
However, there is in principle no relation between the position of $v$ in $T_i$ and its position in $T_{i+1}$.

To circumvent this difficulty, we build the binary search trees $T_1,\dots,T_h$ one by one, starting with a balanced binary search tree, in such a way that $T_{i+1}$ is obtained from $T_i$ by performing carefully structured changes. By storing some small extra information related to these changes in the label of $(v,i)$, this will allow us to obtain the position of $v$ in $T_{i+1}$.
Finally, we also need to guarantee that the binary search trees in our sequence are balanced so that the labels are of length $\log|T_i|$ plus a lower-order term.

In this section, we introduce three operations on a binary search tree that will allow us to carry out this plan.
These operations are called \emph{bulk insertion}, \emph{bulk deletion}, and \emph{rebalancing}.
Starting from a perfectly balanced binary search tree $T_1$, each tree $T_i$ in our sequence $T_1, \dots, T_h$ will be obtained from $T_{i-1}$ by applying these three operations.


\subsection{Bulk Insertion}

The bulk insertion operation, $\textsc{BulkInsert}(I)$,
in which a finite set $I\subset  \R\setminus V(T)$ of new values are inserted into a binary search tree $T$, is implemented using the standard insertion algorithm for binary search trees.  For each $x\in I$, $P_T(x)$ ends at an external node $x'$ of $T^+$ whose parent $y$ is a node of $T$. We simply make $x$ a child of $y$.  Doing this for each $x\in I$ (in any order) results in a new tree $T'$ with $V(T')=V(T)\cup I$.
%
%

\begin{lem}\lemlabel{insertion-depth}
  Let $T$ be any binary search tree and let $I$ be a finite set of values from $\R\setminus V(T)$ such that
  $V(T)$ $1$-chunks $I$.
  Apply $\textsc{BulkInsert}(I)$ to $T$ to obtain $T'$.
  Then $T'$ is a supergraph of $T$ and $h(T')\le h(T)+1$.
\end{lem}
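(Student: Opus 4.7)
The plan is to verify two things: first, that bulk insertion is well-defined in the sense that no two distinct elements of $I$ collide (get placed at the same external position of $T^+$), and second, that the resulting tree satisfies the claimed height bound. The supergraph claim is immediate because the algorithm only adds new nodes as children of existing ones; it never removes or reroutes existing edges.

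For well-definedness, I would show that distinct $x, x' \in I$ end at distinct external nodes of $T^+$. Each external node corresponds to a maximal open interval $(a,b)$, where $a,b$ are consecutive values in $V(T)\cup\{-\infty,+\infty\}$, such that every real number in $(a,b)$ has the same search path in $T^+$, terminating at that external node. If $x < x'$ in $I$ shared an external node, we would have $V(T)\cap(x,x')=\emptyset$. But $V(T)$ $1$-chunks $I$, so there is some $z\in V(T)$ with $x \le z \le x'$; since $V(T)\cap I = \emptyset$, we have $z\ne x, x'$, hence $x<z<x'$, contradicting $V(T)\cap(x,x')=\emptyset$. Therefore every $x \in I$ lands at a distinct external node and is installed as a new child of a uniquely determined node $y\in V(T)$, namely the parent in $T^+$ of that external node.

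For the height bound, attaching new leaves does not alter the depths of pre-existing nodes, so every $z \in V(T)$ keeps $d_{T'}(z) = d_T(z) \le h(T)$. For $x \in I$, its parent $y$ in $T'$ is a node of $T$, so $d_T(y) \le h(T)$, giving $d_{T'}(x) = d_T(y)+1 \le h(T)+1$. Taking the maximum over all nodes of $T'$ yields $h(T') \le h(T)+1$.

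The main obstacle, though modest, is the first step: one must convert the non-strict $\le$ in the definition of $1$-chunking into the strict inequality $x < z < x'$ by invoking $V(T)\cap I = \emptyset$, and one must justify the external-node-interval characterization of search paths that rules out collisions. Once that is nailed down, the depth bound is automatic since each newly inserted element sits exactly one level below its parent in $T$.
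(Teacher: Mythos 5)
Your proof is correct and follows the same approach as the paper's: the core observation — that $1$-chunking guarantees each $x \in I$ lands at a distinct external node of $T^+$ and hence has its parent in $T$ — is stated in one sentence in the paper, and you simply spell out the argument (tightening the non-strict inequalities to strict ones via $V(T)\cap I=\emptyset$). The depth bound then follows identically in both.
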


\begin{proof}
  That $T'$ is a supergraph of $T$ is obvious.  Next note that the 1-chunking property ensures that, for any $x\in I$, the parent of $x$ in $T'$ is also in $T$.  Thus any root-to-leaf path in $T'$ consists of a root-to-leaf path in $T$ followed by at most one node in $I$.  Therefore $h(T')\le h(T)+1$.
\end{proof}

\begin{lem}\lemlabel{insertion-size}
  Let $T$ be any binary search tree and let $I$ be a set of values from $\R\setminus V(T)$ such that
  $V(T)$ $1$-chunks $I$.
  Apply $\textsc{BulkInsert}(I)$ to $T$ to obtain $T'$.
  Let $x$ be any node of $T$ and let $T_x$ and $T_x'$ be the subtrees of $T$ and $T'$, respectively, rooted at $x$.
  Then $|T_x|\le |T_x'|< 4|T_x|$.
\end{lem}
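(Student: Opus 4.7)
The lower bound $|T_x|\le|T_x'|$ is immediate from \lemref{insertion-depth}: since $T'\supseteq T$ and the ancestor/descendant relationship for nodes of $T$ is preserved (no old edges are removed), every node of $T_x$ is still a descendant of $x$ in $T'$, hence belongs to $T_x'$. So the real content is the strict upper bound $|T_x'|<4|T_x|$.

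The plan for the upper bound is to count the newly inserted nodes that land in $T_x'$ by tracking where in the full binary tree $T^+$ their search paths end. By \lemref{insertion-depth} (and the proof of it), the $1$-chunking hypothesis guarantees that when \textsc{BulkInsert} is applied, every $y\in I$ becomes a child of some node of $T$; so an inserted node $y$ belongs to $T_x'$ exactly when the search path $P_T(y)$ ends at an external node of $T^+$ that is a descendant of $x$ in $T^+$. These external nodes are precisely the external nodes of the full binary tree $T_x^+$ obtained by completing $T_x$.

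Two standard facts then finish the counting. First, any full binary tree on $k$ internal nodes has exactly $k+1$ leaves, so $T_x^+$ has exactly $|T_x|+1$ external nodes. Second, the $1$-chunking property $V(T)\ 1$-chunks $I$ implies that no two distinct elements of $I$ can have their search paths end at the same external node of $T^+$: such a pair would form a $2$-element subset of $I$ lying strictly between two consecutive elements of $V(T)$ (or on one side of all of $V(T)$), contradicting the existence of a separating element of $V(T)$. Hence at most $|T_x|+1$ elements of $I$ are inserted into $T_x'$, and we obtain
\[
  |T_x'|\ \le\ |T_x| + (|T_x|+1)\ =\ 2|T_x|+1.
\]
Since $|T_x|\ge 1$, this gives $|T_x'|\le 2|T_x|+1\le 3|T_x|<4|T_x|$, as required.

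There is no serious obstacle; the only point to be careful about is the sequentiality of insertions in \textsc{BulkInsert}, namely that an inserted node might a priori end up as a child of another inserted node and thereby deepen or widen $T_x'$ in an uncontrolled way. The $1$-chunking hypothesis, already exploited in \lemref{insertion-depth}, rules this out, which is exactly what lets us equate "inserted descendants of $x$ in $T'$" with "external nodes of $T_x^+$ that receive some element of $I$" and invoke the clean leaf-count identity above.
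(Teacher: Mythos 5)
Your proof is correct and takes essentially the same approach as the paper: bounding the number of inserted elements landing in $T_x'$ by $|T_x|+1$ and then applying the $|T_x|\ge 1$ slack. The paper phrases the counting by noting that $V(T_x)$ $1$-chunks $I_x:=V(T_x')\setminus V(T_x)$ and invoking the generic bound $|I_x|\le|T_x|+1$, whereas you count external nodes of $T_x^+$ directly, but these are the same observation in different notation.
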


\begin{proof}
We clearly have  $|T_x|\le |T_x'|$.  By definition, $V(T)$ 1-chunks $I:=V(T')\setminus V(T)$.  This implies that $V(T_x)$ 1-chunks $I_x:=V(T_x')\setminus V(T_x)$.  Therefore $|I_x|\le |T_x|+1$, so $|T_x'|=|T_x|+|I_x|\le 2|T_x|+1 \le 3|T_x| < 4|T_x|$.
\end{proof}

In \lemref{insertion-size} and in \lemref{deletion-size}, below, we use the constant $4$ rather than $3$ because this later simplifies calculations involving binary logarithms.

\subsection{Bulk Deletion}

The bulk deletion operation, $\textsc{BulkDelete(D)}$, of a subset $D$ of nodes of a binary search tree $T$ is implemented as a series of $|D|$ individual deletions, performed in any order. For each $x\in D$, the deletion of $x$ is implemented by running the following recursive algorithm:  If $x$ is a leaf, then simply remove $x$ from $T$.  Otherwise, $x$ has at least one child.  If $x$ has a left child, then recursively delete the largest value $x'$ in the subtree of $T$ rooted at the left child of $x$ and then replace $x$ with $x'$.  Otherwise $x$ has a right child, so recursively delete the smallest value $x'$ in the subtree of $T$ rooted at the right child of $x$ and then replace $x$ with $x'$.

\begin{lem}\lemlabel{deletion-signature}
  Let $T$ be any binary search tree and let $D$ be a set of values from $V(T)$.
  Apply $\textsc{BulkDelete}(D)$ to $T$ to obtain a new tree $T'$.
  Then, for any node $x$ in $T'$, $\sigma_{T'}(x)$ is a prefix of $\sigma_T(x)$.
  In particular, $h(T')\leq h(T)$.
\end{lem}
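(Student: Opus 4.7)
The plan is to first handle a single deletion and then iterate. For a single deletion of a value $v\in V(T)$, let $T^\ast$ denote the resulting tree; I would prove by induction on $|V(T)|$ that $\sigma_{T^\ast}(y)$ is a prefix of $\sigma_T(y)$ for every $y\in V(T^\ast)$. If $v$ is a leaf, $T^\ast$ is just $T$ with the leaf removed, so all other signatures are unchanged and the claim is trivial.

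For the inductive step, suppose $v$ has a left child (the right-only case being symmetric). Let $v'$ be the maximum value in the left subtree of $v$, so $v'$ has no right child in $T$. The algorithm first recursively deletes $v'$, producing an intermediate tree $T''$ with $V(T'')=V(T)\setminus\{v'\}$, and then overwrites the label $v$ (still present in $T''$) with $v'$ to obtain $T^\ast$. By the inductive hypothesis applied to the recursive call, $\sigma_{T''}(y)$ is a prefix of $\sigma_T(y)$ for every $y\in V(T'')$. Moreover, every value moved or removed during this recursion lies in the left subtree of $v$ in $T$, so the path from the root to $v$ is untouched and in fact $\sigma_{T''}(v)=\sigma_T(v)$. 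The relabelling step leaves $\sigma_{T^\ast}(y)=\sigma_{T''}(y)$ for every $y\notin\{v,v'\}$ and sets $\sigma_{T^\ast}(v')=\sigma_{T''}(v)=\sigma_T(v)$. The crucial observation is that $v'$ is a descendant of $v$ in $T$, so $\sigma_T(v)$ is already a prefix of $\sigma_T(v')$; hence $\sigma_{T^\ast}(v')$ is also a prefix of $\sigma_T(v')$, and the invariant is preserved.

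To prove the lemma itself, I would apply the single-deletion claim once for each value in $D$, in the order the bulk deletion processes them. Since the ``is a prefix of'' relation is transitive, each surviving $x$ ends up with $\sigma_{T'}(x)$ a prefix of $\sigma_T(x)$. The height bound $h(T')\le h(T)$ is then immediate: for every $x\in V(T')$, $d_{T'}(x)=|\sigma_{T'}(x)|\le|\sigma_T(x)|=d_T(x)$. The main obstacle is identifying the right invariant for a single deletion, namely the observation that the promoted value $v'$ lands in the position previously occupied by its ancestor $v$, so that $v$'s signature is automatically a prefix of $v'$'s original signature; once this is noticed, the recursion inside a single deletion and the iteration over $D$ are routine.
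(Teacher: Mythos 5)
Your proof is correct and follows essentially the same route as the paper: both arguments reduce the analysis to the two elementary steps of the deletion routine (removing a leaf, and overwriting an ancestor's value with that of one of its descendants), and both turn on the same key observation that when $v'$ is promoted to the slot formerly occupied by its ancestor $v$, the new signature $\sigma_T(v)$ is a prefix of the old signature $\sigma_T(v')$. The paper phrases this directly as a decomposition of $\textsc{BulkDelete}$ into primitive operations (i) and (ii), whereas you package the same content as an induction on $|V(T)|$ for a single deletion followed by iteration over $D$; this is a valid and slightly more explicit formalization of the identical idea, not a different proof.
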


\begin{proof}
  This follows immediately from the fact the only operations performed during a bulk deletion are (i)~deletion of leaves and (ii)~using a value $x'$ to replace the value of one of its $T$-ancestors $x$.  The deletion of a leaf has no effect on $\sigma_{T'}(x)$ for any node $x$ in $T'$.  For any node $z$ in $T'$ other than $x'$, (ii) has no effect on $\sigma_T(z)$.  For the node $x'$, (ii) has the effect of replacing $\sigma_T(x')$ by its length-$d_T(x)$ prefix.
\end{proof}

\begin{lem}\lemlabel{deletion-size}
  Let $T$ be any binary search tree and let $D$ be a strict subset of $V(T)$ such that $V(T)\setminus D$ $1$-chunks $D$.
  Apply $\textsc{BulkDelete}(D)$ to $T$ to obtain a new tree $T'$.
  Then $|T|/4 \le |T'|\le |T|$.
\end{lem}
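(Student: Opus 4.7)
The upper bound $|T'|\le |T|$ is immediate since bulk deletion only removes nodes, so $V(T')=V(T)\setminus D \subseteq V(T)$. The real content is the lower bound $|T|/4\le |T'|$, which I would prove by bounding $|D|$ in terms of $|V(T')|$ using only the 1-chunking hypothesis, and not using anything about the tree structure of $T$ at all.

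The key observation is the counting statement already made in the preliminaries: if $X$ $a$-chunks $Y$, then $|Y\setminus X|\le a(|X|+1)$. The plan is to apply this with $X:=V(T')=V(T)\setminus D$, $Y:=D$, and $a:=1$. Since $D$ and $V(T')$ are disjoint, $D\setminus V(T') = D$, so the inequality becomes $|D|\le |V(T')|+1$. Because $D$ is a \emph{strict} subset of $V(T)$, we have $|V(T')|\ge 1$, and therefore $|V(T')|+1\le 2|V(T')|$, giving $|D|\le 2|V(T')|$.

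Putting this together,
\[
|T| \;=\; |V(T')|+|D| \;\le\; |V(T')|+2|V(T')| \;=\; 3|V(T')| \;\le\; 4|T'|,
\]
so $|T'|\ge |T|/4$, as required. (As the remark after \lemref{insertion-size} notes, the constant $4$ is deliberately weaker than the $3$ one could prove, because later arguments prefer the convenience of a power of two inside a logarithm.)

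I do not expect any genuine obstacle here: the bound depends only on the 1-chunking relation between $V(T')$ and $D$, and does not require us to track how the recursive deletion procedure rearranges $T$. In particular, \lemref{deletion-signature} is not needed for this size bound, and no induction on the order in which elements of $D$ are removed is necessary—the statement follows purely from the combinatorial chunking inequality applied to the vertex sets.
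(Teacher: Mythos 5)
Your proof is correct and takes essentially the same approach as the paper: both arguments apply the $1$-chunking inequality to bound $|D|\le |V(T')|+1$, use the strictness of $D\subsetneq V(T)$ to upgrade this to $|D|\le 2|V(T')|$, and conclude $|T|\le 3|T'|\le 4|T'|$.
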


\begin{proof}
 We clearly have $|T'|\le |T|$.  Since $V(T)\setminus D$ $1$-chunks $D$, we have $|D|\le |V(T)|-|D| + 1 \le 2(|V(T)|-|D|)$ (since $D\subset V(T)$, $|V(T)|-|D|\ge 1$), so $|D|\le (2/3)|V(T)|$. Thus $|T'| \ge |T| - (2/3)|T| = |T|/3 > |T|/4$.
\end{proof}


\subsection{Rebalancing}

The rebalancing operation on a binary search tree $T$ uses several subroutines that we now discuss, beginning with the most fundamental one:  $\textsc{Split}(x)$.

\subsubsection{$\textsc{Split}(x)$}

The argument of $\textsc{Split}(x)$ is a node $x$ in $T$ and the end result of the subroutine is to split $T$ into two binary search trees $T_{<x}$ and $T_{>x}$ where $V(T_{<x})=\{z\in V(T): z<x\}$ and $V(T_{>x})=\{z\in V(T): z>x\}$. Refer to \figref{split}.  Let $P_T(x_r)=x_0,\dots,x_r$ be the path in $T$ from the root $x_0$ of $T$ to $x=x_r$.  Partition $x_0,\dots,x_{r-1}$ into two subsequences $a:=a_1,\dots,a_s$ and $b:=b_1,\dots,b_t$ where the elements of $a$ are less than $x$ and the elements of $b$ are greater than $x$.
Note that the properties of a binary search tree guarantee that
\[
a_1 < \cdots < a_s < x < b_t < \cdots < b_1.
\]
Make a binary search tree $T_0$ that has $x$ as root, the path $a_1,\dots,a_s$ as the left subtree of $x$ and the path $b_1,\dots,b_t$ as the right subtree of $x$.  Note that $a_{i+1}$ is the right child of $a_i$ for each $i\in\{1,\dots,s-1\}$ and $b_{i+1}$ is the left child of $b_i$ for each $i\in\{1,\dots,t-1\}$.

\begin{figure}
  \begin{center}
    \includegraphics{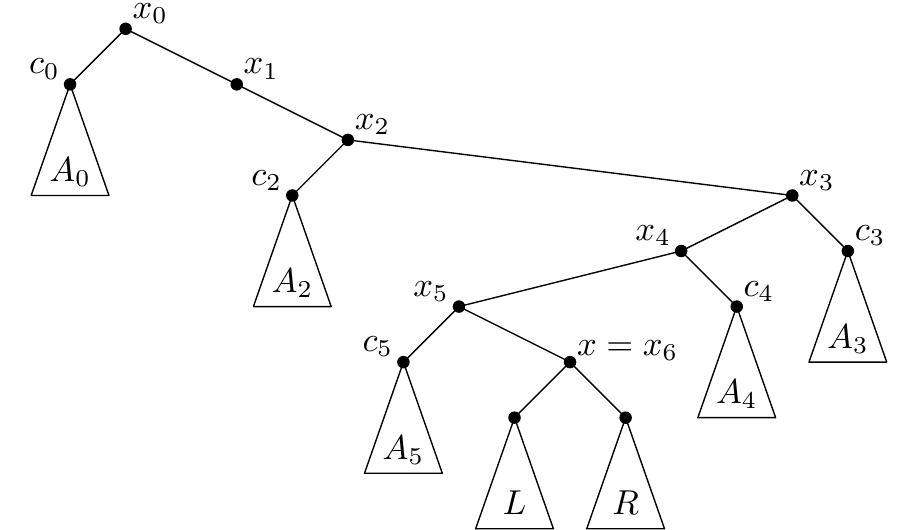} \\[1ex]
    $\Downarrow$ \\[1ex]
    \includegraphics{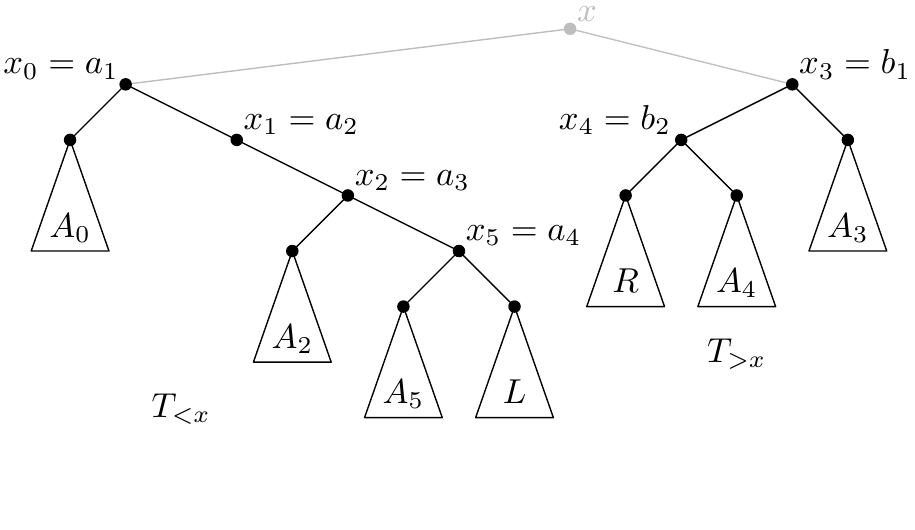}
  \end{center}
  \caption{The operation of $\textsc{Split}(x)$.}
  \figlabel{split}
\end{figure}

Next, consider the forest $F:=T-\{x_0,\dots,x_r\}$. This forest consists of $r+2$ (possibly empty) trees $A_1,\dots,A_{r-1},L,R$ where $L$ and $R$ are the subtrees of $T$ rooted at the left and right child of $x$ in $T_x$ and, for each $i\in\{1,\dots,r-1\}$, $A_i$ is the subtree of $T$ rooted at the child $c_i\neq x_{i+1}$ of $x_i$ (if such a child exists, otherwise $A_i$ is empty).  Make a binary search tree $T_x$ by replacing each of the $r+2$ external nodes of $T_0^+$ with the corresponding tree in $F$.  Finally, let $T_{<x}$ be the subtree of $T_x$ rooted at the left child of $x$ and let $T_{>x}$ be the subtree of $T_x$ rooted at the right child of $x$ in $T_x$.

\begin{lem}\lemlabel{split-height}
  Let $T$ be any binary search tree, let $x$ be any node of $T$, and apply $\textsc{Split}(x)$ to obtain $T_{<x}$ and $T_{>x}$.
  Then $h(T_{<x})\le h(T)$ and $h(T_{>x})\le h(T)$.
\end{lem}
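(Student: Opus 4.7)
The plan is to prove the stronger claim that for every node $z\in V(T_{<x})$, $d_{T_{<x}}(z)\le d_T(z)$; this immediately gives $h(T_{<x})\le h(T)$, and the argument for $T_{>x}$ is symmetric.

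First I would pin down exactly where each node of $T_{<x}$ sits after $\textsc{Split}(x)$ has been applied. The tree $T_{<x}$ is obtained from the right-going spine $a_1,\dots,a_s$ (the left subtree of $x$ in $T_0$) by reattaching to its external nodes the appropriate trees from $F$. Writing $j_i$ for the index with $x_{j_i}=a_i$, the subtree reattached at the left child of $a_i$ is $A_{j_i}$, and the subtree reattached at the right child of $a_s$ is $L$; these are exactly the subtrees that hung off the spine on the less-than-$x$ side in $T$.

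Then the depth comparison becomes a short case analysis. For the spine node $a_i$ we have $d_{T_{<x}}(a_i)=i-1$ while $d_T(a_i)=j_i$; since $a_1,\dots,a_i$ are among the ancestors $x_0,\dots,x_{j_i}$ of $x$, we get $j_i\ge i-1$. For a node $z$ at depth $d$ inside $A_{j_i}$, $d_{T_{<x}}(z)=i+d$ versus $d_T(z)=j_i+1+d$, and the same inequality $j_i+1\ge i$ closes the case. For a node $z$ at depth $d$ inside $L$, $d_{T_{<x}}(z)=s+1+d$ versus $d_T(z)=r+1+d$, and $s\le r$ finishes off the last case.

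Intuitively, the lemma just says that $T_{<x}$ is built from $T$ by deleting, from every root-to-node path, the ancestors of $x$ that exceed $x$ (the $b_j$'s), which can only shorten those paths. The only fiddly part I expect in writing this up is confirming the reattachment description in the first step (in particular checking the edge cases $s=0$ or $A_{j_i}$ empty, which are handled vacuously); once that is nailed down, the inequalities in the depth comparison are immediate.
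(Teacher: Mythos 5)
Your argument is correct and takes essentially the same route as the paper: both reduce to the pointwise bound $d_{T_{<x}}(z)\le d_T(z)$ for every $z\in V(T_{<x})$, and in fact the intuition in your closing paragraph — that $P_{T_{<x}}(z)$ is obtained from $P_T(z)$ by deleting the $b_j$'s — is precisely the paper's one-line proof, namely that $V(P_{T_{<x}}(z))\subseteq V(P_T(z))$. Your explicit case analysis works too, but note a harmless off-by-one in the $L$ case: $a_s$ has depth $s-1$ in $T_{<x}$, so a node at depth $d$ in $L$ has $d_{T_{<x}}(z)=s+d$, not $s+1+d$; the conclusion still follows since $s\le r$.
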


\begin{proof}
  Note that for each node $z$ of $T_{<x}$, we have $V(P_{T_{<x}}(z))\subseteq V(P_T(z))$,
  so $d_{T_{<x}}(z)\le d_T(z)$.
  Therefore $h(T_{<x})\le h(T)$. The argument for $T_{>x}$ is symmetric.
\end{proof}

The following observation shows that there is a simple relationship between a node's signature in $T$ before calling $\textsc{Split}(x)$ and its signature in $T_{<x}$ or $T_{>x}$.

\begin{obs}\obslabel{split-signature}
  Let $T$, $x$, $x_0,\dots,x_r$, $A_1,\dots,A_{r-1},L,R$, $a_1,\dots,a_s$, and $b_1,\dots,b_t$ be defined as above. Then
  \begin{compactenum}
    \item for each $j\in\{1,\dots,s\}$ where $a_j=x_i$
    \begin{compactenum}
      \item $\sigma_{T_{<x}}(a_j)=1^{j-1}$, and
      \item $\sigma_{T_{<x}}(z) = 1^{j-1},0,\sigma_{A_i}(z)$ for each $z\in V(A_i)$;
    \end{compactenum}
    \item for each $j\in\{1,\dots,t\}$ where $b_j=x_i$
    \begin{compactenum}
      \item $\sigma_{T_{>x}}(b_j)=0^{j-1}$, and
      \item $\sigma_{T_{>x}}(z) = 0^{j-1},1,\sigma_{A_i}(z)$ for each $z\in V(A_i)$;
    \end{compactenum}
    \item $\sigma_{T_{<x}}(z)=1^s,\sigma_L(z)$ for each $z\in V(L)$; and
    \item $\sigma_{T_{>x}}(z)=0^t,\sigma_R(z)$ for each $z\in V(R)$.
  \end{compactenum}
  In particular, for any $z\in V(T)\setminus\{x\}$, $\sigma_{T_{<x}}(z)$ or $\sigma_{T_{>x}}(z)$ can be obtained from $\sigma_T(z)$ by deleting a prefix and replacing it with one of the $4\cdot h(T)$ strings in $\Pi:=\bigcup_{j=0}^{h(T)-1}\{0^j,0^j1,1^j,1^j0\}$.
\end{obs}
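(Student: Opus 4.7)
The plan is to unpack the construction of $\textsc{Split}(x)$ to read off every signature directly from the shape of $T_x$, and then to handle the closing ``$\Pi$-replacement'' claim by case analysis on which tree of $F$ contains $z$.

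First, I would record the precise shape of $T_x$. By construction, $x$ is the root; its left subtree is the path $a_1, a_2, \dots, a_s$ rooted at $a_1$ in which $a_{j+1}$ is the right child of $a_j$; and its right subtree is the path $b_1, b_2, \dots, b_t$ rooted at $b_1$ in which $b_{j+1}$ is the left child of $b_j$. The $r+2$ external nodes of $T_0^+$ are then filled by the $r+2$ trees of $F$ according to which of the $r+2$ open intervals in the sorted list $a_1 < \dots < a_s < x < b_t < \dots < b_1$ contains the tree's values: explicitly, if $a_j = x_i$ then $A_i$ becomes the left subtree of $a_j$; if $b_j = x_i$ then $A_i$ becomes the right subtree of $b_j$; $L$ fills the missing right child of $a_s$ (or is the left subtree of $x$ when $s=0$); and symmetrically $R$ fills the missing left child of $b_t$ (or is the right subtree of $x$ when $t=0$).

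From this explicit picture the signatures in $T_x$ read off immediately: $\sigma_{T_x}(a_j) = 0\,1^{j-1}$, $\sigma_{T_x}(b_j) = 1\,0^{j-1}$, $\sigma_{T_x}(z) = 0\,1^{j-1}\,0\,\sigma_{A_i}(z)$ when $a_j = x_i$ and $z \in V(A_i)$, $\sigma_{T_x}(z) = 1\,0^{j-1}\,1\,\sigma_{A_i}(z)$ when $b_j = x_i$ and $z \in V(A_i)$, $\sigma_{T_x}(z) = 0\,1^s\,\sigma_L(z)$ when $z \in V(L)$, and $\sigma_{T_x}(z) = 1\,0^t\,\sigma_R(z)$ when $z \in V(R)$. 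Since $T_{<x}$ (resp.\ $T_{>x}$) is the subtree of $T_x$ rooted at its left (resp.\ right) child, the signatures in $T_{<x}$ and $T_{>x}$ are obtained from the above by stripping the leading $0$ (resp.\ $1$), which yields formulas (1)--(4) verbatim.

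For the closing claim, let $z \in V(T) \setminus \{x\}$. Then $z$ lies in exactly one of $\{a_1, \dots, a_s\}$, $\{b_1, \dots, b_t\}$, some $V(A_i)$, $V(L)$, or $V(R)$. In each case, $\sigma_T(z)$ starts with the prefix $\sigma_T(x_i)$ (taking $x_i = x$ for $V(L)$ and $V(R)$), possibly followed by a single bit indicating descent from $x_i$ into $A_i$, $L$, or $R$ inside $T$, and then by $\sigma_{A_i}(z)$, $\sigma_L(z)$, or $\sigma_R(z)$. By the formulas above, the corresponding signature in $T_{<x}$ or $T_{>x}$ is obtained by replacing this entire prefix with one of $1^{j-1}$, $1^{j-1}\,0$, $0^{j-1}$, $0^{j-1}\,1$, $1^s$, or $0^t$. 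Each such string lies in $\Pi$: the indices satisfy $j-1 \le \max(s,t) - 1 \le r - 1 \le h(T) - 1$, while $s, t \le h(T) - 1$ whenever $V(L)$ or $V(R)$ is non-empty (since any node in $L$ or $R$ has depth at least $r+1$ in $T$, forcing $h(T) \ge s+1$ or $h(T) \ge t+1$). The chief obstacle is purely notational bookkeeping --- keeping straight the interleaving of the $a$-path and $b$-path inside $x_0, \dots, x_r$, and tracking which tree $A_i$ attaches to which $a_j$ or $b_j$.
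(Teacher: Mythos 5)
Your proof is correct and is essentially the argument the paper intends but leaves unstated (the statement is an unproved Observation): you correctly read the tree $T_x$ produced by $\textsc{Split}(x)$ as $x$ at the root with the right-going $a$-path, left-going $b$-path, and the $r+2$ trees of $F$ slotted into the external positions of $T_0^+$, and the signature formulas (1)–(4) then follow by stripping the first bit. The closing index-bound verification — $j-1\le\max(s,t)-1\le r-1\le h(T)-1$ for the path-node and $A_i$ cases, and $s,t\le h(T)-1$ via the depth-$\ge r+1$ argument when $L$ or $R$ is non-empty — is exactly the bookkeeping needed to place each replacement string in $\Pi$.
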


\subsubsection{$\textsc{MultiSplit}(x_1,\dots,x_c)$}

From the $\textsc{Split}(x)$ operation we build the $\textsc{MultiSplit}(x_1,\dots,x_c)$ operation that takes as input a sequence of nodes $x_1<\cdots<x_c$ of $T$.  For convenience, define $x_0=-\infty$ and $x_{c+1}=\infty$.  The effect of $\textsc{MultiSplit}(x_1,\dots,x_c)$ is to split $T$ into a sequence of binary search trees $T_0,\dots,T_{c}$ where, for each $i\in\{0,\dots,c\}$, $V(T_i)=\{z\in V(T): x_i< z<x_{i+1}\}$.

The implementation of $\textsc{MultiSplit}(x_1,\dots,x_c)$ is straightforward divide-and-conquer:  If $c=0$, then there is nothing to do.  Otherwise, call $\textsc{Split}(x_{\lceil c/2\rceil})$ to obtain $T_{<x_{\lceil c/2\rceil}}$ and $T_{>x_{\lceil c/2\rceil}}$.  Next, apply $\textsc{MultiSplit}(x_1,\dots,x_{\lceil c/2\rceil-1})$ to $T_{<x_{\lceil c/2\rceil}}$ to obtain $T_0,\dots,T_{\lceil c/2\rceil-1}$ and then apply $\textsc{MultiSplit}(x_{\lceil c/2\rceil+1},\dots,x_c)$ to $T_{>x_{\lceil c/2\rceil}}$ to obtain $T_{\lceil c/2\rceil},\dots,T_c$.

The following lemma is immediate from \lemref{split-height}.
\begin{lem}\lemlabel{multisplit-height}
  Let $T$ be any binary search tree and apply $\textsc{MultiSplit}(x_1,\dots,x_c)$ to $T$ to obtain $T_0,\dots,T_c$.  Then $h(T_i)\le h(T)$ for each $i\in\{0,\dots,c\}$.
\end{lem}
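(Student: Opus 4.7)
The plan is to prove the statement by induction on $c$, the number of split points, exploiting the recursive implementation of $\textsc{MultiSplit}$ together with \lemref{split-height}.

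For the base case $c=0$, the operation does nothing, so $T_0 = T$ and trivially $h(T_0) = h(T)$. For the inductive step, I would unpack one level of the recursion: the operation first calls $\textsc{Split}(x_{\lceil c/2\rceil})$, producing two trees $T_{<x_{\lceil c/2\rceil}}$ and $T_{>x_{\lceil c/2\rceil}}$. By \lemref{split-height}, both of these trees satisfy $h(T_{<x_{\lceil c/2\rceil}}) \le h(T)$ and $h(T_{>x_{\lceil c/2\rceil}}) \le h(T)$.

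Next, I would apply the induction hypothesis to the two recursive calls $\textsc{MultiSplit}(x_1,\dots,x_{\lceil c/2\rceil-1})$ on $T_{<x_{\lceil c/2\rceil}}$ and $\textsc{MultiSplit}(x_{\lceil c/2\rceil+1},\dots,x_c)$ on $T_{>x_{\lceil c/2\rceil}}$, each of which involves strictly fewer than $c$ split points. The induction hypothesis yields that the resulting trees $T_0, \dots, T_{\lceil c/2\rceil-1}$ all have height at most $h(T_{<x_{\lceil c/2\rceil}}) \le h(T)$, and the trees $T_{\lceil c/2\rceil}, \dots, T_c$ all have height at most $h(T_{>x_{\lceil c/2\rceil}}) \le h(T)$. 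Combining these bounds gives $h(T_i) \le h(T)$ for every $i \in \{0,\dots,c\}$, as required.

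There is really no obstacle here: the lemma is a straightforward composition of the single-split height bound with the divide-and-conquer recursion, which is why the text flags it as immediate from \lemref{split-height}. The only thing worth being careful about is that the induction should be on $c$ (or, equivalently, on the depth of the recursion), so that both recursive invocations are covered by the hypothesis.
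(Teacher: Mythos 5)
Your proof is correct and matches the paper's intent: the paper simply declares the lemma ``immediate from \lemref{split-height}'' and omits the argument, and your induction on $c$ through the divide-and-conquer recursion is exactly the formalization of that remark.
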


\subsubsection{$\textsc{Balance}(x,k)$}

The $\textsc{Balance}(x,k)$ operation operates on the subtree $T_x$ of $T$ rooted at some node $x$ in $T$.
The goal of this operation is to balance the size of all the subtrees rooted at nodes of depth $d_T(x)+k+1$ and contained in $T_x$.
Refer to \figref{balance-x}.

\begin{figure}
    \begin{center}
      \includegraphics{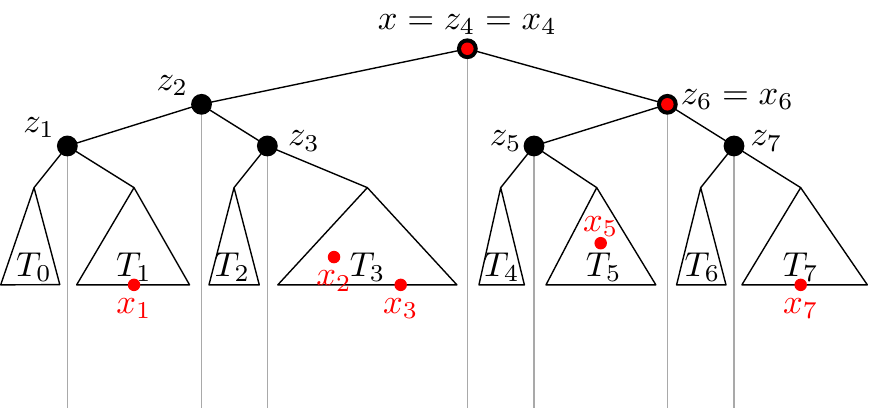} \\[-2ex]
      $\Downarrow$ \\[1ex]
      \includegraphics{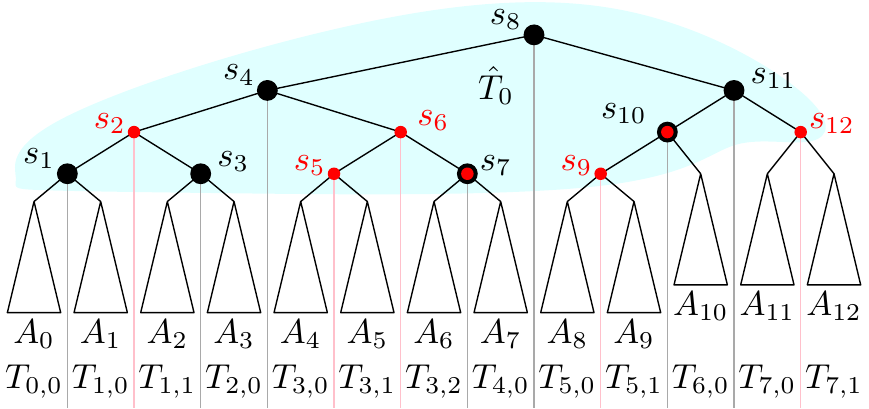}
    \end{center}
  \caption{The operation of $\textsc{Balance}(x,k)$}
  \figlabel{balance-x}
\end{figure}

If $|V(T_x)|< 2^k$, then this operation simply replaces $T_x$ with a perfectly balanced binary search tree containing $V(T_x)$.  Otherwise, let $Z:=\{z\in V(T_x): d_{T_x}(z)< k\}$.  Call the $m\le 2^k-1$ elements of $Z$  $z_1<z_2<\cdots<z_{m}$ and, for convenience, define $z_0=-\infty$ and $z_{m+1}=\infty$.

Select the nodes $X:=\{x_1,\dots,x_{2^k-1}\}$ of $T_x$ where each $x_j$ has rank $\lfloor j|V(T_x)|/2^k\rfloor$ in $V(T_x)$.\footnote{For a finite set $X\subset\R$, and $x\in\R$, the \emph{rank} of $x$ in $S$ is $|\{x'\in S: x'<x\}|$.}  The $\textsc{Balance}(x,k)$ operation will turn $T_x$ into a tree with a top part $\hat{T}_0$ that is a perfectly balanced binary search tree on $Z\cup X$.  We now describe how this is done.

$T_x-Z$ is a forest consisting of $m+1\le 2^{k}$ trees $T_0,\dots,T_m$. (Some of these trees may be empty.)  Order $T_{0},\dots,T_m$ so that, for each $i\in\{0,\dots,m\}$ and each $x'\in V(T_i)$, $z_i< x' < z_{i+1}$.  For each $i\in\{0,\dots,m\}$, let $\{x_{i,1},\dots,x_{i,c_i}\}:=X\cap V(T_i)$ where $x_{i,1}<\cdots<x_{i,c_i}$ and define $x_{i,0}:=z_i$ and $x_{i,c_i+1}:=z_{i+1}$. Note that for each $i\in\{0,\dots,m\}$, $c_i\le |X|\le 2^k-1$.

For each $i\in\{0,\dots,m\}$, apply $\textsc{MultiSplit}(x_{i,1},\dots,x_{i,c_i})$ to the tree $T_i$.  As a result of these calls, we obtain sequences of trees $T_{i,0},\dots,T_{i,c_i}$ where, for each $i\in\{0,\dots,m\}$, each $j\in\{0,\dots,c_i\}$, and each $x'\in V(T_{i,j})$, we have $x_{i,j}<x'<x_{i,j+1}$.  Note that if $c_i=0$ (i.e., if $X$ does not intersect $V(T_i)$), then the result of this call is a single tree $T_{i,0}=T_i$.
Observe that $\bigcup_{i=0}^m\bigcup_{j=0}^{c_i} V(T_{i,j}) = V(T_x)\setminus (Z\cup X)$.

Let $p := |Z\cup X|$, let $s_1<\cdots< s_p$ denote the elements of $Z\cup X$ and define $s_0 := -\infty$ and $s_{p+1} := \infty$.  For each $\ell\in\{0,\dots,p\}$, let $i_\ell:=|Z\cap \{s_1,\dots,s_\ell\}|$ and $j_\ell:= \ell - \max\{ q\in\{1,\dots,\ell\}: s_q\in Z\}$ and let $A_\ell:=T_{i_\ell,j_\ell}$.   Then, for each $\ell\in \{0,\dots,p\}$ and each $x'\in V(A_\ell)$, we have $s_\ell < x' < s_{\ell+1}$.

Now construct a perfectly balanced tree $\hat{T}_0$ with vertex set $V(\hat{T}_0):=\{s_1,\dots,s_p\}=Z\cup X$.  The tree $\hat{T}_0$ has $p+1$ external nodes $a_0,\dots,a_p$.  We obtain a new tree $T_x'$ by replacing $a_\ell$ with $A_\ell$ for each $\ell\in\{0,\dots,p\}$ in $\hat{T}_0^+$.  In the encompassing bulk tree $T$ we replace the subtree $T_x$ with $T_x'$.

\begin{lem}\lemlabel{multisplit-depth}
  Let $T$ be any binary search tree, let $x$ be any node of $T$, and apply $\textsc{Balance}(x,k)$ to $T$ to obtain a new tree $T'$.  Then $h(T')\le h(T)+1$.
\end{lem}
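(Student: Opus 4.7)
The plan is to exploit the fact that $\textsc{Balance}(x,k)$ modifies only the subtree $T_x$ of $T$ rooted at $x$: all nodes outside $T_x$ retain their depths in $T'$. Writing $T_x'$ for the new subtree that replaces $T_x$, it therefore suffices to show that $h(T_x') \le h(T_x) + 1$, since then
\[
  h(T') \le \max\bigl\{h(T),\, d_T(x) + h(T_x')\bigr\} \le d_T(x) + h(T_x) + 1 \le h(T) + 1 .
\]

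The easy case is $|V(T_x)| < 2^k$, where $T_x'$ is a perfectly balanced binary search tree on $V(T_x)$ and therefore has height $\lfloor \log |V(T_x)| \rfloor$; since any binary tree on $|V(T_x)|$ nodes has height at least $\lfloor \log |V(T_x)| \rfloor$, we get $h(T_x') \le h(T_x)$ for free.

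The main case is $|V(T_x)| \ge 2^k$, where I would analyze $T_x'$ in two layers, the ``top'' $\hat{T}_0$ and the hanging subtrees $A_\ell$. For the top layer, $\hat{T}_0$ is perfectly balanced on $Z \cup X$. Since $Z$ consists of nodes of $T_x$ at depth less than $k$, we have $|Z| \le 2^k - 1$, and by construction $|X| = 2^k - 1$, so $|Z \cup X| \le 2^{k+1} - 2 < 2^{k+1}$, giving $h(\hat{T}_0) \le k$ and hence external nodes of $\hat{T}_0^+$ at depth at most $k+1$. For the bottom layer, each $A_\ell = T_{i_\ell, j_\ell}$ is obtained from one of the trees $T_i$ of the forest $T_x - Z$ by a $\textsc{MultiSplit}$ call, so by \lemref{multisplit-height} we have $h(A_\ell) \le h(T_i)$. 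The key observation is that since $Z$ is exactly the set of nodes of $T_x$ of depth strictly less than $k$, every component of $T_x - Z$ is rooted at a node at depth exactly $k$ in $T_x$, and therefore has height at most $h(T_x) - k$.

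Combining the two layers: any node of $T_x'$ either sits in $\hat{T}_0$, at depth at most $k \le h(T_x)$, or sits inside some hanging $A_\ell$, at depth at most
\[
  (h(\hat{T}_0) + 1) + h(A_\ell) \le (k+1) + (h(T_x) - k) = h(T_x) + 1 .
\]
This yields $h(T_x') \le h(T_x) + 1$ and completes the argument. The only real bookkeeping, and thus the main (mild) obstacle, is verifying the ``$+1$'' carefully: one must check that the root of each $A_\ell$ is placed at the depth of the external node $a_\ell$ it replaces, which is at most $h(\hat{T}_0)+1$, and that this $+1$ is exactly absorbed by the matching drop of $k$ between $|Z \cup X|$ and the depths of the roots of the components of $T_x - Z$.
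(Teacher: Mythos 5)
Your proof is correct and follows essentially the same route as the paper: reduce to bounding $h(T_x')$, bound $h(\hat{T}_0)\le k$ via $|Z\cup X|<2^{k+1}$, bound each $h(A_\ell)\le h(T_x)-k$ via \lemref{multisplit-height} and the fact that the $T_i$ are rooted at depth $k$ in $T_x$, and add. The only difference is that you spell out the easy case $|V(T_x)|<2^k$ explicitly, which the paper leaves implicit.
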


\begin{proof}
  Since $\textsc{Balance}(x,k)$ only affects the subtree $T_x$ rooted at $x$, it suffices to show that $h(T_x')\le h(T_x)+1$.  For each $i\in\{0,\dots, m\}$, $T_i$ is rooted at a depth-$k$ node of $T_x$, so $h(T_i)\le h(T_x)-k$. For each $\ell \in\{0,\dots,p\}$, $A_\ell$ is obtained by an application of $\textsc{MultiSplit}$ to $T_i$ for some $i\in\{0,\dots,m\}$ so, by \lemref{multisplit-height}, $h(A_\ell)\le h(T_i)\le h(T_x)-k$.  Next, $|Z\cup X|\le |Z|+|X| \le 2^k-1 + 2^k-1 < 2^{k+1}-1$ and $\hat{T}_0$ is a perfectly balanced binary search tree of size $|\hat{T}_0|=|Z\cup X|$.  Therefore $h(\hat{T}_0)\le \lfloor\log|\hat{T}_0|\rfloor\le \lfloor\log(2^{k+1}-1)\rfloor = k$.  Finally, $h(T_x')\le h(\hat{T}_0)+1 +\max \{h(A_\ell):\ell\in\{0,\dots,p\}\} \le k +1 + h(T_x) - k\le h(T_x)+1$.
\end{proof}

The following statement captures what we win after an application of $\textsc{Balance}(x,k)$ to a binary search tree.

\begin{lem}\lemlabel{balance-x-weight}
  Let $T$ be any binary search tree, let $x$ be any node of $T$, let $T_x$ be the subtree of $T$ rooted at $x$, and apply $\textsc{Balance}(x,k)$ to $T$ to obtain a new tree $T'$.  Then, for each $T'$-descendant $z$ of $x$ with $d_{T'}(z) = d_{T}(x)+k+1$, the subtree of $T'$ rooted at $z$ has size at most $|T_x|/2^k$.  \end{lem}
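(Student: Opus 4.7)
The plan is to analyse the structure of the tree $T_x'$ that replaces $T_x$ when $\textsc{Balance}(x,k)$ is applied. Since the operation does not modify $T$ outside $T_x$, any $T'$-descendant $z$ of $x$ with $d_{T'}(z)=d_T(x)+k+1$ lives at depth $k+1$ within $T_x'$. Because $T_x'$ is built by hanging the subtrees $A_\ell$ off the external nodes of $\hat T_0^+$, the proof reduces to two claims: first, $h(\hat T_0)\le k$; second, $|A_\ell|\le |T_x|/2^k$ for every $\ell$. The first claim forces $z$ to lie strictly below $\hat T_0$, i.e.\ inside some $A_\ell$, and then the subtree of $T'$ rooted at $z$ is contained in $A_\ell$, so the second claim yields the desired bound.

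For the height bound, I would first observe that $Z$ consists of the nodes of $T_x$ at $T_x$-depth less than $k$, so $|Z|\le 2^k-1$, and $|X|\le 2^k-1$ by construction. Hence $|\hat T_0|=|Z\cup X|\le 2^{k+1}-2<2^{k+1}$, and since $\hat T_0$ is perfectly balanced this gives $h(\hat T_0)\le\lfloor\log|\hat T_0|\rfloor\le k$, as required.

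The main step is the size bound on each $A_\ell$. By construction $V(A_\ell)\subseteq (s_\ell,s_{\ell+1})\cap V(T_x)$, where $s_1<\cdots<s_p$ enumerate the elements of $Z\cup X$ in increasing order and $s_0:=-\infty$, $s_{p+1}:=+\infty$. Writing $n:=|V(T_x)|$, the fact that $x_j$ has rank $\lfloor jn/2^k\rfloor$ in $V(T_x)$ shows that between two consecutive elements of $X$ there are at most $\lceil n/2^k\rceil-1<n/2^k$ elements of $V(T_x)$, while each of the two extremal intervals $(-\infty,x_1)$ and $(x_{2^k-1},+\infty)$ contains at most $\lfloor n/2^k\rfloor\le n/2^k$ elements of $V(T_x)$. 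Interleaving the elements of $Z$ into this sorted sequence can only subdivide these intervals further, so every gap $(s_\ell,s_{\ell+1})$ contains at most $n/2^k=|T_x|/2^k$ elements of $V(T_x)$; hence $|A_\ell|\le |T_x|/2^k$, as needed.

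The main (and essentially only) obstacle is being careful with the floor-function arithmetic for the rank differences and, separately, handling the two extremal intervals $(-\infty,x_1)$ and $(x_{2^k-1},+\infty)$; beyond this bookkeeping, the result follows directly from the perfect balance of $\hat T_0$ and the containment $V(A_\ell)\subseteq (s_\ell,s_{\ell+1})\cap V(T_x)$.
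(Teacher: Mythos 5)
Your proof is correct and follows essentially the same approach as the paper's: bound $h(\hat{T}_0)\le k$ so that any node at depth $k+1$ inside $T_x'$ must lie in some $A_\ell$, then use the rank spacing of the $x_j$'s to show $|A_\ell|\le |T_x|/2^k$. Your treatment is slightly more careful than the paper's in explicitly handling the two extremal intervals $(-\infty,x_1)$ and $(x_{2^k-1},+\infty)$ and the interleaving of $Z$, but the underlying argument is the same.
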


\begin{proof}
  Each such subtree is a subtree of $A_\ell$ for some $\ell\in\{0,\dots,p\}$. Now, $V(A_\ell)\subset (x_j,x_{j+1})$ for some $j\in\{1,\dots,2^{k-1}\}$.  The values $x_j$ and $x_{j+1}$ have ranks $\lfloor j|T_x|/2^k\rfloor$ and $\lfloor (j+1)|T_x|/2^k\rfloor$ in the set $V(T_x)$.  Therefore, $|A_\ell|\le \lfloor (j+1)|T_x|/2^k\rfloor- \lfloor j|T_x|/2^k\rfloor -1 \le |T_x|/2^k$.
\end{proof}

\subsubsection{$\textsc{BulkBalance}(\theta,k)$}

The ultimate restructuring operation in bulk trees is $\textsc{BulkBalance}(\theta,k)$.
It calls \textsc{Balance}$(x,k)$ for each node $x$ of depth $\theta$ in $T$.
(Note that this operation has no effect if there is no such node.)
The following two lemmas are immediate consequences of \lemref{multisplit-depth} and \lemref{balance-x-weight}, respectively.

\begin{lem}\lemlabel{balance-depth}
  Let $T$ be any binary search tree and apply the $\textsc{BulkBalance}(\theta,k)$ operation to obtain a new tree $T'$.  Then $h(T')\le h(T)+1$.
\end{lem}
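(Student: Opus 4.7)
The plan is to reduce \lemref{balance-depth} to repeated applications of \lemref{multisplit-depth} by exploiting the fact that $\textsc{BulkBalance}(\theta,k)$ touches only pairwise disjoint subtrees of $T$. The key observation is that for any two distinct nodes $x,x'$ of depth exactly $\theta$ in $T$, the subtrees $T_x$ and $T_{x'}$ rooted at $x$ and $x'$ are vertex-disjoint, since neither is a $T$-ancestor of the other. Therefore the successive calls to $\textsc{Balance}(x,k)$ for the depth-$\theta$ nodes $x$ do not interfere with each other: each one only modifies its own subtree $T_x$, leaving the rest of $T$ (including all nodes of depth at most $\theta$) untouched.

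From here I would split the analysis of the depth of a node $z \in V(T')$ into two cases. If no depth-$\theta$ node of $T$ is a $T$-ancestor of $z$, then $z$'s position in $T'$ is identical to its position in $T$, so $d_{T'}(z) = d_T(z) \le h(T)$. Otherwise, there is a (unique) depth-$\theta$ ancestor $x$ of $z$ in $T$, and $z$ lies in the (possibly modified) subtree rooted at $x$ in $T'$. By \lemref{multisplit-depth} applied to the single call $\textsc{Balance}(x,k)$, the height of the new subtree $T'_x$ rooted at $x$ satisfies $h(T'_x) \le h(T_x) + 1 \le h(T) - \theta + 1$. Combining this with $d_{T'}(x) = \theta$ yields
\[
  d_{T'}(z) \le \theta + h(T'_x) \le \theta + (h(T) - \theta) + 1 = h(T) + 1.
\]

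Taking the maximum over all $z \in V(T')$ then gives $h(T') \le h(T) + 1$, as required. The only potential subtlety is making sure that the individual $\textsc{Balance}$ calls really are independent; this is where I would be most careful, but it follows immediately from the disjointness of the subtrees rooted at the distinct depth-$\theta$ nodes, so no genuine obstacle arises.
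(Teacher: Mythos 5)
Your proof is correct and takes essentially the same route as the paper, which simply declares \lemref{balance-depth} an immediate consequence of \lemref{multisplit-depth}; you spell out the (straightforward) argument that the $\textsc{Balance}(x,k)$ calls act on disjoint subtrees rooted at depth-$\theta$ nodes and combine the local height bound $h(T_x')\le h(T_x)+1$ with $d_{T'}(x)=\theta$.
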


\begin{lem}\lemlabel{balance-weight}
  Let $T$ be any binary search tree and apply the $\textsc{BulkBalance}(\theta,k)$ operation to obtain a new tree $T'$.
  Let $x$ be any node of $T$ of depth $\theta$ and let $T_x$ be the subtree of $T$ rooted at $x$.
  Then, for each $T'$-descendant $z$ of $x$ with $d_{T'}(z) = \theta+k+1$, the subtree of $T'$ rooted at $z$ has size at most $|T_x|/2^k$.
\end{lem}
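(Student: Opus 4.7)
The plan is to reduce this directly to \lemref{balance-x-weight} by observing that \textsc{BulkBalance}$(\theta,k)$ acts independently on the subtrees rooted at the various depth-$\theta$ nodes of $T$. By definition, \textsc{BulkBalance}$(\theta,k)$ simply invokes \textsc{Balance}$(x',k)$ for each node $x'$ of depth $\theta$ in $T$, and each such call \textsc{Balance}$(x',k)$ only modifies the subtree of $T$ rooted at $x'$. Since the subtrees rooted at distinct depth-$\theta$ nodes of $T$ are pairwise vertex-disjoint, these calls do not interfere with each other.

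Now fix the particular depth-$\theta$ node $x$ from the statement. Nothing that happens outside the subtree $T_x$ can change the subtree rooted at $x$ in $T'$, so this subtree is exactly the tree produced by the single call \textsc{Balance}$(x,k)$ applied to $T_x$. In particular, $d_{T'}(x)=d_T(x)=\theta$, so a $T'$-descendant $z$ of $x$ with $d_{T'}(z)=\theta+k+1$ satisfies $d_{T'}(z)=d_T(x)+k+1$, which is precisely the hypothesis under which \lemref{balance-x-weight} applies. The conclusion of \lemref{balance-x-weight} then gives that the subtree of $T'$ rooted at $z$ has size at most $|T_x|/2^k$, as required.

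There is no real obstacle here — the only thing to check carefully is the non-interference claim, namely that the effect of \textsc{BulkBalance}$(\theta,k)$ on the subtree rooted at a given depth-$\theta$ node coincides with the effect of a single \textsc{Balance}$(x,k)$ call. This follows immediately from the locality of \textsc{Balance} already established when defining the operation.
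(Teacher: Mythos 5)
Your proof is correct and follows the same route as the paper, which simply declares \lemref{balance-weight} an immediate consequence of \lemref{balance-x-weight}: since $\textsc{BulkBalance}(\theta,k)$ just runs $\textsc{Balance}(x',k)$ on each depth-$\theta$ node and these calls only touch disjoint subtrees, the bound from \lemref{balance-x-weight} applies directly to the subtree rooted at $x$. Nothing more is needed.
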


\subsection{Bulk Tree Sequences}

Let $k\ge 5$ be an integer\footnote{$k \geq 5$ is a technical requirement, to make sure that some inequalities hold later on.} and let $S_0,\dots,S_q$ be a $1$-chunking sequence.
We define a \emph{one-phase $k$-bulk tree sequence} based on $S_0,\dots,S_q$ to be a sequence $T_0, \dots, T_{q}$ of binary search trees such that $T_0$ is an arbitrary binary search tree on node set $S_0$ and,  for each $y\in \{0, \dots, q-1\}$, we have $h(T_y)>y\cdot(k+1)$ and
the tree $T_{y+1}$ is obtained from $T_y$ by applying
\begin{enumerate}[label={(\roman*)}, ref={\roman*}, noitemsep]
    \item $\textsc{BulkBalance}(y\cdot(k+1),k)$, then
    \item $\textsc{BulkInsert}(I)$ with $I:=S_{y+1} \setminus S_{y}$, and finally
    \item $\textsc{BulkDelete}(D)$ with $D:=S_{y} \setminus S_{y+1}$.
\end{enumerate}
Note that $V(T_y)=S_y$ for each $y \in \{0, \dots, q\}$.
The sequence is \emph{complete} if $h(T_q)\leq q\cdot(k+1)$.

For $k\geq 5$ and a $1$-chunking sequence $S_1,\dots,S_h$, we define
a \emph{$k$-bulk tree sequence} based on $S_1,\dots,S_h$ to be a sequence $T_1, \dots, T_h$ of binary search trees satisfying:
$T_1$ is a perfectly balanced binary search tree with $V(T_1)=S_1$, and
there exist indices $h_1,h_2,\dots,h_{\ell}$ with $1=h_1 < h_2 < \cdots <h_{\ell} = h$  such that $T_{h_j}, T_{h_j+1},\dots,T_{h_{j+1}}$ is a complete one-phase $k$-bulk tree sequence based on $S_{h_j}, S_{h_j+1},\dots,S_{h_{j+1}}$
for each $j\in\{1,\dots,\ell-2\}$, and $T_{h_{\ell-1}}, T_{h_{\ell-1}+1},\dots,T_{h_{\ell}}$ is a (non-necessarily complete) one-phase $k$-bulk tree sequence based on $S_{h_{\ell-1}},S_{h_{\ell-1}+1},\dots,S_{h_{\ell}}$.

Note that if we fix the $1$-chunking sequence $S_1,\dots,S_h$, the integer $k\ge 5$, and the starting perfectly balanced binary search tree $T_1$ with $V(T_1)=S_1$, a $k$-bulk tree sequence based on $S_1,\dots,S_h$ and starting with $T_1$ exists and is unique. It is obtained by a sequence of one-phase $k$-bulk tree sequences, where we start a new one-phase sequence as soon as the current one is complete.

This will not be needed until the final sections, but it is helpful to keep in mind that we will ultimately take $k=\max\left\{5,\left\lceil\sqrt{\log n / \log\log n}\right\rceil\right\}$ when considering a $k$-bulk tree sequence built for our $n$-vertex graph $G$, so that the expression $\Oh(k+k^{-1}\log n)$ (which appears many times in what follows), is $\omega(1)$ and $o(\log n)$.

\begin{lem}\lemlabel{bulktree-old-B-properties}
Let $T_0,\dots,T_{q}$ be a one-phase $k$-bulk tree sequence.
Then, for each $y\in\{0,\dots,q\}$
\begin{compactenum}[(i)]
\item $h(T_y)\le h(T_0) + 2y$;\label{lemma-item-B1}
\item each subtree of $T_y$ rooted at a node of depth $y\cdot (k+1)$ has size at most
$|T_0|\cdot2^{-y(k-2)}$.\label{lemma-item-B2}
\end{compactenum}
\end{lem}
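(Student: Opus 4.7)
Proceed by induction on $y$. The base case $y=0$ is immediate: \itemref{lemma-item-B1} reads $h(T_0)\le h(T_0)$, and the only subtree at depth $0$ is $T_0$ itself, of size $|T_0|=|T_0|\cdot 2^0$.

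For the inductive step, assume both properties hold for $T_y$ and denote the three intermediate trees on the way to $T_{y+1}$ by $T_y^1$ (obtained from $T_y$ by $\textsc{BulkBalance}(y(k+1),k)$), $T_y^2$ (obtained from $T_y^1$ by $\textsc{BulkInsert}(S_{y+1}\setminus S_y)$), and finally $T_{y+1}$ (obtained from $T_y^2$ by $\textsc{BulkDelete}(S_y\setminus S_{y+1})$). The height bound \itemref{lemma-item-B1} for $T_{y+1}$ follows by combining \lemref{balance-depth}, \lemref{insertion-depth}, and \lemref{deletion-signature}, which contribute $+1$, $+1$, and $\le 0$ to the height respectively, with the inductive bound $h(T_y)\le h(T_0)+2y$.

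For \itemref{lemma-item-B2}, pick a node $z$ of $T_{y+1}$ of depth $(y+1)(k+1)$ and track the subtree rooted at $z$ backwards through the three operations. By the inductive hypothesis, each subtree of $T_y$ rooted at a node of depth $y(k+1)$ has size at most $|T_0|\cdot 2^{-y(k-2)}$. Applying \lemref{balance-weight} at each such node, every subtree of $T_y^1$ rooted at a node of depth $y(k+1)+k+1=(y+1)(k+1)$ has size at most $|T_0|\cdot 2^{-y(k-2)-k}$. \lemref{insertion-size} then says each such subtree grows by a factor strictly less than $4$ under $\textsc{BulkInsert}$; any newly inserted node has its parent in $V(T_y^1)$ by the $1$-chunking property (cf.\ the proof of \lemref{insertion-depth}), so it is a leaf of $T_y^2$ whose subtree has size $1$. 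The arithmetic identity $-y(k-2)-k+2=-(y+1)(k-2)$ is exactly what makes the successive factors of $2^{-k}$ and $2^2$ combine into the target bound: subtrees of $T_y^2$ at depth $(y+1)(k+1)$ have size at most $4\cdot|T_0|\cdot 2^{-y(k-2)-k}=|T_0|\cdot 2^{-(y+1)(k-2)}$.

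The main subtlety will be passing from $T_y^2$ to $T_{y+1}$ under $\textsc{BulkDelete}$, because deleting an internal node promotes its predecessor (or successor) up a spine, changing the value at a given \emph{position} in the tree without removing that position. The remedy is to reason in terms of positions (signatures) rather than values: an elementary deletion removes exactly one leaf (so exactly one position disappears) and never creates any new position, hence the size of the subtree sitting at any fixed signature is non-increasing through $\textsc{BulkDelete}$. By \lemref{deletion-signature}, $\sigma_{T_{y+1}}(z)$ is a prefix of $\sigma_{T_y^2}(z)$, so the position of $z$ in $T_{y+1}$ already existed in $T_y^2$, at depth $|\sigma_{T_{y+1}}(z)|=(y+1)(k+1)$, and the bound $|T_0|\cdot 2^{-(y+1)(k-2)}$ therefore carries over to $T_{y+1}$, completing the induction.
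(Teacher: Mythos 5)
Your proof is correct and follows essentially the same inductive approach as the paper, invoking \lemref{balance-depth}, \lemref{insertion-depth}, \lemref{deletion-signature}, \lemref{balance-weight}, and \lemref{insertion-size} in the same sequence. The only difference is that where the paper simply asserts ``bulk deletion does not increase the size of any subtree,'' you spell out the position/signature argument explaining why that claim holds, which is a helpful clarification but not a departure.
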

\begin{proof}
The proof is by induction on $y$.  For the base case $y=0$, both properties are trivial: \itemref{lemma-item-B1} asserts that $h(T_0)\le h(T_0)$ and \itemref{lemma-item-B2} asserts that the subtree of $T_0$ rooted at the root of $T_0$ has size at most $|T_0|$.

  For the inductive step, assume $y\ge 0$ and \itemref{lemma-item-B1} holds for $T_{y}$.
  In order to get $T_{y+1}$, we first apply $\textsc{BulkBalance}(y\cdot (k+1),k)$ to $T_y$ to obtain $T'$.
  By~\lemref{balance-depth}, we have $h(T') \le h(T_y)+1$.
  Let $I:=V(T_{y+1})\setminus V(T') = V(T_{y+1})\setminus V(T_y)$.
  Since $V(T_y)$ $1$-chunks $V(T_{y+1})$ we know that $V(T')$ $1$-chunks $I$.
  Next we apply $\textsc{BulkInsert}(I)$ to $T'$ to obtain $T''$.
  Thus, by~\lemref{insertion-depth} we have $h(T'') \le h(T')+1$.
  Finally, we apply $\textsc{BulkDelete}(D)$ to $T''$ and obtain $T_{y+1}$, where $D:=V(T_{y})\setminus V(T_{y+1})$.
  By~\lemref{deletion-signature}, we have $h(T_{y+1}) \le h(T'')$.
  Altogether we have
  \[
  h(T_{y+1}) \le h(T'') \le h(T') +1 \le h(T_y) +2 \le h(T_0) + 2y +2 = h(T_0) + 2(y+1).
  \]
  Thus, \itemref{lemma-item-B1} holds for $T_{y+1}$.

    Next we establish \itemref{lemma-item-B2}.
    Assume that \itemref{lemma-item-B2} holds for $T_{y}$.
    Thus, every subtree of $T_{y}$ rooted at a node of depth $y(k+1)$ has size at most $|T_{0}|\cdot2^{-y(k-2)}$.
    Again, the first step when constructing $T_{y+1}$ from $T_{y}$ is to apply $\textsc{BulkBalance}(y(k+1),k)$ to $T_{y}$.  By \lemref{balance-weight}, this results in a tree $T'$ in which every subtree rooted at a node of depth $(y+1)(k+1)$ has size at most $|T_{0}|\cdot 2^{-y(k-2)}\cdot 2^{-k}$.  The second step is to apply $\textsc{BulkInsert}(I)$ to $T'$ to obtain a new tree $T''$.
    By \lemref{insertion-size}, every subtree of $T''$ rooted at a node of depth $(y+1)(k+1)$ has size at most $|T_{0}|\cdot 2^{-y(k-2)}\cdot 4 \cdot 2^{-k}=2^{-y(k-2)+2-k}=2^{-(y+1)(k-2)}$.  Finally, the third step is to perform $\textsc{BulkDelete}(D)$ on $T''$ to obtain $T_{y+1}$.  Bulk deletion does not increase the size of any subtree, so every subtree of $T_{y+1}$ rooted at a node of depth $(y+1)(k+1)$ has size at most $|T_{0}|\cdot 2^{-(y+1)(k-2)}$, as desired.
\end{proof}

\begin{cor}\corlabel{height-y-star}
Let $T_0,\dots,T_{q}$ be a one-phase $k$-bulk tree sequence.
Then,
\[
q \leq \left\lceil\tfrac{\log|T_0|}{k-2}\right\rceil.
\]
\end{cor}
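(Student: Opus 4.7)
The plan is to derive the bound by contradiction, using part \itemref{lemma-item-B2} of \lemref{bulktree-old-B-properties}, which gives an exponentially decaying upper bound on the sizes of subtrees rooted at depth $y\cdot(k+1)$. The key intuition is that once this size bound drops below $2$, any node at that depth must be a leaf, and so the height constraint $h(T_y) > y(k+1)$ demanded by the definition of a one-phase sequence can no longer be satisfied.

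Concretely, I would set $y^\star := \left\lceil\tfrac{\log|T_0|}{k-2}\right\rceil$ and aim to show $q \le y^\star$. Assuming for contradiction that $q \ge y^\star+1$, the index $y^\star$ lies in $\{0,\dots,q-1\}$, so the definition of a one-phase $k$-bulk tree sequence forces $h(T_{y^\star}) > y^\star\cdot(k+1)$. In particular, $T_{y^\star}$ contains a node at depth at least $y^\star(k+1)+1$. Its parent is a node $v$ at depth exactly $y^\star(k+1)$, and so the subtree of $T_{y^\star}$ rooted at $v$ contains at least $2$ nodes.

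On the other hand, \lemref{bulktree-old-B-properties}\itemref{lemma-item-B2} applied with $y=y^\star$ says that every such subtree has size at most $|T_0|\cdot 2^{-y^\star(k-2)}$. By choice of $y^\star$, we have $y^\star(k-2) \ge \log|T_0|$, hence $|T_0|\cdot 2^{-y^\star(k-2)} \le 1$. This contradicts the size lower bound of $2$ obtained above, giving $q \le y^\star$, as required.

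There is essentially no hidden obstacle: the real work was done in \lemref{bulktree-old-B-properties}, and the corollary simply optimizes the value of $y$ at which the subtree-size bound becomes too small to support any descendants. The only minor point to get right is the conversion between "$h(T_y) > y(k+1)$" and "some depth-$y(k+1)$ subtree has size $\ge 2$", but this is immediate from the fact that a node of depth $y(k+1)+1$ must sit below a node of depth $y(k+1)$.
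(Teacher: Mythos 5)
Your proof is correct and follows the same route as the paper's: both apply \lemref{bulktree-old-B-properties}\itemref{lemma-item-B2} at $y=\left\lceil\log|T_0|/(k-2)\right\rceil$ to force the depth-$y(k+1)$ subtrees to have size at most $1$, and derive a contradiction with the height requirement $h(T_y)>y(k+1)$ in the definition of a one-phase $k$-bulk tree sequence. You merely spell out slightly more explicitly why a size-$\le 1$ bound on these subtrees forbids any node at depth $y(k+1)+1$, which the paper leaves implicit.
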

\begin{proof}
Arguing by contradiction, suppose that $q > \left\lceil\tfrac{\log|T_0|}{k-2}\right\rceil$.
Note that when we take the logarithm of the upper bound in \lemref{bulktree-old-B-properties}\itemref{lemma-item-B2} for $y:=\left\lceil\tfrac{\log|T_0|}{k-2}\right\rceil$,  we have
\[
\log|T_0| - \left\lceil\tfrac{\log|T_0|}{k-2}\right\rceil\cdot(k-2)
\leq \log|T_0|-\tfrac{\log|T_0|}{k-2}\cdot(k-2)
= 0.
\]
Thus, each subtree of $T_{y}$ rooted at a node of depth $y(k+1)$ has size at most
\[
|T_0|\cdot2^{-y(k-2)} \leq 1,
\]
and hence $h(T_y) \leq y(k+1)$, which violates the height condition in the definition of a one-phase $k$-bulk tree sequence.
\end{proof}

\begin{lem}\lemlabel{bulktree-height-i}
Let $T_0,\dots,T_{q}$ be a complete one-phase $k$-bulk tree sequence,
and let $r_0 := h(T_0)-\log|T_0|$.  Then, for each $y\in\{0,\dots,q\}$,
\begin{compactenum}[(i)]
\item $ |T_0|/4^y\le |T_y|$, and thus $\log|T_0| \le \log|T_y| + 2y$;\label{height-diff}
  \item $q=\Oh(k^{-1}\log|T_y|)$; \label{ystar-bound}
    \item $h(T_y) = \log|T_y| + r_0+\Oh(k^{-1}\log|T_y|) $; and
    \label{bulktree-height-item-i}
    \item $h(T_{q}) = \log|T_{q}|+\Oh(k+k^{-1}\log|T_{q}|)$.\label{bulktree-height-item-ii}
  \end{compactenum}
\end{lem}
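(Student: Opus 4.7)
The plan is to prove the four items in order, combining the elementary properties of the three bulk operations with \lemref{bulktree-old-B-properties} and \corref{height-y-star}.

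For (i), I induct on $y$; the base case $y=0$ is trivial. For the step $y\to y+1$, $\textsc{BulkBalance}$ preserves the node set, $\textsc{BulkInsert}(I)$ only adds nodes (so the size does not decrease), and $\textsc{BulkDelete}(D)$ with $D=S_y\setminus S_{y+1}$ reduces size by at most a factor of $4$ by \lemref{deletion-size}. The hypothesis of \lemref{deletion-size} is met because, just before the deletion, the tree has node set $S_y\cup S_{y+1}$, whose complement of $D$ equals $S_{y+1}$, and $S_{y+1}$ $1$-chunks $S_y\supseteq D$ by the $1$-chunking property of the sequence. Hence $|T_{y+1}|\ge|T_y|/4$, induction gives $|T_y|\ge|T_0|/4^y$, and taking $\log_2$ yields $\log|T_0|\le\log|T_y|+2y$.

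For (ii), \corref{height-y-star} gives $q\le\lceil\log|T_0|/(k-2)\rceil$, hence $q(k-2)\le\log|T_0|+(k-2)$. Substituting (i) yields $q(k-2)\le\log|T_y|+2q+(k-2)$, i.e.\ $q(k-4)\le\log|T_y|+(k-2)$. Since $k\ge 5$ implies $k-4\ge k/5$, this gives $q=\Oh(k^{-1}\log|T_y|)$.

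For (iii), \lemref{bulktree-old-B-properties}\itemref{lemma-item-B1} gives $h(T_y)\le h(T_0)+2y=\log|T_0|+r_0+2y$. Using (i) to rewrite $\log|T_0|\le\log|T_y|+2y$, and (ii) to bound $y\le q=\Oh(k^{-1}\log|T_y|)$, yields $h(T_y)\le\log|T_y|+r_0+\Oh(k^{-1}\log|T_y|)$. For (iv), the completeness condition $h(T_q)\le q(k+1)$ combined with the explicit bound from (ii) gives $h(T_q)\le\tfrac{k+1}{k-2}\log|T_0|+\Oh(k)$. Writing $\tfrac{k+1}{k-2}=1+\tfrac{3}{k-2}$ and invoking (i) to substitute $\log|T_0|=\log|T_q|+\Oh(k^{-1}\log|T_q|)$ produces $h(T_q)\le\log|T_q|+\Oh(k+k^{-1}\log|T_q|)$; the matching lower bound $h(T_q)\ge\lfloor\log|T_q|\rfloor$ is immediate for any binary tree.

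The main hurdle is bookkeeping of constants rather than any deep idea---in particular exploiting $k\ge 5$ to ensure that $k-4$, $k-2$, and $3/(k-2)$ all behave like $k$ or $1/k$ up to absolute constants, and verifying that the additive $\Oh(k)$ term in (iv) arises cleanly from $q(k+1)$ rather than being absorbed or doubled elsewhere in the chain of substitutions.
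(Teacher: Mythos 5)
Your proposal is correct and tracks the paper's own argument almost line for line: item~(i) from the size-preservation of $\textsc{BulkBalance}$ and $\textsc{BulkInsert}$ combined with \lemref{deletion-size}, item~(ii) from \corref{height-y-star} plus item~(i), item~(iii) from \lemref{bulktree-old-B-properties}\itemref{lemma-item-B1} plus items~(i) and~(ii), and item~(iv) from completeness plus items~(i) and~(ii), with the same constant manipulations exploiting $k\geq 5$. The only point you leave implicit is that $D$ is a \emph{strict} subset of the intermediate tree's node set (a hypothesis of \lemref{deletion-size}), which holds because the sets in a $1$-chunking sequence are required to be non-empty, so $S_{y+1}\neq\emptyset$; the paper spells this out in a footnote.
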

\begin{proof}
Let $I_0,\dots,I_{q-1}$ and $D_0,\dots,D_{q-1}$ be the sets so that $T_{y+1}$ is obtained from $T_y$ by
rebalancing and then applying $\textsc{BulkInsert}(I_y)$ and $\textsc{BulkDelete}(D_y)$, for each $y\in\{0,\dots,q-1\}$.
First, recall that by \lemref{insertion-size} and \lemref{deletion-size} we have $|T_{y+1}|\ge |T_y|/4$.\footnote{The condition, in \lemref{deletion-size}, that $D$ is a \emph{strict} subset of $V(T)$ is satisfied since $\textsc{BulkDelete}(D_y)$ is performed on an intermediate tree $T_y'$ with $V(T_y')=V(T_y)\cup I_y$ and produces a tree $T_y''$ with $V(T_y')\setminus D_y = V(T_y'')=V(T_{y+1})=S_{y+1}$.  By definition $S_{y+1}$ is non-empty, therefore $D=D_y\neq V(T_y')$.}

Iterating this starting with $T_0$ implies that $|T_0|/4^y\le |T_y|$, and thus $\log|T_0| \le \log|T_y| + 2y$ for each $y\in\{0,\dots,q\}$, which proves (\ref{height-diff}).

\noindent By~\corref{height-y-star}, we have $q \leq \left\lceil\frac{\log|T_0|}{k-2}\right\rceil$.
Note that, for each $y\in\{0,\dots,q\}$,  we have
\begin{align}
q
&\le \tfrac{\log |T_0|}{k-2}+1\le
\tfrac{\log |T_y|+2y}{k-2}+1 \le \tfrac{\log |T_y|+2q}{k-2}+1,
&\text{(by (\ref{height-diff}), and since $y\leq q$)}\nonumber\\
\intertext{and rewriting this yields (using that $k\geq 5$)}
q
&\le \tfrac{\log |T_y|}{k-4}+ \tfrac{k-2}{k-4} \leq \tfrac{k}{k-4}\cdot\tfrac{\log |T_y|}{k} + \tfrac{k-2}{k-4}\nonumber\\
&\leq 5\cdot\tfrac{\log |T_y|}{k} + 3= \Oh(k^{-1}\log|T_y|), \nonumber
\intertext{which proves (\ref{ystar-bound}). Now (\ref{bulktree-height-item-i}) follows as for each $y\in\{0,\dots,q\}$, we have}
h(T_y)
& \le h(T_0) + 2y \nonumber
&\text{(by~\lemref{bulktree-old-B-properties}\itemref{lemma-item-B1})}\\
&= \log|T_0| + 2y + r_0  \nonumber
&\text{(by definition of $r_0$)}\\
&\le \log |T_y| + 4y + r_0
&\text{(by (\ref{height-diff}))} \nonumber\\
&\le \log |T_y| + 4q + r_0
&\text{(since $y\le q$)}\nonumber\\
&= \log |T_y| + \Oh(k^{-1}\log|T_y|) + r_0.
&\text{(by (\ref{ystar-bound}))}\nonumber\\
\intertext{(\ref{bulktree-height-item-ii}) follows from}
h(T_{q})
& \le (k+1)q
&\text{(since the sequence is complete)}\nonumber\\
&\le (k+1)\left(\tfrac{\log|T_0|}{k-2} + 1\right)\nonumber
&\text{(by \corref{height-y-star})} \\
& = \left(\tfrac{k+1}{k-2}\right)\cdot\log|T_0| + (k+1)\nonumber\\
& \le \left(\tfrac{k+1}{k-2}\right)\cdot(\log|T_{q}| + 2q) + (k+1)
&\text{(by (\ref{height-diff}))}\nonumber\\
& = \log|T_{q}| + \tfrac{3}{k-2}\log|T_{q}| + 2\cdot\tfrac{k+1}{k-2}q + k+1\nonumber\\
& = \log|T_{q}| + \Oh(k+k^{-1}\log |T_{q}|).
&\text{(as $\tfrac{k+1}{k-2}\leq\tfrac{6}{3}$, and by (\ref{ystar-bound}))}\nonumber
\end{align}
\end{proof}

The following lemma shows that trees in a bulk tree sequence are balanced at all times:

\begin{lem}\lemlabel{bulk-tree-height}
  Let $T_1,\dots,T_h$ be a $k$-bulk tree sequence and let $y\in\{1,\dots,h\}$.
  Then
\[
h(T_y)\le \log|T_y| + \Oh(k+k^{-1}\log|T_y|).
\]
\end{lem}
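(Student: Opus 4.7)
The plan is to reduce to \lemref{bulktree-height-i}. Fix $y\in\{1,\dots,h\}$ and let $j$ be the index with $h_j\le y\le h_{j+1}$, so that $T_y$ sits inside the $j$-th one-phase $k$-bulk tree sub-sequence $T_{h_j},T_{h_j+1},\dots,T_{h_{j+1}}$. Set $r_0 := h(T_{h_j}) - \log|T_{h_j}|$.

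A key preliminary observation is that items (i)--(iii) of \lemref{bulktree-height-i} never actually invoke the completeness hypothesis: their proofs rely only on \lemref{bulktree-old-B-properties}, \lemref{insertion-size}, \lemref{deletion-size}, and \corref{height-y-star}, none of which require the phase to be complete. It is only item (iv) that uses completeness, via the inequality $h(T_q)\le(k+1)q$. Hence (iii) may be applied to our $T_y$ regardless of whether the $j$-th phase happens to be complete, giving
\[
h(T_y) \;\le\; \log|T_y| + r_0 + \Oh(k^{-1}\log|T_y|).
\]

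What remains is to bound $r_0$. If $j=1$, then $T_{h_1}=T_1$ is perfectly balanced by definition of a $k$-bulk tree sequence, so $r_0\le 0$ and we are done. If $j>1$, the preceding phase $T_{h_{j-1}},\dots,T_{h_j}$ is complete (since $j-1\le\ell-2$), so item (iv) of \lemref{bulktree-height-i} applied to it yields $r_0 = \Oh(k + k^{-1}\log|T_{h_j}|)$. To re-express this in terms of $|T_y|$, I apply items (i) and (ii) of \lemref{bulktree-height-i} to the current $j$-th phase: $|T_{h_j}|\le 4^{\,y-h_j}\cdot|T_y|$ with $y-h_j \le h_{j+1}-h_j = \Oh(k^{-1}\log|T_y|)$, so $\log|T_{h_j}| = \log|T_y| + \Oh(k^{-1}\log|T_y|)$. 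Substituting, $r_0 = \Oh(k + k^{-1}\log|T_y|)$, and combining with the displayed inequality gives the desired bound.

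The one subtle point, and the main potential obstacle, is the circular-looking invocation of \lemref{bulktree-height-i}(i)--(iii) on a sub-sequence that is not presumed complete; this is resolved by the observation above that completeness is simply unused in those arguments. An alternative organization would be to induct on $j$, redoing the short three-line derivation that produces (iii) inside each phase, thereby avoiding any appeal to lemmas stated under a hypothesis that is stronger than needed.
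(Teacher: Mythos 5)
Your proof is correct and takes essentially the same route as the paper: reduce to \lemref{bulktree-height-i} by applying item (iii) to the phase containing $T_y$, then bound the additive term $r_0$ by applying item (iv) to the preceding phase (which is complete), using items (i) and (ii) to transfer from $\log|T_{h_j}|$ to $\log|T_y|$. Your explicit remark that items (i)--(iii) of \lemref{bulktree-height-i} never use the completeness hypothesis is a useful clarification — the paper invokes (iii) on the final (possibly incomplete) phase without comment, and your observation closes that small expository gap.
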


\begin{proof}
  By the definition of a $k$-bulk tree sequence, $T_1$ is a perfectly balanced binary tree so $h(T_1)=\lceil\log|T_1|\rceil$ and the statement is satisfied for $y=1$.
  Let $h_1,h_2,\dots,h_{\ell}$ be indices with $1 = h_1 < h_2 < \cdots < h_{\ell} = h$ such that $T_{h_j},\dots,T_{h_{j+1}}$ is a complete one-phase $k$-bulk tree sequence for each $j\in\{1,\dots,\ell-2\}$, and $T_{h_{\ell-1}},\dots,T_{h_{\ell}}$ is a one-phase $k$-bulk tree sequence.
  Let $Y := \{h_1,h_2,\dots,h_{\ell-1}\}$.
  For $y\in Y\setminus\{1\}$,
  \lemref{bulktree-height-i}(\ref{bulktree-height-item-ii}) implies that $T_y$ satisfies the conditions of the lemma.

  All that remains is to show that the conditions of the lemma are satisfied for each $y\in\{1,\dots,h\}\setminus Y$. To show this, let $y_0 := \max\{ y'\in Y: y'<y \}$.  That is, $T_{y_0}$ is the tree that began the one-phase $k$-bulk tree sequence in which $T_y$ takes part.
  In this case, \lemref{bulktree-height-i}(\ref{bulktree-height-item-i}) implies that
  \[  h(T_y) \le \log |T_y| + \Oh(k^{-1}\log |T_y|) + h(T_{y_0})-\log|T_{y_0}|.\]
  Thus, all that is required is to show that $r_0 := h(T_{y_0})-\log|T_{y_0}|\in \Oh(k+k^{-1}\log|T_y|)$ so that is what we do.
  Note that by~\lemref{bulktree-height-i}(\ref{ystar-bound})
  we have $y-y_0 = \Oh(k^{-1}\log|T_y|)$.
  \begin{align*}
    r_0 &= h(T_{y_0})-\log|T_{y_0}| \\
       &= \Oh(k + k^{-1}\log|T_{y_0}|)
        & \text{(by \lemref{bulktree-height-i}(\ref{bulktree-height-item-ii}))}\\
       &= \Oh(k + k^{-1}(\log|T_{y}| + 2(y-y_0)))
        & \text{(by \lemref{bulktree-height-i}(\ref{height-diff}))} \\
       &= \Oh(k + k^{-1}\log|T_{y}| + k^{-2}\log|T_{y}|) & \text{(by \lemref{bulktree-height-i}(\ref{ystar-bound}))} \\
       &= \Oh(k + k^{-1}\log|T_{y}|).  &  & \qedhere
  \end{align*}
\end{proof}

\subsection{Transition Codes for Nodes}
\seclabel{node-transitions}

We now arrive at the \emph{raison d'être} of bulk tree sequences:  For two consecutive trees $T_y$ and $T_{y+1}$ in a bulk tree sequence and any $z\in V(T_y)\cap V(T_{y+1})$, the signatures $\sigma_{T_y}(z)$ and $\sigma_{T_{y+1}}(z)$ are so closely related that $\sigma_{T_{y+1}}(z)$ can be derived from $\sigma_{T_y}(z)$ and a short \emph{transition code} $\nu_y(z)$.  The following two lemmas make this precise.

\begin{lem}\lemlabel{node-bal-transitions}
  There exists a function $B:(\{0,1\}^*)^2\to\{0,1\}^*$ such that,
  for every binary search tree $T$, for any integers $\theta$ and $k$ with $1\leq \theta \leq h(T)$ and $k\geq1$, the following holds.
  Let $T'$ be the binary search tree obtained by an application of $\textsc{BulkBalance}(\theta,k)$ to $T$.
  For each $z\in V(T)$, there exists $\nu(z)\in\{0,1\}^*$ with $|\nu(z)| = \Oh(k\log h(T))$ such that $B(\sigma_{T}(z), \nu(z)) = \sigma_{T'}(z)$.
\end{lem}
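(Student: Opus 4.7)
My plan is to have $\nu(z)$ encode the precise structural change that $\textsc{Balance}$ induces on the path from the root to $z$, and to have $B$ implement the inverse decoding. Because $\textsc{BulkBalance}(\theta,k)$ calls $\textsc{Balance}(x,k)$ at each depth-$\theta$ node $x$ of $T$, and $\textsc{Balance}(x,k)$ only affects the subtree $T_x$, the signature of any $z$ with $d_T(z)<\theta$ is unchanged and $\nu(z)$ just needs a short case flag. Otherwise $z$ belongs to a unique $T_x$ with $d_T(x)=\theta$, and $\sigma_{T'}(z)=\sigma_T(x)\cdot\sigma_{T_x'}(z)$, where $\sigma_T(x)$ is the length-$\theta$ prefix of $\sigma_T(z)$. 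So $B$ can copy the first $\theta$ bits of $\sigma_T(z)$ (with $\theta$ itself encoded in $\nu(z)$ by an Elias code, costing $\Oh(\log h(T))$ bits), and the task reduces to describing $\sigma_{T_x'}(z)$.

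For $\sigma_{T_x'}(z)$, I would split into three cases coming from the description of $\textsc{Balance}(x,k)$. If $|V(T_x)|<2^k$, or if $z$ lies in the perfectly balanced top (so $z\in Z\cup X$), then $\sigma_{T_x'}(z)$ has length at most $k$ and is included verbatim in $\nu(z)$, using $\Oh(k)$ bits. The interesting case is $z\in V(A_\ell)$, where $\sigma_{T_x'}(z)=\sigma_{\hat T_0^+}(a_\ell)\cdot\sigma_{A_\ell}(z)$; the first factor has length at most $k+1$ and again goes in verbatim. For the second factor, observe that because $Z$ contains every node at depth $<k$ in $T_x$, the component $T_i$ of $T_x-Z$ containing $z$ is rooted at a depth-$k$ node of $T_x$, so $\sigma_{T_i}(z)$ is obtained from $\sigma_T(z)$ simply by dropping its first $\theta+k$ bits---free information for $B$. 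Now $A_\ell$ is produced from $T_i$ by a $\textsc{MultiSplit}$ on $c_i\le|X|\le 2^k-1$ keys whose divide-and-conquer recursion has depth at most $k$, and only the $\textsc{Split}$ operations along the single root-to-leaf path of that recursion which lands on $A_\ell$ can affect $z$. Hence $A_\ell$ arises from $T_i$ through a chain of at most $k$ $\textsc{Split}$ operations. By $\obsref{split-signature}$, each such $\textsc{Split}$ rewrites the signature of $z$ in its current containing tree by deleting some prefix and prepending a string from $\Pi$; using $\lemref{split-height}$ to bound intermediate tree heights by $h(T)$, both the prefix length and the $\Pi$-string fit in $\Oh(\log h(T))$ bits. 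Recording these $\le k$ rewrite rules in $\nu(z)$ therefore costs $\Oh(k\log h(T))$ bits, and $B$ replays them in order on the suffix of $\sigma_T(z)$ obtained after its first $\theta+k$ bits to recover $\sigma_{A_\ell}(z)$, then concatenates the verbatim pieces.

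The main obstacle I anticipate is verifying that the prefix-replacements compose correctly along the $\textsc{MultiSplit}$ recursion path: that after each $\textsc{Split}$, applying the next recorded rewrite to the current bit-string really yields the true signature of $z$ in the next intermediate tree, and that the prefix-length fields recorded at each stage remain consistent with the actual signature at that stage even though $B$ does not see the intermediate trees themselves. Once that is in hand, summing $\Oh(\log h(T))$ for $\theta$ and the case bits, $\Oh(k)$ for the verbatim top parts, and $\Oh(k\log h(T))$ for the $k$ rewrite rules gives the advertised bound $|\nu(z)|=\Oh(k\log h(T))$.
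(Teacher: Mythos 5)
Your proposal follows essentially the same route as the paper's proof: identify the depth-$\theta$ ancestor $x$, split into the three cases ($d_T(z)<\theta$; $z\in Z\cup X$; $z\notin Z\cup X$), use the fact that $T_i$ is rooted at depth $\theta+k$ so $\sigma_{T_i}(z)$ is the suffix of $\sigma_T(z)$ after the first $\theta+k$ bits, and then record the $\leq k+1$ prefix-rewrites from \obsref{split-signature} along the $\textsc{MultiSplit}$ recursion, bounding each by $\Oh(\log h(T))$ bits via \lemref{split-height}. The composition worry you flag at the end is exactly what \obsref{split-signature} (item 3 in the paper's proof) guarantees, and the paper uses it the same way without further elaboration.
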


\begin{proof}
  Recall that by~\lemref{balance-depth}, $h(T') \leq h(T)+1$.
  Recall also that $\gamma:\mathbb{N}\to\{0,1\}^*$ is a prefix-free encoding of the natural numbers such that $|\gamma(i)|=\Oh(\log i)$, for every natural number $i$ as in~\lemref{elias}.

  $\textsc{BulkBalance}(\theta,k)$ calls $\textsc{Balance}(x,k)$ for each node $x$ of depth $\theta$  in $T$.
  Recall that the changes caused by $\textsc{Balance}(x,k)$ are limited to the subtree of $T$ rooted at $x$.
  Thus, $\sigma_{T}(z)$ can be affected by $\textsc{Balance}(x,k)$ only if $x$ is a $T$-ancestor of $z$.
  In particular when $|\sigma_{T}(z)| < \theta$, the signature of $z$ does not change, that is, $\sigma_{T}(z)=\sigma_{T'}(z)$. In this case,  we define
  \[
  \nu(z):=\gamma(\theta).
  \]
  Note that in this case $|\nu(z)|=\Oh(\log\theta) = \Oh(\log h(T))$.

  Assume now that $|\sigma_{T}(z)| \geq \theta$ and let $x$ be the $T$-ancestor of $z$ at depth $\theta$.
  Recall that the application of $\textsc{Balance}(x,k)$ first identifies two sets of nodes $Z$ and $X$ that eventually form a perfectly balanced tree $\hat{T}_0$ of height at most $k$ which forms the top part of the subtree that replaces the subtree of $x$ in $T$.
  This means that if $z\in Z\cup X$ then $\sigma_{T'}(z) = \sigma_{T}(x),\sigma_{\hat{T_0}}(z)$.
  In this case, we define
  \[
  \nu(z):=\gamma(\theta),0,\gamma(|\sigma_{\hat{T_0}}(z)|),\sigma_{\hat{T_0}}(z).
  \]
  Note that in this case $|\nu(z)|=\Oh(\log\theta) + \Oh(\log k) + \Oh(k) = \Oh(\log h(T) + k)$.

  Now we are left with the case that $|\sigma_{T}(z)|\geq \theta$ and $z\not\in Z\cup X$. In particular, the node $z$ is in some tree $T_i$ of the forest $T_x - Z$ where $T_x$ is the subtree of $T$ rooted at $x$.
  (Further on we reuse the notations $T_i$, $T_{i,0},\dots,T_{i,c_i}$, etc.\ introduced in the definition of $\textsc{Balance}(x,k)$.)
  Recall that $\textsc{Balance}(x,k)$ calls $\textsc{MultiSplit}(x_{i,1},\dots,x_{i,c_i})$ on $T_i$ to obtain a sequence of trees $T_{i,0},\dots,T_{i,c_i}$ and the node $z$ ends up in one of these trees, say in $T_{i,a}$.
  Note that
  \begin{compactenum}[(i)]
  \item $\sigma_{T_i}(z)$ is a suffix of $\sigma_{T}(z)$;
  \item the application of $\textsc{MultiSplit}(x_{i,1},\dots,x_{i,c_i})$ to $T_i$ calls $\textsc{Split}(x_{
  \lceil c_i/2\rceil})$ (given $c_i>0$) to obtain two trees $T_{<x_{\lceil c_i/2\rceil}}$ and $T_{>x_{\lceil c_i/2\rceil}}$, and then recursively calls $\textsc{MultiSplit}(x_{i,1},\dots,x_{i,\lceil c_i/2\rceil-1})$  on $T_{<x_{i,\lceil c_i/2\rceil}}$ and $\textsc{MultiSplit}(x_{i,\lceil c_i/2\rceil+1},\dots,x_{i,c_i})$  on $T_{>x_{i,\lceil c_i/2\rceil}}$;
  the node $z$ lies in one of the trees $T_{<x_{i,\lceil c_i/2\rceil}}$, $T_{>x_{i,\lceil c_i/2\rceil}}$ and by~\obsref{split-signature} the signature of $z$ in the new tree can be obtained from $\sigma_{T_i}(z)$ by deleting a prefix and replacing it with one of the $4h(T_i)$ strings in $\bigcup_{j=0}^{h(T_i)-1}\{0^j,0^j1,1^j,1^j0\}$;
  \item the application of $\textsc{MultiSplit}(x_{i,1},\dots,x_{i,c_i})$ thus defines a sequence of trees starting with $T_i$ and ending with $T_{i,a}$ that all contain $z$;
  by~\lemref{split-height} the height of each of these trees is at most $h(T_i)$;
  the signature of $z$ in these trees, which is $\sigma_{T_i}(z)$ at the beginning, undergoes at most $1+\log c_i \leq 1+k$ changes
  before becoming $\sigma_{T_{i,a}}(z)$; let $b$ denote the number of these changes and, for each $j\in\{1,\dots,b\}$, let $d_j$ be the length of the prefix of the signature being deleted during the $j$-th change and let $q_j\in \{1, \dots, 4h(T_i)\}$ be a number identifying the string that this prefix is replaced with during the $j$-th change (here we use that all the trees in the sequence have height at most $h(T_i)$).
  \end{compactenum}

  Finally, $\textsc{Balance}(x,k)$ replaces the external nodes of $\hat{T_0}$ with the trees output by \textsc{MultiSplit}$(x_{i,1},\dots,x_{i,c_i})$.
  Let $z'$ be the external node of $\hat{T_0}$ that is replaced with $T_{i,a}$.
  Therefore, the signature of $z$ in $T'$ is the concatenation of $\sigma_{T}(x)$, $\sigma_{\hat{T_0}}(z')$, and $\sigma_{T_{i,a}}(z)$.
  We define $\nu(z)$ in this case as follows.
    \[
  \nu(z) := \gamma(\theta),1,\gamma(k),\gamma(|\sigma_{\hat{T_0}}(z')|),\sigma_{\hat{T_0}}(z'),\gamma(b),
  \gamma(d_1),\gamma(q_1),\dots,\gamma(d_b),\gamma(q_b).
  \]
  Note that in this case $|\nu(z)|=\Oh(\log\theta) + \Oh(\log k) + \Oh(k) + \Oh(\log k) + \Oh(2k\cdot\log h(T))=\Oh(k\log h(T))$.
  This completes the definition of $\nu(z)$.

  The function $B$ is defined as expected:
  Given $\sigma_{T}(z)$ and $\nu(z)$, the function $B$ first decodes $\theta$ and checks whether $|\sigma_{T}(z)| < \theta$. If this is the case then the signatures of $z$ in $T$ and $T''$ are the same, so $B$ outputs $\sigma_{T}(z)$.
  If $|\sigma_{T}(z)| \geq \theta$ then $B$ reads the next bit of $\nu(z)$.
  If it is $0$ then this corresponds to the case where $z\in Z\cup X$ described above, and the information encoded after is enough to recover and output $\sigma_{T'}(z)=\sigma_{T}(x),\sigma_{\hat{T_0}}(z)$.
  If the bit under consideration was $1$, then this corresponds to the case where $z\not\in Z\cup X$.
  Recall that the set $Z$ are just all nodes in $T_x$ of depths less than $k$.
  Thus, just removing first $\theta+k$ bits of $\sigma_{T}(z)$, the function $B$ obtains $\sigma_{T_i}(z)$ where $T_i$ is the aforementioned subtree of $T_x-Z$ containing $z$.
  The function $B$ then reads the rest of the information $\nu(z)$,
  which allows it to follow all the changes made to $\sigma_{T_i}(z)$ until it becomes $\sigma_{T_{i,a}}(z)$ of the signature that are described above.
  This way again $B$ computes and outputs $\sigma_{T'}(z) = \sigma_{T}(x),\sigma_{\hat{T_0}}(z'),\sigma_{T_{i,a}}(z)$.
  This completes the definition of $B$ and the proof of the lemma.
\end{proof}

\begin{lem}\lemlabel{node-transitions}
  There exists a function $B':(\{0,1\}^*)^2\to\{0,1\}^*$ such that, for each $k$-bulk tree sequence $T_1,\dots,T_h$, each $y\in\{1,\dots,h-1\}$, and each $z\in V(T_y)\cap V(T_{y+1})$, there exists $\nu_y(z)\in\{0,1\}^*$ with $|\nu_y(z)| = \Oh(k\log h(T_y))$ such that $B'(\sigma_{T_y}(z), \nu_y(z)) = \sigma_{T_{y+1}}(z)$.
\end{lem}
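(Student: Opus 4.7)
The plan is to decompose the transition from $T_y$ to $T_{y+1}$ into the three operations given by the definition of a one-phase $k$-bulk tree sequence and to handle each one separately. Let $j$ be the phase index such that $y,y+1\in\{h_j,\dots,h_{j+1}\}$, so that $T_{y+1}$ is obtained from $T_y$ by applying, in order, $\textsc{BulkBalance}(\theta,k)$ with $\theta := (y-h_j)(k+1)$ (yielding a tree $T'$), then $\textsc{BulkInsert}(I_y)$ with $I_y := V(T_{y+1})\setminus V(T_y)$ (yielding $T''$), then $\textsc{BulkDelete}(D_y)$ with $D_y := V(T_y)\setminus V(T_{y+1})$ (yielding $T_{y+1}$). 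Since $z\in V(T_y)\cap V(T_{y+1})$, the node $z$ is present throughout; in particular $z\notin I_y$ and $z\notin D_y$.

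For the $\textsc{BulkBalance}$ step I would invoke \lemref{node-bal-transitions} directly: it supplies a self-delimiting bitstring $\mu(z)$ of length $\Oh(k\log h(T_y))$ and a function $B$ with $B(\sigma_{T_y}(z),\mu(z))=\sigma_{T'}(z)$. For the $\textsc{BulkInsert}$ step no extra bits are needed: by definition, $\textsc{BulkInsert}$ attaches each new element of $I_y$ as a child of a node already present in $T'$, so the root-to-$z$ path, and hence the signature of $z$, are unchanged, giving $\sigma_{T''}(z)=\sigma_{T'}(z)$. For the $\textsc{BulkDelete}$ step, \lemref{deletion-signature} states that $\sigma_{T_{y+1}}(z)$ is a prefix of $\sigma_{T''}(z)$, so it suffices to record the length $\ell := |\sigma_{T_{y+1}}(z)|$ and truncate. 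I would encode $\ell$ using the Elias code $\gamma(\ell)$, which takes $\Oh(\log \ell)$ bits; since the height bounds of \lemref{balance-depth}, \lemref{insertion-depth}, and \lemref{deletion-signature} give $\ell \le h(T_{y+1})\le h(T_y)+2$, this is $\Oh(\log h(T_y))$ bits.

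Setting $\nu_y(z) := \mu(z),\gamma(\ell)$ then yields a code of length $\Oh(k\log h(T_y))$, as required. The decoder $B'$ first parses $\nu_y(z)$ into its two components---possible because $\mu(z)$ is self-delimiting by construction (it begins with the Elias-coded prefix $\gamma(\theta)$ followed by further self-delimiting parts) and $\gamma$ is prefix-free---then applies $B$ to $\sigma_{T_y}(z)$ and $\mu(z)$ to obtain $\sigma_{T'}(z) = \sigma_{T''}(z)$, and finally truncates this string to length $\ell$ to produce $\sigma_{T_{y+1}}(z)$.

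The only step requiring genuine verification is that $\textsc{BulkInsert}$ preserves every pre-existing signature; this is immediate from the insertion procedure, which hangs each new value off an external node of $T^+$ under an existing parent and alters no existing edge. The remaining obstacle is purely one of bookkeeping: making sure that the concatenation $\mu(z),\gamma(\ell)$ can be unambiguously parsed, which is granted for free by the self-delimiting structure already built into $\mu(z)$ in the proof of \lemref{node-bal-transitions}.
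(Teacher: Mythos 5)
Your proof is correct and mirrors the paper's own argument: the same three-step decomposition, invoking \lemref{node-bal-transitions} for the $\textsc{BulkBalance}$ step, observing that $\textsc{BulkInsert}$ leaves existing signatures untouched, and using \lemref{deletion-signature} together with an Elias-coded target length for the $\textsc{BulkDelete}$ step. The one point of divergence is bookkeeping: the paper additionally prepends $\gamma(|\nu(z)|)$ to the bulk-balance transition code so that it is length-delimited independently of any other input, whereas you rely on $\nu(z)$ being parseable on its own. That reliance is sound but subtler than your phrasing suggests: $\nu(z)$ is \emph{not} self-delimiting in isolation (in Case~1 of \lemref{node-bal-transitions} it equals $\gamma(\theta)$, which is a strict prefix of the Case~2 and Case~3 encodings), so $B'$ must also consult $\sigma_{T_y}(z)$ — comparing $|\sigma_{T_y}(z)|$ against the decoded $\theta$ to decide whether to stop reading after $\gamma(\theta)$ or continue — in order to locate the boundary between $\mu(z)$ and $\gamma(\ell)$. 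Since $B'$ does hold $\sigma_{T_y}(z)$, this is fine; the paper's explicit length prefix costs only $\Oh(\log(k\log h(T_y)))$ extra bits and simply sidesteps that delicacy.
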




\begin{proof}
  Let $T_1,\dots,T_h$ be a $k$-bulk tree sequence and let $y\in\{1,\dots,h-1\}$. Let $I := V(T_{y+1})\setminus V(T_y)$ and $D := V(T_y) \setminus V(T_{y+1})$.
  Recall that the transformation of $T_{y}$ into $T_{y+1}$ occurs in three steps:
  applying $\textsc{BulkBalance}(\theta,k)$ to $T_y$ with the appropriate value of $\theta$ to obtain $T'$,
  applying $\textsc{BulkInsert}(I)$ to $T'$ to obtain $T''$, and
  applying $\textsc{BulkDelete}(D)$ to $T''$ to obtain $T_{y+1}$.
  Recall that whenever $\textsc{BulkBalance}(\theta,k)$ is applied in this context we have $\theta < h(T_y)$.

  By \lemref{deletion-signature}, \lemref{insertion-depth}, and \lemref{balance-depth} we have  $h(T_{y+1}) \leq h(T'') \leq h(T')+2 \leq h(T_y)+3$.
  Thus the heights of all these trees are $\Oh(h(T_y))$.


  Given a node $z$ appearing in both $T_y$ and $T_{y+1}$ we are going to describe $\nu_y(z)$.
  The transition code $\nu_y(z)$ consists of two parts $\nu_y^{\textsc{Bal}}(z)$ and $\nu_y^{\textsc{Del}}(z)$ devoted to different steps of the transformation from $T_y$ to $T_{y+1}$.

  The first part $\nu_y^{\textsc{Bal}}(z)$ is simply defined as
  \[
  \nu_y^{\textsc{Bal}}(z) := \gamma(|\nu(z)|),\nu(z) \enspace ,
  \]
  where $\nu(z)$ is given by an application of~\lemref{node-bal-transitions} with $T=T_y$.

  Recall that the bulk insertion of new nodes in $T'$ does not affect the signature of existing nodes in the tree, since new nodes are inserted at the leaves of $T'$.

  We next describe $\nu^{\textsc{Del}}_y(z)$ that serves to reconstruct $\sigma_{T_{y+1}}(z)$ from $\sigma_{T''}(z)$.
  This turns out to be fairly easy.
  By~\lemref{deletion-signature} we have that $\sigma_{T_{y+1}}(z)$ is just a prefix of $\sigma_{T''}(z)$.
  Therefore, it is enough to define
  \[
  \nu^{\textsc{Del}}_y(z):= \gamma(|\sigma_{T_{y+1}}(z)|).
  \]

  Finally, we define $\nu_y(z)$ to be the concatenation of $\nu^{\textsc{Bal}}_y(z)$ and $\nu^{\textsc{Del}}_y(z)$.
  It follows that $|\nu_y(z)| = \Oh(k\log h(T_y))$.

  The function $B'$ is defined as expected:
  Given $\sigma_{T_{y}}(z)$ and $\nu_y(z)$, the function $B'$ first decodes $\nu(z)$ and computes
  $B(\sigma_{T_{y}}(z),\nu(z))=\sigma_{T'}(z)=\sigma_{T''}(z)$.
  Then $B'$ decodes $|\sigma_{T_{y+1}}(z)|$ and computes a prefix of this size of $\sigma_{T''}(z)$.
  As we have seen, the prefix is $\sigma_{T_{y+1}}(z)$, which is output by $B'$.
\end{proof}

\section{Subgraphs of $P\boxtimes P$}
\seclabel{pxp}

Before continuing, we show that using the techniques developed thus far, we can already solve a non-trivial special case.  In particular, we consider the case in which $G$ is an $n$-vertex subgraph of $P_1\boxtimes P_2$ where $P_1$ is a path on $m$ vertices and $P_2$ is a path on $h$ vertices.  Thus, we identify each vertex of $G$ with a point $(x,y)\in\{1,\dots,m\}\times \{1,\dots,h\}$ in the $m\times h$ grid with diagonals, and $G$ is just a subgraph of this grid, see \figref{pxp}.
Obviously, we may assume that $m\leq n$ and $h\leq n$.

Our motivation for considering this special case is expository: The vertices of $P_1$ are integers $1,\dots,m$ that can be stored directly in a binary search tree. This makes it easier to understand the role that bulk tree sequences play in our solution.  The extension of this solution to subgraphs of $H\boxtimes P$, which is the topic of \secref{hxp}, uses exactly the same ideas but requires another level of indirection since there is no natural mapping from the vertices of $H$ onto real numbers.

\begin{figure}
  \begin{center}
    \includegraphics{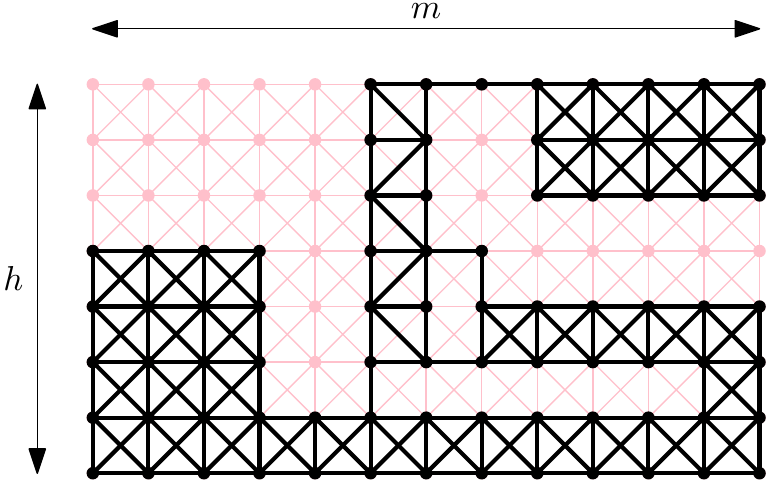}
  \end{center}
  \caption{The special case where $G$ is a subgraph of $P_1\boxtimes P_2$.}
  \figlabel{pxp}
\end{figure}

\subsection{The Labels}

For each $y\in\{1,\dots,h\}$, we let
\begin{align*}
L_y&=\{x:(x,y)\in V(G)\}, \textrm{ and}\\
L^+_y&=L_y\cup\{x-1:(x,y)\in V(G)\}.
\end{align*}
Note that $\sum_{y=1}^h |L_y| = n$ and $\sum_{y=1}^h |L^+_y|\le 2n$.
Let $L^+_0 := \emptyset$.

Let $V_1,\dots,V_{h}$ be the $1$-chunking sequence obtained by applying~\lemref{fractional} to the sequence $L^+_{1}\cup L^+_{0}, \dots, L^+_{h}\cup L^+_{h-1}$. Thus for each $y\in\{1,\dots,h\}$, we have
\begin{align*}
V_y&\supseteq L^+_{y}\cup L^+_{y-1}, \textrm{ and}\\
\textstyle\sum_{y=1}^h |V_y|&\le 4\textstyle\sum_{y=1}^h |L^+_{y}\cup L^+_{y-1}|\leq 16n.
\end{align*}
Next, let $T_1,\dots,T_h$ be a $k$-bulk tree sequence based on $V_1,\dots,V_{h}$ (recall that if we fix the starting perfectly balanced binary search tree $T_1$ with vertex set $V_1$, this sequence exists and is unique).
We discuss the asymptotically optimal choice for the value of $k$ at the end of the section.
By~\lemref{bulk-tree-height}, for each $y\in\{1,\dots,h\}$, we have
\begin{align*}
h(T_y)&=\log|T_y| + \Oh(k+k^{-1}\log |T_y|)\\
&\leq\log|T_y| + \Oh(k+k^{-1}\log n).
\end{align*}

Let $A:(\{0,1\}^{*})^2\to\{0,1\}^*$ be the function, given by \lemref{row-code} such that
using the weight function $w(y):=|T_y|$ for each $y\in\{1,\dots,h\}$,
we have a prefix-free code $\alpha:\{1,\dots,h\}\to\{0,1\}^*$ such that
\begin{align*}
|\alpha(y)|&=\log\left(\textstyle\sum_{i=1}^h|T_i|\right) - \log|T_y| + \Oh(\log\log h)\\
&\leq \log n - \log|T_y| + \Oh(\log\log n),
\end{align*}
for each $y\in\{1,\dots,h\}$, and $A(\alpha(i),\alpha(j))$ outputs $0$, $1$, $-1$, or $\perp$, depending whether the value of $j$ is $i$, $i+1$, $i-1$, or some other value, respectively.

Let $B':(\{0,1\}^{*})^2\to\{0,1\}^*$ be the function, given by \lemref{node-transitions}, such that
for each $y\in\{1,\dots,h-1\}$ and each $x\in L_{y}\subseteq V(T_y)\cap V(T_{y+1})$,
there exists a code $\nu_{y}(x)$ with $|\nu_{y}(x)|=\Oh(k\log h(T_{y}))=\Oh(k\log\log n+k \log k)$ such that $B'(\sigma_{T_{y}}(x),\nu_{y}(x))=\sigma_{T_{y+1}}(x)$.

Let $D:(\{0,1\}^{*})^2\to\{0,1\}^*$ be the function, given by \obsref{predecessor-encoding}, such that for every binary search tree $T$, and every $i$ such that $i-1$ and $i$ are in $T$, there exists $\delta_T(i) \in \{0,1\}^*$ with $|\delta_T(i)|=\Oh(\log h(T))$ such that
$D(\sigma_T(i),\delta_T(i))=\sigma_T(i-1)$.

Finally, given a vertex $v=(x,y)$ of $G$, we define
an array $a(v)$ of $8$ bits indicating whether each of the edges between $(x,y)$ and $(x\pm 1,y\pm 1)$ are present in $G$. Note that some of these $8$ vertices may not even be present in $G$ in which case the resulting bit is set to $0$ since the edge is not present in $G$.

Now, in the labelling scheme for $G$, each vertex $v=(x,y)\in V(G)$ receives a label that is the concatenation of the following bitstrings:
\begin{compactenum}[(P1)]
\item $\alpha(y)$;
\item $\gamma(|\sigma_{T_y}(x)|)$, $\sigma_{T_y}(x)$;
\item $\delta_{T_y}(x)$;
\item
if $y\neq h$ then $1,\delta_{T_{y+1}}(x)$;\\
if $y=h$ then $0$;
\item
if $y\neq h$ then $1,\nu_y(x)$;\\
if $y=h$ then $0$; and
\item $a(v)$.
\end{compactenum}
Two major components of this label are $\alpha(y)$, of length $\log n - \log|T_y| + \Oh(\log\log n)$,
and $\sigma_{T_y}(x)$, of length $\log|T_y| + \Oh(k+k^{-1}\log n)$.
Together they have length $\log n + \Oh(k+k^{-1}\log n+\log\log n)$.
The lengths of the remaining components are as follows:
$\gamma(|\sigma_{T_y}(x)|)$,
$\delta_{T_y}(x)$, and $\delta_{T_{y+1}}(x)$
 have length $\Oh(\log\log n + \log k)$, $\nu_y(x)$ has length $\Oh(k\log\log n + k\log k)$, and $a(v)$ has length $\Oh(1)$.
Thus, in total the label has length  $\log n+ \Oh(k\log\log n + k\log k + k^{-1}\log n)$.

\subsection{Adjacency Testing}

First note that from a given label of $v=(x,y) \in V(G)$, we can decode each block of the label.
This is because $\alpha(y)$ is prefix-free, $\gamma(|\sigma_{T_y}(x)|)$ is prefix-free so when we read it we know how long is $\sigma_{T_y}(x)$ and we can isolate it as well.
The $\delta$-codes are prefix-free again and $\nu_y(x)$ can be decoded as outlined in the proof of \lemref{node-transitions}.
Finally, the last $8$-bits correspond to $a(v)$.

Given the labels of two vertices $v_1=(x_1,y_1)$ and $v_2=(x_2,y_2)$ in $G$ we can test if they are adjacent as follows.

Looking up the value of $A(\alpha(y_1),\alpha(y_2))$, we determine which of the following applies:
\begin{enumerate}
  \item $|y_1-y_2|\ge 2$: In this case we immediately conclude that $v_1$ and $v_2$ are not adjacent in $G$ since they are not adjacent even in $P_1\boxtimes P_2$. 

  \item $y_1=y_2$: In this case, let $y:=y_1=y_2$.
  If the two bitstrings $\sigma_{T_y}(x_1)$, $\sigma_{T_y}(x_2)$ are the same,
  we conclude that $x_1=x_2$ and $y_1=y_2$, so $v_1=v_2$ and we should output that they are not adjacent.
  Otherwise, we
  lexicographically compare $\sigma_{T_y}(x_1)$ and $\sigma_{T_y}(x_2)$.
  Without loss of generality, $\sigma_{T_y}(x_1)$ is smaller than $\sigma_{T_y}(x_2)$.
  Therefore, by \obsref{lexicographic}, $x_1<x_2$.
  Recall that $x_2 \in L_y$ and $L^+_y\subseteq V(T_y)$, so $x_2-1 \in V(T_y)$.
  We compute $D(\sigma_{T_y}(x_2),\delta_{T_y}(x_2))=\sigma_{T_y}(x_2-1)$. If $\sigma_{T_y}(x_2-1)\neq \sigma_{T_y}(x_1)$, then we immediately conclude that $x_2 < x_1-1$, so $v_1$ and $v_2$ are not adjacent in $G$, since they are not adjacent even in $P_1\boxtimes P_2$.  Otherwise, we know that
  $v_1=(x_2-1,y)$ and $v_2=(x_2,y)$ are adjacent in $P_1\boxtimes P_2$.
  Now we use the relevant bit of $a(v_1)$ (or $a(v_2)$) to determine if $v_1$ and $v_2$ are adjacent in $G$.

  \item $y_1=y_2-1$:
  In this case, we compute $B'(\sigma_{T_{y_1}}(x_1), \nu_{y_1}(x_1))=\sigma_{T_{y_2}}(x_1)$.
  Let $y:=y_2$.
  If the two bitstrings $\sigma_{T_y}(x_1)$, $\sigma_{T_y}(x_2)$ are the same,
  we conclude that $x_1=x_2$. Thus $v_1=(x_1,y-1)$ and $v_2=(x_1,y)$ are adjacent in $P_1\boxtimes P_2$.
  Now we look up the relevant bit of $a(v_1)$ (or $a(v_2)$) to determine
  if $v_1$ and $v_2$ are adjacent in $G$.
  Otherwise, we lexicographically compare $\sigma_{T_y}(x_1)$ and $\sigma_{T_y}(x_2)$.
  If $\sigma_{T_y}(x_1)$ is smaller than $\sigma_{T_y}(x_2)$, then we conclude that $x_1<x_2$.
  Recall that $x_2 \in L_y$ and $L^+_y\subseteq V(T_y)$, so $x_2-1 \in V(T_y)$.
  We compute $D(\sigma_{T_y}(x_2),\delta_{T_y}(x_2))=\sigma_{T_y}(x_2-1)$.
  If $\sigma_{T_y}(x_2-1)\neq \sigma_{T_y}(x_1)$, then we immediately conclude that $v_1$ and $v_2$ are not adjacent in $G$, since they are not adjacent even in $P_1\boxtimes P_2$.  Otherwise, we know that
  $v_1=(x_2-1,y-1)$ and $v_2=(x_2,y)$ are adjacent in $P_1\boxtimes P_2$.
  Now we use the relevant bit of $a(v_1)$ (or $a(v_2)$) to determine if $v_1$ and $v_2$ are adjacent in $G$.
  If $\sigma_{T_y}(x_1)$ is larger than $\sigma_{T_y}(x_2)$, then we conclude that $x_1>x_2$.
  Recall that $x_1 \in L_{y-1}$ and $L^+_{y-1}\subseteq V(T_y)$, so $x_1-1 \in V(T_y)$.
  We compute $D(\sigma_{T_{y}}(x_1),\delta_{T_{y}}(x_1))=\sigma_{T_y}(x_1-1)$.
  If $\sigma_{T_y}(x_1-1)\neq \sigma_{T_y}(x_2)$, then we immediately conclude that $v_1$ and $v_2$ are not adjacent in $G$, since they are not adjacent even in $P_1\boxtimes P_2$.
  Otherwise, we know that
  $v_1=(x_1,y-1)$ and $v_2=(x_1-1,y)$ are adjacent in $P_1\boxtimes P_2$.
  Now we use the relevant bit of $a(v_1)$ (or $a(v_2)$) to determine if $v_1$ and $v_2$ are adjacent in $G$.
  \item $y_2=y_1-1$: In this case, we compute $B'(\sigma_{T_{y_2}}(x_2),\nu_{y_2}(x_2))=\sigma_{T_{y_1}}(x_2)$.  Now we proceed as in the previous case with the roles of $v_1$ and $v_2$ reversed.
\end{enumerate}

This establishes our first result:

\begin{thm}\thmlabel{pxp}
  The family $\mathcal{G}$ of $n$-vertex subgraphs of a strong product $P\boxtimes P$ where $P$ is a path has a $(1+o(1))\log n$-bit adjacency labelling scheme.
\end{thm}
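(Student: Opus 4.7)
The plan is to view $G$ as a subgraph of the grid-with-diagonals on $\{1,\dots,m\}\times\{1,\dots,h\}$ (with $m,h\leq n$), so that $(x_1,y_1)$ and $(x_2,y_2)$ are adjacent in $P_1\boxtimes P_2$ iff $|y_1-y_2|\leq 1$ and $|x_1-x_2|\leq 1$. Whether such a grid-edge actually belongs to $G$ will be recorded trivially: each vertex $v$ carries an $8$-bit mask $a(v)$ indicating which of its up to eight grid-neighbours are $G$-neighbours. The two remaining conditions split into a \emph{row} test (comparing $y$-coordinates, handled by \lemref{row-code}) and an \emph{across-row} test (comparing $x$-coordinates, which requires the bulk tree machinery of \secref{bulk-trees}).

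For the $x$-test I would set $L_y := \{x:(x,y)\in V(G)\}$ and $L_y^+ := L_y\cup\{x-1:x\in L_y\}$, so that $\sum_y|L_y^+|\leq 2n$. Applying \lemref{fractional} with $a=1$ to the sequence $L_1^+\cup L_0^+,\dots,L_h^+\cup L_{h-1}^+$ produces a $1$-chunking sequence $V_1,\dots,V_h$ of total size $\Oh(n)$ with $V_y\supseteq L_y^+\cup L_{y-1}^+$, and I then take the $k$-bulk tree sequence $T_1,\dots,T_h$ built on $V_1,\dots,V_h$ starting from a perfectly balanced tree on $V_1$. By \lemref{bulk-tree-height} each $T_y$ satisfies $h(T_y)=\log|T_y|+\Oh(k+k^{-1}\log n)$. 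Feeding the weights $w(y):=|T_y|$ into \lemref{row-code} yields a row label $\alpha(y)$ of length $\log n-\log|T_y|+\Oh(\log\log n)$, together with a decoder $A$ that returns the exact value of $y_2-y_1$ whenever $|y_2-y_1|\leq 1$. The label of $v=(x,y)$ then concatenates $\alpha(y)$, the signature $\sigma_{T_y}(x)$ (preceded by its Elias-coded length), the predecessor-deltas $\delta_{T_y}(x)$ and $\delta_{T_{y+1}}(x)$ from \obsref{predecessor-encoding}, the transition code $\nu_y(x)$ from \lemref{node-transitions} that carries $\sigma_{T_y}(x)$ to $\sigma_{T_{y+1}}(x)$, and finally $a(v)$. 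Taking $k=\lceil\sqrt{\log n/\log\log n}\,\rceil$ makes the total label length $\log n+\Oh(k\log\log n+k\log k+k^{-1}\log n)=(1+o(1))\log n$.

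To answer an adjacency query for $(x_1,y_1),(x_2,y_2)$, first call $A(\alpha(y_1),\alpha(y_2))$ to recover $y_2-y_1\in\{-1,0,1\}$ (rejecting immediately if $A$ returns $\perp$). If $y_1\neq y_2$, apply the decoder $B'$ from \lemref{node-transitions} with $\nu_{\min(y_1,y_2)}$ to move the smaller-row signature into the tree $T_y$ for $y:=\max(y_1,y_2)$. Now both $x$-values have signatures in the same tree; \obsref{lexicographic} orders them, and one call to the $\delta$-decoder of \obsref{predecessor-encoding} recovers the signature of the predecessor of the larger $x$. Equality of that signature with the signature of the smaller $x$ certifies $|x_1-x_2|\leq 1$, after which the relevant bit of $a(v)$ decides actual $G$-adjacency. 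This step is valid precisely because the inclusion $L_y^+\cup L_{y-1}^+\subseteq V(T_y)$, built into the construction via \lemref{fractional}, guarantees that the required predecessor is actually present as a node of $T_y$. The main obstacle, and the reason the whole of \secref{bulk-trees} is needed, is that signatures in $T_y$ and $T_{y+1}$ are \emph{a priori} unrelated; bulk tree sequences are engineered so that a carefully interleaved schedule of bulk inserts, bulk deletes, and periodic \textsc{BulkBalance} operations keeps consecutive trees close enough that a transition code of length $\Oh(k\log h(T_y))$ bridges them, while simultaneously holding every $h(T_y)$ within the additive $\Oh(k+k^{-1}\log n)$ slack we can afford.
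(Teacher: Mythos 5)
Your proposal matches the paper's proof essentially line for line: the same $L_y$/$L_y^+$ sets, the same application of \lemref{fractional} to obtain a $1$-chunking sequence, the same $k$-bulk tree sequence, the same weighted row code from \lemref{row-code}, the same label components $(\alpha(y),\sigma_{T_y}(x),\delta_{T_y}(x),\delta_{T_{y+1}}(x),\nu_y(x),a(v))$, and the same adjacency test that first resolves $y_2-y_1$ via $A$, then normalizes both signatures into $T_{\max(y_1,y_2)}$ via $B'$ before applying the predecessor decoder. The justification you give for why the required predecessor is present in $T_y$ (namely $L_y^+\cup L_{y-1}^+\subseteq V(T_y)$, and the fact that all stored values are integers so $x-1$ is the true predecessor of $x$ when both are present) is exactly the point the paper relies on.
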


\begin{rem}
  The $o(\log n)$ term in the label length of \thmref{pxp} is $\Oh(k\log\log n + k \log k + k^{-1}\log n)$.  An asymptotically optimal choice of $k$ is therefore $k=\max\left\{5,\left\lceil\sqrt{\log n / \log\log n}\right\rceil\right\}$, yielding labels of length $\log n + \Oh\left(\sqrt{\log n\log\log n}\right)$.
\end{rem}


\section{Subgraphs of $H\boxtimes P$}
\seclabel{hxp}

In this section we describe adjacency labelling schemes for graphs $G$ that are subgraphs of $H\boxtimes P$ where $H$ is a graph of treewidth $t$ and $P$ is a path.

Let $t$ be a positive integer.
A graph $H$ is a \emph{$t$-tree} if there is an ordering $v_1,\dots,v_m$ of $V(H)$ such that
for every $i\in\{1,\dots,m\}$, the neighbors of $v_i$ earlier in the order, i.e., $N_H(v_i) \cap \{v_1,\dots,v_{i-1}\}$ induce a clique of size at most $t$ in $H$.
(Let us emphasize that this is slightly more general than the usual definition of $t$-trees from the literature, which requires the neighbors of $v_i$ earlier in the order to be a clique of size exactly $\min\{i-1, t\}$; this broader definition will be more convenient for our purposes.)
A vertex-ordering witnessing that $H$ is a $t$-tree is called an \emph{elimination ordering}. Note that if $H$ is a $t$-tree with a given elimination ordering, then for any subset $X$ of vertices of $H$, the subgraph $H[X]$ of $H$ induced by $X$ is a $t$-tree and the restriction of the elimination ordering of $H$ to $X$ is an elimination ordering of  $H[X]$.
Every graph of treewidth $t$ is a spanning subgraph of a $t$-tree.
For this reason, we may restrict ourselves to the case $H\boxtimes P$ where $H$ is a $t$-tree, which we do.

Given a $t$-tree $H$, we fix an elimination ordering $v_1,\dots,v_m$.
For every $i\in\{1,\dots,m\}$, the \emph{family clique} $C_H(v_i)$ is defined as
$N_H[v_i]\cap \{v_1,\dots,v_{i}\}$.
Note that $v_i \in C_H(v_i)$.

\subsection{$t$-Trees and Interval Graphs}

The \emph{clique number} $\omega(G)$ of a graph $G$ is the maximum size of a clique in $G$. The closed real interval with endpoints $a<b$ is denoted by $[a,b]$.
For a finite set $S$ of intervals,
the \emph{interval intersection graph} $G_S$ of $S$ is the graph
with vertex set $V(G_S):=S$ and in which there is an edge between two distinct intervals
if and only if the intervals intersect.

The following well-known result states that every $m$-vertex $t$-tree is a subgraph of an interval graph of clique number $\Oh(t\log m)$.\footnote{The specific value $\log_3(2m+1) + 1$ in \lemref{interval-representation} is obtained by applying a result of Scheffler \cite{scheffler:optimal} on the tree underlying the width-$t$ tree decomposition of $H$.}

\begin{lem}\lemlabel{interval-representation}
  For every $m$-vertex $t$-tree $H$, there exists a mapping $f$ assigning to every vertex $v$ in $H$ an interval $f(v)$ so that the following holds. Let $S:=\{f(v):v\in V(H)\}$. Then,
\begin{compactenum}
  \item $f(v)$ intersects $f(w)$, so $f(v)f(w)\in E(G_S)$, for every edge $vw\in E(H)$, and
  \item $\omega(G_S) \leq (t+1)\floor{\log_3(2m+1)+1}$.
\end{compactenum}
Furthermore, for every proper $(t+1)$-colouring $\varphi':V(H)\to\{1,\dots,t+1\}$ of $H$, there exists a colouring $\varphi'':V(H)\to\{1,\dots,\lfloor\log_3(2m+1)+1\rfloor\}$ such that $\phi:V(H)\to\N^2$ defined by $\phi(v):=(\varphi'(v),\varphi''(v))$ is a proper colouring of $G_S$.\footnote{This property is only used when discussing small optimizations in label lengths at the end of the paper.}
\end{lem}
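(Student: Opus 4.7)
The plan is to attach to the $t$-tree $H$ a canonical tree $T$ on $m$ nodes, apply Scheffler's pathwidth-of-trees theorem to $T$ to obtain a path decomposition of $T$ of width at most $\lfloor\log_3(2m+1)+1\rfloor-1$, and then read off the interval $f(v)$ for each vertex $v$ of $H$ from this path decomposition.

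First I would build $T$. Set the nodes of $T$ to be $\{x_1,\dots,x_m\}$, with $x_{p(i)}$ as the parent of $x_i$ for $i\geq 2$, where $p(i):=\max\{j<i : v_j\in C_H(v_i)\}$ (when this maximum exists; otherwise $x_i$ is a root, and a dummy root can be added). For each $v=v_j$ set $T_v:=\{x_i : v\in C_H(v_i)\}$. Using the $t$-tree property and induction on $i$, one checks that $T_v$ induces a connected subtree of $T$ rooted at $x_j$, and that for every edge $vw\in E(H)$ the subtrees $T_v$ and $T_w$ share a node (namely $x_{\max(j,j')}$ where $v=v_j,w=v_{j'}$). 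Now Scheffler's theorem yields a path decomposition $(B_1,\dots,B_k)$ of $T$ with $|B_i|\leq\lfloor\log_3(2m+1)+1\rfloor$, such that consecutive bags cover every edge of $T$ and the indices of bags containing any fixed tree node form a contiguous range.

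Next I would define $f(v):=[i_v^-,i_v^+]$ where $i_v^\pm$ are the extreme indices $i$ with $B_i\cap V(T_v)\neq\emptyset$. Connectedness of $T_v$ together with the fact that tree-adjacent nodes share a bag makes $\{i : B_i\cap V(T_v)\neq\emptyset\}$ contiguous, so $f(v)$ is a genuine closed interval. For any edge $vw\in E(H)$, $T_v\cap T_w$ is nonempty and any bag containing a common node lies in $f(v)\cap f(w)$, proving item~(1). For item~(2), a clique in $G_S$ corresponds via Helly's property in $\R$ to a set of intervals with a common integer point $i^\ast$, and then each vertex of the clique belongs to $\bigcup_{x_{i'}\in B_{i^\ast}}C_H(v_{i'})$, a set of size at most $|B_{i^\ast}|(t+1)\leq (t+1)\lfloor\log_3(2m+1)+1\rfloor$.

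Finally, for the $(t+1)$-colouring extension, I would use the fact that the path decomposition of $T$ of the stated width arises from a centered colouring $c_T:V(T)\to\{1,\dots,\lfloor\log_3(2m+1)+1\rfloor\}$ in which every connected subgraph of $T$ has a vertex of unique colour. For each $v\in V(H)$ let $r_v$ be the unique node of $T_v$ of maximum $c_T$-value and set $\varphi''(v):=c_T(r_v)$. To show $\phi=(\varphi',\varphi'')$ is proper on $G_S$, consider distinct $v,w$ with $f(v)\cap f(w)\neq\emptyset$. If $\varphi'(v)\neq\varphi'(w)$ we are done; otherwise $vw\notin E(H)$, and one argues that a bag witnessing $f(v)\cap f(w)\neq\emptyset$ forces $r_v$ and $r_w$ to lie on a common root-to-node path of $T$, on which $c_T$ is injective by the centered colouring property, yielding $\varphi''(v)\neq\varphi''(w)$. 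The main obstacle is this last step: carefully tying the centroid-like structure producing Scheffler's path decomposition to the observation that ``peak'' colours $c_T(r_v)$ distinguish subtrees which meet a common bag, so that distinct non-adjacent vertices of $H$ with overlapping intervals are forced to receive different $\varphi''$-colours.
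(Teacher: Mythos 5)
Your construction of $T$ (nodes $x_1,\dots,x_m$ with $p(i)$ as you define it, adding a dummy root to merge components), the subtrees $T_v=\{x_i:v\in C_H(v_i)\}$, and the intervals $f(v)$ read off Scheffler's path decomposition are exactly the route the paper points to in its footnote, and your arguments for items (1) and (2) are sound: $T_v$ is connected by induction using the $t$-tree property, an edge $vw$ forces $x_{\max(j,j')}\in T_v\cap T_w$, and Helly plus the bag-size bound gives the clique bound $(t+1)\lfloor\log_3(2m+1)+1\rfloor$.

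The step you yourself flag as the ``main obstacle'' in the $\varphi''$-extension is a genuine gap, and the claim as you state it is not correct. A centered colouring (or vertex ranking) $c_T$ of $T$ is not injective on paths of $T$, and the bag $B_{i^*}$ that witnesses $f(v)\cap f(w)\neq\emptyset$ is not in general a path of $T$: it is a set of tree nodes whose internal structure depends on the details of how Scheffler's decomposition is produced, and there is no reason a priori that the peak node $r_v$ of $T_v$ even lies in $B_{i^*}$. Making this precise would require committing to a particular recursive construction of the decomposition (e.g.\ a centroid/ranking tree) and proving that any bag meeting $T_v$ must contain $r_v$; you have not supplied that argument, and the phrase ``root-to-node path of $T$'' conflates paths in $T$ with paths in the (unstated) ranking tree.

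A cleaner route bypasses the ranking machinery and simply reuses your item-(2) argument per colour class. Fix the proper $(t+1)$-colouring $\varphi'$ of $H$ and write $L:=\lfloor\log_3(2m+1)+1\rfloor$. For a colour $c$, let $K\subseteq\{v:\varphi'(v)=c\}$ be any set of pairwise-intersecting intervals. Helly gives a common bag index $i^*$, and for each $v\in K$ some $x_{i'(v)}\in B_{i^*}$ with $v\in C_H(v_{i'(v)})$. If $i'(v)=i'(w)$ for $v,w\in K$, then $v,w\in C_H(v_{i'(v)})$, which is a clique of $H$ on which $\varphi'$ is proper, so $\varphi'(v)=\varphi'(w)=c$ forces $v=w$. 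Hence $v\mapsto i'(v)$ is injective on $K$ and $|K|\le|B_{i^*}|\le L$. Thus the intervals in each $\varphi'$-colour class induce an interval graph of clique number at most $L$; since interval graphs are perfect, each class admits a proper $L$-colouring $\varphi''$, and $(\varphi',\varphi'')$ is then a proper colouring of $G_S$.
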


In light of \lemref{interval-representation}, we call an \emph{interval representation} of $H$
a mapping $f$ assigning to every vertex $v$ of a graph $H$ an interval $f(v)$ in such a way that $f(v)$ and $f(w)$ intersect for every edge $vw\in E(H)$.
(Let us remark that $f(v)$ and $f(w)$ may or may not intersect when $v,w$ are two non-adjacent vertices, thus $H$ is a subgraph of the intersection graph of the intervals.)
We will always assume that all the endpoints of intervals in the representation are distinct.
This can be easily achieved by local perturbations not changing the intersection graph.

A finite set $X\subset\R$ \emph{stabs} a set $S$ of intervals if $X\cap [a,b]\neq\emptyset$ for every $[a,b]\in S$.
Let $H$ be a graph and $f$ be an interval representation of $H$.
We say that a binary search tree $T$ \emph{stabs} $H$ if $V(T)$ stabs the set of intervals $\{f(v):v\in V(H)\}$.
For $v$ a vertex in $H$, we let $x_T(v)$ denote the lowest common $T$-ancestor of $V(T)\cap f(v)$, see \figref{x}.
For $U \subseteq V(H)$, we let $x_T(U):=\{x_T(v):v\in U\}$.

\begin{figure}
  \begin{center}
    \includegraphics{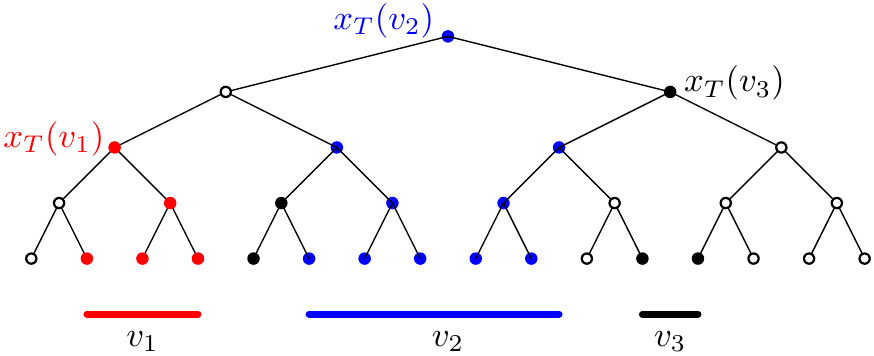}
  \end{center}
  \caption{The definition of $x_T(v)$.}
  \figlabel{x}
\end{figure}

\begin{lem}\lemlabel{all-the-obs-about-bst-and-intervals}
Let $H$ be a graph with a fixed interval representation $v\mapsto[a_v,b_v]$.
Let $T$ be a binary search tree that stabs $H$.
Then,
\begin{compactenum}
  \item for every vertex $v$ in $H$, we have $x_{T}(v) \in [a_v,b_v]$; and
  \label{itm:v-in-its-interval}
  \item for every clique $C$ in $H$, the set of nodes $x_T(C)$ lie in a single root-to-leaf path in $T$.
  \label{itm:clique-on-a-single-path}
\end{compactenum}
\end{lem}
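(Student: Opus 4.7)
Both items rely on a basic property of binary search trees: for any non-empty $S \subseteq V(T)$, the lowest common $T$-ancestor $\ell_T(S)$ satisfies $\min S \le \ell_T(S) \le \max S$, and moreover the common $T$-ancestors of $S$ form a chain from the root down to $\ell_T(S)$, so if any element of $S$ is itself a common $T$-ancestor of $S$ then it must equal $\ell_T(S)$ (any common ancestor is an ancestor of it, and it is an ancestor of the LCA). Item~\ref{itm:v-in-its-interval} is then immediate: the set $S_v := V(T) \cap [a_v, b_v]$ is non-empty since $T$ stabs $H$, and hence $x_T(v) = \ell_T(S_v) \in [\min S_v, \max S_v] \subseteq [a_v, b_v]$; in particular $x_T(v) \in S_v$.

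For item~\ref{itm:clique-on-a-single-path}, the plan is to exhibit a single real number $p$ whose search path in $T^+$ passes through $x_T(v)$ for every $v \in C$. Such a $p$ exists by Helly's property for intervals on $\R$: since $C$ is a clique in $H$ and $f$ is an interval representation, the intervals $\{[a_v, b_v] : v \in C\}$ are pairwise intersecting and therefore share a common point, e.g.\ $p := \max_{v \in C} a_v$.

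To show that $x_T(v)$ lies on the search path for $p$, I will verify that at every strict $T$-ancestor $y$ of $x_T(v)$, the search for $p$ turns toward $x_T(v)$. The key intermediate claim is that every such $y$ satisfies $y \notin [a_v, b_v]$. Indeed, $y$ is a common $T$-ancestor of $S_v$ (being a $T$-ancestor of $\ell_T(S_v) = x_T(v)$), and if $y$ were in $S_v$ then the chain property above would force $y = x_T(v)$, contradicting that $y$ is a strict ancestor. Since $y \in V(T)$ and $x_T(v) \in [a_v, b_v]$, the claim forces $y > b_v$ when $x_T(v) < y$ and $y < a_v$ when $x_T(v) > y$. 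In either case, using $p \in [a_v, b_v]$, the search for $p$ turns the same way at $y$ as the root-to-$x_T(v)$ path, and $p \ne y$, so the search actually reaches $x_T(v)$. Hence $x_T(C)$ lies on the search path of $p$, which is a root-to-leaf path in $T^+$ and extends to a root-to-leaf path in $T$. The only real subtlety in this plan is the chain argument placing every strict $T$-ancestor of $x_T(v)$ strictly outside $[a_v, b_v]$ on the appropriate side; the rest is bookkeeping.
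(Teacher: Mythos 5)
Your proof is correct, and item~\ref{itm:clique-on-a-single-path} is handled by a genuinely different argument than the paper's. The paper first reduces to cliques of size two (an edge $u_1 u_2$), assumes for contradiction that $x_T(u_1)$ and $x_T(u_2)$ are incomparable, finds the branching node $x$ with $x_T(u_1) < x < x_T(u_2)$, and observes that $x$ must lie in $[a_{u_1}, b_{u_1}] \cup [a_{u_2}, b_{u_2}]$ (which is an interval), contradicting the LCA definition for $u_1$ or $u_2$; general cliques then follow because pairwise comparability in a rooted tree implies a common root-to-leaf path. You instead invoke Helly's property for intervals on the line to get a common point $p$ of all intervals $\{[a_v,b_v] : v \in C\}$ and show directly that each $x_T(v)$ lies on the search path for $p$, using the key fact that every strict $T$-ancestor of $x_T(v)$ lies outside $[a_v,b_v]$. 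Your route is more global and avoids the reduction to pairs; the paper's is more local and avoids Helly. Both rest on the same underlying observation that the interval $[a_v,b_v]$ traps the turning direction of any node above $x_T(v)$.

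One small technicality worth flagging: the paper's definition of the search path $P_T(p)$ requires $p \notin V(T)$, so if your chosen $p = \max_{v\in C} a_v$ happens to be a node of $T$ you need a one-line patch. But this is easy: if $p \in V(T)$, then $p \in V(T)\cap[a_v,b_v] = S_v$ for every $v\in C$, hence $x_T(v) = \ell_T(S_v)$ is a $T$-ancestor of $p$, and all of $x_T(C)$ lie on the root-to-$p$ path. Alternatively, since the interval endpoints are assumed distinct, perturbing $p$ infinitesimally keeps it inside all the intervals and off $V(T)$. Either way the argument goes through; your proof plan is sound as written apart from this remark.

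Your argument for item~\ref{itm:v-in-its-interval} matches the paper's in substance (both argue that the LCA of $S_v$ is trapped between two elements of $S_v$), though you phrase it via the general inequality $\min S \le \ell_T(S) \le \max S$ while the paper works directly with the two witnessing subtrees.
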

\begin{proof}
  For the proof of the first item, consider $x:=x_T(v)$.
  Either we have $x\in [a_v,b_v]$, in which case there is nothing to prove,
  or there are two nodes $x_1,x_2\in V(T)\cap[a_v,b_v]$ such that $x_1$ is in the subtree of $T$ rooted at the left child of $x$ and $x_2$ is in the subtree of $T$ rooted at the right child of $x$.
  By the binary search tree property, $x_1<x<x_2$.
  But since $x_1,x_2 \in [a_v,b_v]$, we have $a_v\le x_1<x<x_2\le b_v$, so $x\in [a_v,b_v]$, as desired.

  For the proof of the second item, we just show it for $C$ being of size $2$, so for a single edge.
  The statement for general cliques will follow immediately by induction.
  Thus, consider two adjacent vertices $u_1$ and $u_2$ in $H$ and
  let $x_1=x_T(u_1)$ and $x_2=x_T(u_2)$.
  In order to get a contradiction, suppose that $x_1$ and $x_2$ do not lie on a single root-to-leaf path in $T$.
  Then there exists $x$ in $T$ such that $x_1$ is in the left subtree of $x$ and $x_2$ is in the right subtree of $x$.
  In particular, $x_1 < x < x_2$ by the binary search tree property.
  Since $u_1u_2 \in E(H)$ the corresponding intervals $[a_{u_1},b_{u_1}]$, $[a_{u_2}, b_{u_2}]$ intersect, so their union is an interval as well.
  Since $x_1,x_2$ lie in the union and $x_1 < x < x_2$, the node $x$ lies in the union as well.
  Therefore $x\in[a_{u_1},b_{u_1}]$ or $x\in [a_{u_2}, b_{u_2}]$.
  This contradicts the choice of $x_T(u_1)$ or $x_T(u_2)$ and completes the proof of the second item.
\end{proof}

Note that \lemref{all-the-obs-about-bst-and-intervals}(\ref{itm:v-in-its-interval}) implies that $x_T(v)$ has several alternative definitions:

\begin{cor}\corlabel{alternative-x-definitions}
    Let $H$ be a graph with a fixed interval representation $v\mapsto[a_v,b_v]$. Let $T$ be a binary search tree that stabs $H$ and let $v$ be any vertex of $H$.
    \begin{compactenum}
        \item The node set $V(T)\cap [a_v,b_v]$ has a unique node $x_T(v)$ of minimum $T$-depth.\label{itm:x-min-depth}.
        \item $T$ has a unique node $z=x_T(v)$ with the property that $z$ is the only node of $P_T(z)$ contained in $[a_v,b_v]$.\label{itm:x-unique-path}.
    \end{compactenum}
\end{cor}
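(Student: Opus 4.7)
The plan is to derive both characterizations of $x_T(v)$ directly from its defining property as the lowest common $T$-ancestor of $V(T)\cap[a_v,b_v]$, combined with \lemref{all-the-obs-about-bst-and-intervals}\itemref{itm:v-in-its-interval}, which tells us that $x_T(v)$ itself belongs to $[a_v,b_v]$. This membership is the crucial upgrade that makes both uniqueness statements easy.

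For item \itemref{itm:x-min-depth}, I would first observe that, by \lemref{all-the-obs-about-bst-and-intervals}\itemref{itm:v-in-its-interval}, $x_T(v)\in V(T)\cap[a_v,b_v]$, and by the definition of lowest common ancestor, $x_T(v)$ is a $T$-ancestor of every node of $V(T)\cap[a_v,b_v]$. Hence $d_T(x_T(v))\le d_T(z)$ for every $z\in V(T)\cap[a_v,b_v]$, with equality only if $z=x_T(v)$ (any $T$-descendant of $x_T(v)$ of the same depth must coincide with it). This gives both existence and uniqueness of the minimum-depth node.

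For item \itemref{itm:x-unique-path}, let $z:=x_T(v)$. To show $z$ is the only node of $P_T(z)$ lying in $[a_v,b_v]$, suppose for contradiction that some proper $T$-ancestor $z'$ of $z$ satisfies $z'\in [a_v,b_v]$. Then $z'\in V(T)\cap[a_v,b_v]$, so the lowest common $T$-ancestor $z$ of this set must be a $T$-ancestor of $z'$, contradicting the fact that $z'$ is a proper $T$-ancestor of $z$. For the uniqueness of $z$ with this property, suppose some node $z''\neq z$ is the only node of $P_T(z'')$ in $[a_v,b_v]$. In particular $z''\in V(T)\cap[a_v,b_v]$, so $z=x_T(v)$ is a $T$-ancestor of $z''$; since $z\neq z''$, $z$ is a proper $T$-ancestor of $z''$, and then $z\in P_T(z'')\cap[a_v,b_v]$ gives a second node of $P_T(z'')$ in $[a_v,b_v]$, a contradiction.

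There is no real obstacle: both parts are short consequences of LCA bookkeeping once \lemref{all-the-obs-about-bst-and-intervals}\itemref{itm:v-in-its-interval} is in hand. The only subtlety worth being careful about is the direction of the argument in \itemref{itm:x-unique-path}: one must show both that $z$ has the stated path property \emph{and} that no other node of $T$ does, and both directions reduce to the single fact that $z$ is the unique minimum-depth node of $V(T)\cap[a_v,b_v]$ established in \itemref{itm:x-min-depth}.
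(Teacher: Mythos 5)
Your proof is correct and follows essentially the same route as the paper: both rely on \lemref{all-the-obs-about-bst-and-intervals}\itemref{itm:v-in-its-interval} to place $x_T(v)$ inside $V(T)\cap[a_v,b_v]$, use the LCA property to get the unique minimum-depth node (item 1), and then derive item 2 by noting that any other node of $V(T)\cap[a_v,b_v]$ has $x_T(v)$ as a proper ancestor on its root path. The only cosmetic difference is that you spell out the first half of item 2 as a short contradiction argument where the paper simply states it follows from item 1.
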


\begin{proof}
    By definition $x_T(v)$ is the lowest common ancestor of $V(T)\cap [a_v,b_v]$ and by \lemref{all-the-obs-about-bst-and-intervals}(\ref{itm:v-in-its-interval}), $x_T(v)\in V(T)\cap [a_v,b_v]$ so $x_T(v)$ is the unique node of $V(T)\cap [a_v,b_v]$ of minimum $T$-depth. This establishes  the first item.

    For the second item, observe that the first item implies that $x_T(v)$ has the property that $x_T(v)$ is the only node of $P_T(x_T(v))$ contained in $[a_v,b_v]$.  To see that $x_T(v)$ is the unique node with this property consider any $z\in V(T)\cap[a_v,b_v]\setminus\{x_T(v)\}$. By (the original) definition of $x_T(v)$, $z$ is in the subtree of $T$ rooted at $x_T(v)$.  Since $z\neq x_T(v)$, $P_T(z)$ contains $x_T(v)$ as an internal node so $z$ is not the only node of $P_T(z)$ contained in $[a_v,b_v]$.
\end{proof}

\subsection{A Labelling Scheme for $t$-Trees}
\label{sec:t-trees}

We describe a labelling scheme for $t$-trees that, like our labelling scheme for paths, is based on a binary search tree.  The ideas behind this scheme are not new; this is essentially the labelling scheme for $t$-trees described by Gavoille and Labourel \cite{gavoille.labourel:shorter}.  However, we present these ideas in a manner that makes it natural to generalize the results of \secref{pxp}.

We are given a $t$-tree $H$ on $m$ vertices with an interval representation $v\mapsto[a_v,b_v]$ as in~\lemref{interval-representation}.  In particular, the clique number of the resulting interval graph is at most $(t+1)\floor{\log_3(2m+1)+1}$.  Since interval graphs are perfect, their clique number coincides with their chromatic number.  Let $\varphi: V(H) \to [(t+1)\floor{\log_3(2m+1)+1}]$ be a colouring such that $u$ and $v$ have distinct colours whenever the intervals of $u$ and $v$ intersect.

The following easy observation shows that a vertex $v$ of $H$ is uniquely identified by $\varphi(v)$ and the $x_T(v)$ value in a binary search tree $T$ that stabs $H$.
This gives ground for an adjacency labelling scheme.
\begin{obs}\obslabel{unique-id}
    Let $T$ be a binary search tree that stabs $H$.
    Let $v$ and $w$ be two distinct vertices in $H$.
    Then, $x_T(v)\neq x_T(w)$ or $\varphi(v)\neq\varphi(w)$.
    Consequently, $\sigma_T(x_T(v))\neq \sigma_T(x_T(w))$ or $\varphi(v)\neq\varphi(w)$.
\end{obs}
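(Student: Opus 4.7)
The plan is to prove the contrapositive: assume $x_T(v) = x_T(w)$ and show that then $\varphi(v) \neq \varphi(w)$. Setting $z := x_T(v) = x_T(w)$, I would invoke \lemref{all-the-obs-about-bst-and-intervals}\itemref{itm:v-in-its-interval} twice to conclude that $z \in [a_v, b_v]$ and $z \in [a_w, b_w]$. Hence $[a_v, b_v] \cap [a_w, b_w] \neq \emptyset$, so the intervals representing $v$ and $w$ intersect. By the defining property of $\varphi$ (that it assigns distinct colours to any two vertices whose intervals intersect, which is possible because $\varphi$ was taken to be a proper colouring of the interval graph associated with the representation $v \mapsto [a_v, b_v]$), we obtain $\varphi(v) \neq \varphi(w)$, as desired.

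For the ``consequently'' part, the key observation is that the signature map $\sigma_T$ is injective on $V(T)$: the signature of a node encodes the sequence of left/right turns along the unique root-to-node path, and two distinct nodes of $T$ necessarily have distinct root-to-node paths, hence distinct signatures. Therefore $\sigma_T(x_T(v)) = \sigma_T(x_T(w))$ would force $x_T(v) = x_T(w)$, and by the first part this implies $\varphi(v) \neq \varphi(w)$.

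There is no real obstacle here; the lemma is essentially a direct consequence of \lemref{all-the-obs-about-bst-and-intervals}\itemref{itm:v-in-its-interval} combined with the definition of $\varphi$ as a proper colouring of the interval intersection graph. The only subtlety is remembering to invoke the fact that the signature map is injective on $V(T)$ for the second statement, which is immediate from the definition of $\sigma_T$ but worth stating explicitly.
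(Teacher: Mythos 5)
Your proof is correct and follows exactly the same route as the paper's: apply \lemref{all-the-obs-about-bst-and-intervals}\itemref{itm:v-in-its-interval} to conclude the intervals of $v$ and $w$ share the point $x_T(v)=x_T(w)$, hence intersect, hence receive distinct colours under $\varphi$. You also spell out the injectivity of $\sigma_T$ on $V(T)$ for the ``consequently'' clause, which the paper leaves implicit, but this is the intended justification.
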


\begin{proof}
  If $x_T(v)=x=x_T(w)$, then by \lemref{all-the-obs-about-bst-and-intervals}\itemref{itm:v-in-its-interval} intervals $[a_v,b_v]$ and $[a_w,b_w]$ each contain $x$,
  so they intersect.
  Therefore, $\varphi(v)\neq\varphi(w)$.
\end{proof}

Let $T$ be a binary search tree that stabs $H$ and let $v$ be a vertex in $H$.
Fix an elimination ordering of $H$.
Recall that by \lemref{all-the-obs-about-bst-and-intervals}\itemref{itm:clique-on-a-single-path},
for every vertex $v$ in $H$,
all the nodes in $x_T(C_H(v))$ lie on a single root-to-leaf path in $T$.
We define
\[
\sigma_{H,T}(v):=\sigma_T(x),\quad\text{where $x$ is the node in $x_T(C_H(v))$ of maximum depth in $T$.}
\]
Note that $d_T(x_T(v))\le |\sigma_{H,T}(v)|$, and equality holds only if $x_T(v)$ is the deepest node in $x_T(C_H(v))$.

Now, we define the label $\tau_{H,T}(v)$ of a vertex $v$ in $H$.
Let $d=|C_H(v)|$ and let $u_1,\dots, u_d$ be the vertices in $C_H(v)$, in any order, so $v$ is one of them.
Recall that $\gamma$ is the Elias encoding of natural numbers.
The label $\tau_{H,T}(v)$ is defined
as the concatenation of
\begin{compactenum}[(T1)]
  \item $\gamma(|\sigma_{H,T}(v)|)$ and $\sigma_{H,T}(v)$;
  \item $\gamma(\varphi(v))$;
  \item $\gamma(d)$;
  \item $\gamma(d_{T}(x_T(u_i)))$ for each $i\in\{1,\dots,d\}$;
  \item $\gamma(\varphi(u_i))$ for each $i\in\{1,\dots,d\}$.
  \end{compactenum}

\begin{lem}\lemlabel{t-tree-labelling}
  There exists a function $F:(\{0,1\}^*)^2\to\{0,-1,1,\perp\}$ such that
  for any $t$-tree $H$
  with a fixed elimination ordering and a fixed interval representation,
  and a proper colouring $\varphi$ of the interval representation, and
  for any binary search tree $T$ stabbing $H$,
  for any two vertices $v$, $w$ in $H$, we have
  \[
      F(\tau_{H,T}(v),\tau_{H,T}(w)) = \begin{cases}
      0 & \text{if $v=w$;} \\
      -1 & \text{if $v$ and $w$ are adjacent in $H$, and $w\in C_H(v)$;} \\
      1 & \text{if $v$ and $w$ are adjacent in $H$, and $v\in C_H(w)$;} \\
      \perp & \text{otherwise.}
    \end{cases}
  \]
\end{lem}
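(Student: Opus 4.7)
The plan is to define $F$ as a straightforward decode-and-compare routine whose correctness reduces to \obsref{unique-id} together with the relationship between adjacency in $H$ and membership in a family clique. The key observation to extract upfront is that, since all nodes of $x_T(C_H(v))$ lie on a single root-to-leaf path of $T$ by \lemref{all-the-obs-about-bst-and-intervals}\itemref{itm:clique-on-a-single-path}, and since the deepest such node has signature $\sigma_{H,T}(v)$, the signature $\sigma_T(x_T(u_i))$ of any family clique member $u_i \in C_H(v)$ is simply the prefix of $\sigma_{H,T}(v)$ of length $d_T(x_T(u_i))$. That length is stored in block (T4), so $F$ can extract $\sigma_T(x_T(u_i))$ from $\tau_{H,T}(v)$ alone and pair it with $\varphi(u_i)$ from (T5).

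Using this, $F$ reconstructs from $\tau_{H,T}(v)$ the set
\[
\Pi(v) := \{(\sigma_T(x_T(u_i)),\,\varphi(u_i)) : u_i \in C_H(v)\},
\]
together with the distinguished element $(\sigma_T(x_T(v)),\varphi(v))$, which is obtained by matching $\varphi(v)$ from (T2) against the list of colors in (T5); this match is unique because the vertices of $C_H(v)$ form a clique in $H$, their intervals pairwise intersect, and $\varphi$ is a proper coloring of the interval graph, so the colors in (T5) are pairwise distinct. The decision routine of $F$ is then immediate: parse both labels using the prefix-freeness of $\gamma$, build $\Pi(v)$ and $\Pi(w)$ together with their distinguished pairs, and output $0$ if the distinguished pairs coincide, $-1$ if $w$'s distinguished pair lies in $\Pi(v)$, $1$ if $v$'s distinguished pair lies in $\Pi(w)$, and $\perp$ otherwise.

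Correctness rests on two facts. By \obsref{unique-id}, the pair $(\sigma_T(x_T(u)),\varphi(u))$ uniquely identifies the vertex $u$ of $H$, so equality of the two distinguished pairs is equivalent to $v = w$, and membership of $w$'s distinguished pair in $\Pi(v)$ is equivalent to $w \in C_H(v)$. By the definition of $C_H(\cdot)$ relative to the elimination ordering, two distinct vertices $v, w$ are adjacent in $H$ if and only if the later one in the ordering contains the earlier one in its family clique; in particular, exactly one of $w \in C_H(v)$ or $v \in C_H(w)$ holds when they are adjacent, and neither holds when they are not. The four cases of the lemma then follow directly. No estimates or subtle combinatorics are required; the only place one must be careful is the extraction of $\sigma_T(x_T(u_i))$ from $\sigma_{H,T}(v)$ as a prefix, which is exactly where \lemref{all-the-obs-about-bst-and-intervals}\itemref{itm:clique-on-a-single-path} is invoked.
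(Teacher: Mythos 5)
Your proposal is correct and follows essentially the same route as the paper's proof: decode the blocks, reconstruct the pairs $(\sigma_T(x_T(u_i)),\varphi(u_i))$ for $u_i\in C_H(\cdot)$ as prefixes of $\sigma_{H,T}(\cdot)$ via \lemref{all-the-obs-about-bst-and-intervals}\itemref{itm:clique-on-a-single-path}, and compare using \obsref{unique-id}. You also spell out a detail the paper leaves implicit, namely that the distinguished pair for $v$ is located inside blocks (T4)--(T5) by matching $\varphi(v)$ against the pairwise-distinct colours of the clique $C_H(v)$, which is a correct and welcome clarification.
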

Note that the labels $\tau_{H,T}(v)$ depend on the choice of elimination ordering and interval representation of $H$ but the function $F$ does not.
Recall also that $d_T(x_T(u)) \le |\sigma_{H,T}(u)|\leq h(T)$, for all vertices $u$ in $H$.
Moreover, $|C_H(u)|\leq t+1$ for all vertices $u$ in $H$.
Thus when $H$ is an $m$-vertex $t$-tree and $\varphi$ takes values bounded in $\Oh(t\log m)$,
we get labels $\tau_{H,T}(v)$ of length
$h(T) + \Oh(t\cdot (\log h(T) + \log t + \log\log m))$.
In particular, we can take a perfectly balanced tree $T$ whose vertex set $V(T)$ is just the set of all endpoints of intervals representing $H$.
Then $T$ stabs $H$ and $h(T)=\lfloor\log(2m)\rfloor$.
This way the labels $\tau_{H,T}(v)$ are of length $\log m + \Oh(t\log\log m +t\log t)$.

\begin{proof}[Proof of~\lemref{t-tree-labelling}]
For the adjacency testing, first note that from a given label $\tau_{H,T}(v)$ of a vertex $v$ in $H$, we can decode each block of the label.
This is just because $\gamma$, the Elias encoding, is prefix-free.

Note that from the blocks of $\tau_{H,T}(v)$ we can determine $\sigma_T(x_T(v))$, $\varphi(v)$, and
$\sigma_T(x_T(u))$, $\varphi(u)$, for all $u\in C_H(v)$.

Given the labels of two vertices $v$ and $w$ in $H$ we can test if they are adjacent as follows.
\begin{enumerate}
  \item If $\sigma_T(x_T(v)) = \sigma_T(x_T(w))$ and $\varphi(v)=\varphi(w)$, we conclude that $v=w$ (by~\obsref{unique-id}), so $F$ outputs $0$ in this case.

  \item Let $d=|C_H(v)|$ and $u_1,\dots, u_{d}$ be the vertices in $C_H(v)$.
  From the label of $v$,
  we decode the values of $d$, $\sigma_T(x_T(u_i))$ and $\varphi(u_i)$ for each $i\in\{1,\dots,d\}$.
  If $\sigma_T(x_T(w)) = \sigma_T(x_T(u_i))$ and $\varphi(w)=\varphi(u_i)$ for some $i\in\{1,\dots,d\}$, then we conclude that $w=u_i$ (by~\obsref{unique-id}) and $F$ outputs $-1$.

  \item  Now, let $d=|C_H(w)|$ and let $u_1,\dots, u_{d}$ be the vertices in $C_H(w)$.
  From the label of $w$,
  we decode the values of $d$, $\sigma_T(x_T(u_i))$ and $\varphi(u_i)$ for each $i\in\{1,\dots,d\}$.
  If $\sigma_T(x_T(v)) = \sigma_T(x_T(u_i))$ and $\varphi(v)=\varphi(u_i)$ for some $i\in\{1,\dots,d\}$, then we conclude that $v=u_i$ (by~\obsref{unique-id}) and $F$ outputs $1$.

  \item Otherwise, $v\neq w$, $v\not\in C_{H}(w)$, and $w\not\in C_{H}(v)$.
  This implies that $v$ and $w$ are not adjacent in $H$ because each edge in $H$ connects a vertex $u$ with a vertex in $C_H(u)$, for some $u$ in $H$.
  Thus, in this case $F$ outputs $\perp$.
\end{enumerate}
\end{proof}


In fact, we can get labels of length $\log{m} + \Oh(t\log\log{m})$ instead of $\log{m} + \Oh(t\log\log{m}+ t\log t)$.
This can be done by improving the length of (T5) from $\Oh ( t \log( t  \log m ))$ to $\Oh ( t  \log\log m )$, as we now explain.
In order to achieve this we need to work with the colouring $(\varphi'(v),\varphi''(v))$ of $H$ given by Lemma~\lemref{interval-representation} instead of $\varphi(v)$.
Recall that the image of $\varphi'$ is $\{1,\ldots,t+1\}$ and the image of $\varphi''$ is $\{1,\ldots,\floor{\log_3(2m+1)+1}\}$.
Given a vertex $v$ of $H$, we order the vertices $u_1, \dots, u_d$ in $C_H(v)$ according to their $\varphi'$-colours.
Now the improved (T5) block of the labelling is an array $R$ of $t+1$ entries indexed by $\varphi'$ colours.
We set $R[\varphi'(w)] := \varphi''(w)$, for each $w\in C_H(v)$.
Note that this may leave some entries undefined if $d<t+1$, in which case we set these to $0$ to distinguish them from ``true'' colours $\{1,\ldots,t+1\}$.
This way only $\Oh(t\log\log{m})$ bits suffice to encode $R$.

\subsection{Interval Transition Labels}

We now show that the solution presented in \secref{pxp} generalizes to the current setting.

Let $G$ be an $n$-vertex subgraph of $H\boxtimes P$ where $H$ is an $m$-vertex $t$-tree and $P=1,\dots,h$ is a path. Clearly, we can assume that $m\leq n$ and $h\leq n$.

Fix an elimination ordering of $H$ and an interval representation of $H$ with clique number at most $(t+1)\floor{\log_3(2m+1)+1}$, see~\lemref{interval-representation}.  Let $\varphi:V(H)\to\{1,\dots, (t+1)\floor{\log_3(2m+1)+1} \}$ be a proper colouring of the interval representation of $H$.

For each $y\in\{1,\dots,h\}$, let
\begin{align*}
S_y&=\{v\in V(H): (v,y)\in V(G)\}, \text{ and}\\
S^+_y&=\textstyle\bigcup_{v\in S_y} C_H(v).
\end{align*}
Note that $\sum_{y=1}^h |S_y| =n$ and $\sum_{y=1}^h |S^+_y| \leq (t+1)n$.
Let $S_0=\emptyset$.
For each $y\in\{1,\dots,h\}$, let $X_y\subset\R$ be the set of all endpoints of intervals representing vertices in $S^+_y\cup S^+_{y-1}$.
Apply \lemref{fractional} to the sequence $X_1,\dots,X_h$ to obtain a $1$-chunking sequence $V_1,\dots,V_{h}$ such that $V_y\supseteq X_y$ for each $y\in\{1,\dots,h\}$, and $\sum_{y=1}^h |V_y|\le 4\sum_{y=1}^h |X_y|$. Let $T_1,\dots,T_h$ be a $k$-bulk tree sequence based on $V_1,\dots,V_{h}$ with $k := \max\left\{5,\left\lceil\sqrt{\log n / \log\log n}\right\rceil\right\}$ (recall that if we fix the starting perfectly balanced binary search tree $T_1$ with vertex set $V_1$, this sequence exists and is unique).

For each $y\in\{1,\dots,h\}$, let $H^+_y$ be the subgraph of $H$ induced by $S^+_y\cup S^+_{y-1}$.
In particular, each $H^+_y$ is a $t$-tree.
We fix the elimination ordering of $H^+_y$ inherited from $H$.
Similarly, we fix the interval representation of $H^+_y$ and the proper colouring $\varphi$ of the interval representation, as the projection of respective ones for $H$.

Since $X_y \subseteq V_y = V(T_y)$, we have that $T_y$ stabs $H^+_y$.

By construction, we have
\[
\textstyle\sum_{y=1}^h |T_y| = \textstyle\sum_{y=1}^h |V_y| \le 4\textstyle\sum_{y=1}^h |X_y| \le 8\textstyle\sum_{y=1}^h |S^+_y\cup S^+_{y-1}| \le 16(t+1)n.
\]

By \lemref{bulk-tree-height}, for each $y\in\{1,\dots,h\}$ we have
\begin{align*}
h(T_y)&= \log |T_y| + \Oh(k+k^{-1}\log |T_y|)\\
&\leq \log |T_y| + \Oh(k+k^{-1}\log n).
\end{align*}

The following lemma, which is analogous to \lemref{node-transitions}, is the last piece of the puzzle needed for an adjacency labelling scheme for subgraphs of $H\boxtimes P$.

\begin{lem}\lemlabel{interval-transitions}
  There exists a function $J:(\{0,1\}^*)^2\to \{0,1\}^*$ such that,
  for any $H$, $P$, $G$, $\varphi$, $S_1,\dots,S_h$, $X_1,\dots,X_h$, and each $k$-bulk tree sequence $T_1,\dots,T_h$ defined as above, for each $y\in\{1,\dots,h-1\}$ and each $v\in S_y$, there exists $\mu_y(v)\in\{0,1\}^*$ with $|\mu_y(v)|= \Oh(k\log h(T_y))$ such that $J(\sigma_{H^+_y,T_y}(v), \mu_y(v))=\sigma_{H^+_{y+1},T_{y+1}}(v)$.
\end{lem}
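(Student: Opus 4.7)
I first note that for $v \in S_y$ the family clique is common to the two subgraphs: since $C_H(v) \subseteq S^+_y$ by the definition of $S^+_y$, and $S^+_y$ sits inside both $V(H^+_y)$ and $V(H^+_{y+1})$, the restriction of the elimination ordering gives $C_{H^+_y}(v) = C_{H^+_{y+1}}(v) = C_H(v) =: \{u_1,\dots,u_d\}$ with $d \le t+1$. For each $u_i$ the endpoints $a_{u_i}, b_{u_i}$ lie in $X_y \cap X_{y+1}$ and hence in $V(T_y) \cap V(T_{y+1})$; since both endpoints belong to the tree, \lemref{all-the-obs-about-bst-and-intervals}\itemref{itm:v-in-its-interval} together with the binary-search-tree property identifies $x_T(u_i) = \mathrm{LCA}_T(a_{u_i}, b_{u_i})$ for $T \in \{T_y, T_{y+1}\}$, so $\sigma_T(x_T(u_i))$ equals the longest common prefix (LCP) of $\sigma_T(a_{u_i})$ and $\sigma_T(b_{u_i})$.

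\textbf{Shape of $\mu_y(v)$ and decoding.} For each $u_i \in C_H(v)$ I pack into $\mu_y(v)$ the Elias codes $\gamma(d_{T_y}(x_{T_y}(u_i)))$ and $\gamma(d_{T_{y+1}}(x_{T_{y+1}}(u_i)))$ plus an ``interval transition'' fragment $\widetilde\nu_y(u_i) \in \{0,1\}^*$ of length $\Oh(k \log h(T_y))$ designed so that $\sigma_{T_{y+1}}(x_{T_{y+1}}(u_i))$ is recoverable from $\sigma_{T_y}(x_{T_y}(u_i))$ and $\widetilde\nu_y(u_i)$. On input $\sigma_{H^+_y,T_y}(v) = \sigma_{T_y}(x')$ the function $J$ extracts each $\sigma_{T_y}(x_{T_y}(u_i))$ as the appropriate-length prefix of $\sigma_{T_y}(x')$ (possible because, by \lemref{all-the-obs-about-bst-and-intervals}\itemref{itm:clique-on-a-single-path}, all $x_{T_y}(u_i)$ lie on the root-to-$x'$ path in $T_y$), applies each $\widetilde\nu_y(u_i)$ to obtain $\sigma_{T_{y+1}}(x_{T_{y+1}}(u_i))$, picks the index $i^{**}$ maximising $d_{T_{y+1}}(x_{T_{y+1}}(u_i))$, and outputs that signature, which by definition is $\sigma_{H^+_{y+1},T_{y+1}}(v)$. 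With $d = \Oh(1)$ the total length is $\Oh(k \log h(T_y))$.

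\textbf{Building $\widetilde\nu_y(u)$.} This is an interval analog of \lemref{node-transitions}, obtained by tracking $x_T(u) = \mathrm{LCA}_T(a_u, b_u)$ through the three bulk operations $T_y \to T' \to T'' \to T_{y+1}$. For \textsc{BulkInsert}$(I)$: any new $z \in I \cap [a_u, b_u]$ has its search path enter $[a_u, b_u]$ exactly at $x_{T'}(u)$ (the unique min-depth stabber along any search path into $[a_u, b_u]$) and, since $z \ne x_{T'}(u)$, descend strictly below it, so the resulting leaf is placed inside $x_{T'}(u)$'s subtree; neither the LCA nor its signature change, and zero bits are needed. For \textsc{BulkDelete}$(D)$: since $a_u, b_u \notin D$ both endpoints survive, and by \lemref{deletion-signature} their signatures in $T_{y+1}$ are prefixes of those in $T''$; consequently the new LCP $\sigma_{T_{y+1}}(x_{T_{y+1}}(u))$ is itself a prefix of $\sigma_{T''}(x_{T''}(u))$, specified by its new length in $\Oh(\log h(T_y))$ bits. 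For \textsc{BulkBalance}$(\theta,k)$ I adapt \lemref{node-bal-transitions}: each $\textsc{Balance}(x,k)$ invocation rebuilds a depth-$\theta$ subtree into a balanced top $\hat T_0$ of height at most $k$ on $|Z\cup X|\le 2^{k+1}-2$ nodes and reattaches fragments $A_\ell$ of size at most $|T_x|/2^k$; the new LCA of $[a_u,b_u]$ either lies in $\hat T_0$ (encodable in $\Oh(k)$ bits via its in-order position) or inside a single fragment $A_\ell$, in which case its signature is tracked through the $\Oh(k)$ \textsc{Split}/\textsc{MultiSplit} steps, each perturbing a prefix via one of the $4h(T_y)$ strings of \obsref{split-signature}, for $\Oh(k \log h(T_y))$ bits total.

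\textbf{Main obstacle.} The delicate step is \textsc{BulkBalance}: whereas \lemref{node-bal-transitions} only has to follow a specific surviving node through the splits, the LCA of $[a_u,b_u]$ may jump to an entirely new node of the restructured subtree---either a node of $\hat T_0 \cup X \cup Z$ newly raised into $[a_u,b_u]$ by the balancing, or the root of a fragment $A_\ell$ that now captures the whole overlap $V(T')\cap[a_u,b_u]$. Carefully identifying which regime applies, pinpointing the new LCA within it, and showing that the associated encoding fits in $\Oh(k\log h(T_y))$ bits is where the bulk of the technical work lies; the structural bounds $|\hat T_0| \le 2^{k+1}-2$ and $|A_\ell| \le |T_x|/2^k$ provided by $\textsc{Balance}(x,k)$ are what ultimately keep this bookkeeping within budget.
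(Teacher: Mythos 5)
Your overall route is sound and is essentially a per-family-clique-member variant of the paper's argument: for $v\in S_y$, all $u_i\in C_H(v)$ have their interval endpoints in $X_y\cap X_{y+1}\subseteq V(T_y)\cap V(T_{y+1})$, so $x_T(u_i)$ reduces to $\mathrm{LCA}_T(a_{u_i},b_{u_i})$ and its signature to an LCP of two node signatures. Your treatments of \textsc{BulkInsert} (no change to signatures of surviving nodes, hence no change to the LCA) and \textsc{BulkDelete} (by \lemref{deletion-signature}, each endpoint's signature becomes a prefix of its old one, so the LCP also becomes a prefix; encode only its new length) are both correct. And since $t$ is fixed, the $\Theta(d)$ blow-up from encoding $d\le t+1$ separate transition fragments (where the paper transitions only the single deepest node) is harmless; this is a legitimate, if slightly less economical, alternative to the paper's one-transition scheme.

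The gap is in \textsc{BulkBalance}, and it is one you yourself flag. More importantly, the way you describe the ``obstacle'' is misleading: you worry that the LCA may ``jump to an entirely new node\dots the root of a fragment $A_\ell$.'' That cannot happen. The clean dichotomy is governed by whether $[a_u,b_u]$ meets $Z\cup X$. If $[a_u,b_u]\cap(Z\cup X)\neq\emptyset$, then by \lemref{all-the-obs-about-bst-and-intervals}(\ref{itm:v-in-its-interval}) the new LCA lies in $\hat T_0$, and you encode $\sigma_{\hat T_0}(\cdot)$ directly in $\Oh(k)$ bits. If $[a_u,b_u]\cap(Z\cup X)=\emptyset$, then the whole overlap $V(T_x)\cap[a_u,b_u]$ sits inside one $T_i$ and survives into one $A_\ell$, and the stabbing node $x_T(u)$ is preserved as the \emph{same node} through every $\textsc{Split}(x')$ because each split pivot $x'\in X$ lies outside $[a_u,b_u]$ --- this is exactly the paper's \clmref{x-switch}, which is the short but non-obvious lemma your sketch is missing. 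Without that claim, ``tracking the signature through the splits'' is not justified, because tracking presupposes the node is fixed. Once you prove that claim (and the easy observation that passing from $T_y$ to $T_i$ preserves $x_T(u)$ when $[a_u,b_u]\cap Z=\emptyset$), your case B2 reduces to an application of \lemref{node-bal-transitions} to the fixed node $x_{T_y}(u)$, and the $\Oh(k\log h(T_y))$ bound follows. Incidentally, the bound $|A_\ell|\le|T_x|/2^k$ you invoke plays no role in this lemma; only $|\hat T_0|\le 2^{k+1}-2$ (giving $h(\hat T_0)\le k$) and the $\Oh(k)$-step split recursion matter here.
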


For strings $a$ and $b$, let $a\preceq b$ denote that $a$ is a prefix of $b$ and let $a\prec b$ denote that $a\preceq b$ and $a\neq b$.

\begin{proof}
  Let $v \in S_y$.
  First of all, note that $v$ is a vertex in $H^+_y$ and $H^+_{y+1}$.
  Since $T_y$ stabs $H^+_y$ and $T_{y+1}$ stabs $H^+_{y+1}$,
  we conclude that $\sigma_{H^+_{y},T_y}(v)$, $\sigma_{H^+_{y+1},T_{y+1}}(v)$ are well-defined.

  As in the proof of \lemref{node-bal-transitions} and \lemref{node-transitions}, we must dig into the details of the three bulk tree operations that transform $T_y$ into $T_{y+1}$.
  Let $I:=V(T_{y+1})\setminus V(T_{y})$ and $D:=V(T_y)\setminus V(T_{y+1})$.
  Recall that the three steps are:
  applying $\textsc{BulkBalance}(\theta,k)$ to $T_y$ with the appropriate value of $\theta$ to obtain $T'$,
  applying $\textsc{BulkInsert}(I)$ to $T'$ to obtain $T''$, and
  applying $\textsc{BulkDelete}(D)$ to $T''$ to obtain $T_{y+1}$.
  Recall that whenever $\textsc{BulkBalance}(\theta,k)$ is applied in this context we have $\theta < h(T_y)$.

  By \lemref{deletion-signature}, \lemref{insertion-depth}, and \lemref{balance-depth} we have  $h(T_{y+1}) \leq h(T'') \leq h(T')+2 \leq h(T_y)+3$.
  Thus the heights of all these trees are $\Oh(h(T_y))$.

  The transition code $\mu_y(v)$ is the concatenation of two parts $\mu_y^{\textsc{Bal}}(v)$ and $\mu_y^{\textsc{Del}}(v)$ devoted to different steps of the transformation from $T_y$ to $T_{y+1}$.
  First, the code $\mu_y^{\textsc{Bal}}(v)$ will serve to move from $\sigma_{H^+_y,T_y}(v)$ to $\sigma_{H^+_y,T'}(v)$.
  Next, we will argue that $\sigma_{H^+_y,T'}(v) = \sigma_{H^+_{y+1},T''}(v)$.
  Then the code $\mu_y^{\textsc{Del}}(v)$ will serve to move from $\sigma_{H^+_{y+1},T''}(v)$ to $\sigma_{H^+_{y+1},T_{y+1}}(v)$.

  We start with a discussion on rebalancing that leads to the definition of $\mu_y^{\textsc{Bal}}(v)$.
  The tree $T_y$ is rebalanced by an application of \textsc{BulkBalance}$(\theta,k)$ with an appropriate value of $\theta < h(T_y)$ and the resulting tree is $T'$.
  Recall that \textsc{BulkBalance}$(\theta,k)$ calls $\textsc{Balance}(x,k)$ for each node $x$ of depth-$\theta$ in $T_y$.
  Recall also that the changes made by $\textsc{Balance}(x,k)$ are limited to the subtree of $T_y$ rooted at $x$.
  See~\figref{rebalance-t-tree}.
  Let $Q$ be the path in $T_y$ encoded by $\sigma_{H^+_{y},T_{y}}(v)$.
  Thus $Q$ is the path from the root of $T_y$ to the deepest node $z$ in $x_{T_y}(C_{H^+_{y}}(v))$.
  Clearly, if $Q$ is not hitting vertices of depth at least $\theta$, then $\sigma_{H^+_{y},T_{y}}(v)=\sigma_{H^+_{y},T'}(v)$.
  Thus, in the case that $|\sigma_{H^+_{y},T_{y}}(v)| < \theta$, we define
  \[
  \mu_y^{\textsc{Bal}}(v) := \gamma(\theta).
  \]
  Note that in this case $|\mu_y^{\textsc{Bal}}(v)| = \Oh(\log \theta) = \Oh(\log h(T_y))$.

  Assume now that $|\sigma_{H^+_{y},T_{y}}(v)| \geq \theta$ and let $x$ be the $T_y$-ancestor of $z$ at depth $\theta$.
  Let $T_*$ be the subtree of $T_y$ rooted at $x$ and let $T_*'$ be the new tree obtained after calling $\textsc{Balance}(x,k)$ on the root, $x$, of $T_*$. (So $T_*$ is a subtree of $T_y$ and $T_*'$ is a subtree of $T'$.)
  Recall that the application of $\textsc{Balance}(x,k)$ identifies two sets of nodes $Z$ and $X$ that eventually form a perfectly balanced tree $\hat{T_0}$ of height at most $k$ which forms the top part of $T_*'$.
  Let $Q'$ be the path in $T'$ encoded by $\sigma_{H^+_y,T'}(v)$, so $Q'$ is the path from the root of $T'$ to the deepest node $z'$ in $x_{T'}(C_{H^+_y}(v))$.

  If $z' \in Z\cup X$, then $\sigma_{H^+_y,T'}(v) = \sigma_{T_y}(x),\sigma_{\hat{T_0}}(z')$.
  In this case, we define
  \[
  \mu_y^{\textsc{Bal}}(v) := \gamma(\theta),0,\gamma(|\sigma_{\hat{T_0}}(z')|),\sigma_{\hat{T_0}}(z').
  \]
  Note that in this case $|\mu_y^{\textsc{Bal}}(v)|=\Oh(\log\theta)+\Oh(\log k) + \Oh(k) = \Oh(\log h(T_y)+k)$.

  \begin{figure}
    \begin{center}
      \includegraphics{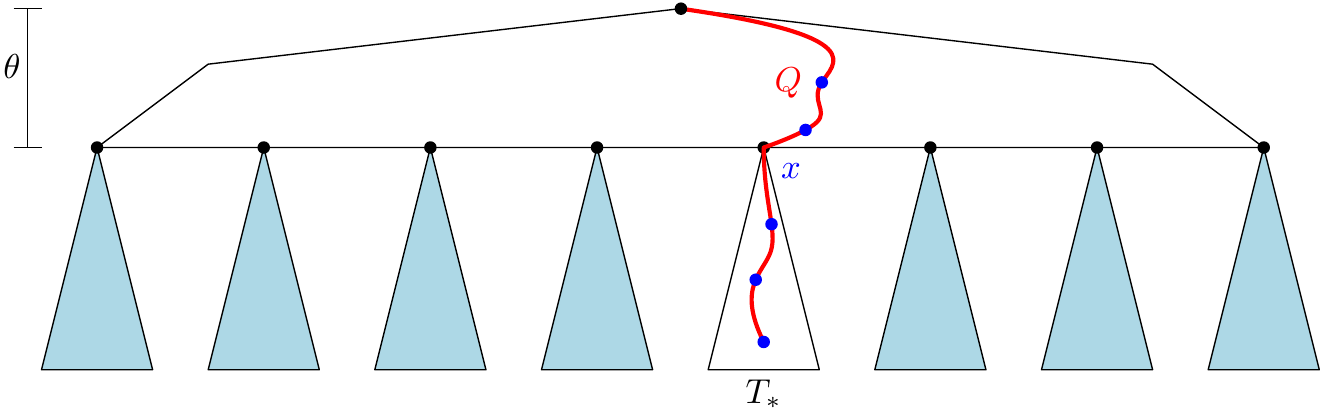}
    \end{center}
    \caption{Only a call to $\textsc{Balance}(x,k)$ on a node $x$ lying on a path described by $\sigma_{H^+_y,T_y}(v)$ can affect $\sigma_{H^+_y,T'}(v)$.}
    \figlabel{rebalance-t-tree}
  \end{figure}

  Now we are left with the case $|\sigma_{H^+_{y},T_{y}}(v)| \geq \theta$ and $z'\not\in Z\cup X$.
  Let $w \in C_{H^+_y}(v)$ be a vertex witnessing $z' = x_{T'}(w)$.
  By \corref{alternative-x-definitions}(\ref{itm:x-unique-path}),
  $z'$ is the unique node of $Q'$ contained in $[a_w,b_w]$, i.e., the interval representing $w$.
  By the properties of a binary search tree, we conclude that
  \[
  [a_w,b_w]\cap (Z\cup X)=\emptyset.
  \]
  Consider now the node $x_{T_y}(w)$ in $T_y$. Since $[a_w,b_w]\cap Z=\emptyset$ we know that $x_{T_y}(w)$ lie in one of the trees, say $T_i$, of the forest $T_*-Z$.
  Recall that $\textsc{Balance}(x,k)$ calls $\textsc{MultiSplit}(x_{i,1},\dots,x_{i,c_i})$ on $T_{i}$ to obtain a sequence of trees $T_{i,0},\dots,T_{i,c_i}$.
  This, in turn results in zero or more calls to $\textsc{Split}(x')$ for nodes $x'\in V(T_{i})\cap X$. The following claim explains the effect of one individual call to $\textsc{Split}(x')$:

  \begin{clm}\clmlabel{x-switch}
    Let $T$ be a binary search tree that stabs an interval $[a_w,b_w]$,
    and let $T_{<x}$ and $T_{>x}$ be the two trees resulting from calling $\textsc{Split}(x)$ on $T$ for some $x\in V(T)$. Then, exactly one of the following is true:
    \begin{compactenum}
      \item $a_w\le x\le b_w$;
      \item $x< a_w$, in which case $x_{T_{>x}}(w)=x_T(w)$; or
      \item $b_w < x$, in which case $x_{T_{<x}}(w)=x_T(w)$.
    \end{compactenum}
  \end{clm}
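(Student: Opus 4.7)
The statement is a trichotomy on the position of $x$ relative to the interval $[a_w,b_w]$: exactly one of $a_w\le x\le b_w$, $x<a_w$, and $b_w<x$ holds. Case~1 has no further conclusion, and cases~2 and~3 are mirror images under the symmetry between $T_{<x}$ and $T_{>x}$, so the plan is to handle only the case $x<a_w$ and prove $x_{T_{>x}}(w)=x_T(w)$; case~3 then follows by swapping the roles of $T_{<x}$ and $T_{>x}$.

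Assume $x<a_w$. Then every node of $V(T)\cap[a_w,b_w]$ has value $>x$ and hence lies in $V(T_{>x})$; in particular $T_{>x}$ still stabs $[a_w,b_w]$, so $x_{T_{>x}}(w)$ is defined. Set $z:=x_T(w)$. My plan is to invoke the path characterization in \corref{alternative-x-definitions}\itemref{itm:x-unique-path}: since $z$ is the unique node of $T$ for which $P_T(z)$ meets $[a_w,b_w]$ in exactly one point, in order to conclude $z=x_{T_{>x}}(w)$ it suffices to show the analogous property for $z$ inside $T_{>x}$. I will deduce this from the subtree inclusion $V(P_{T_{>x}}(z))\subseteq V(P_T(z))$: intersecting with $[a_w,b_w]$ gives $V(P_{T_{>x}}(z))\cap[a_w,b_w]\subseteq\{z\}$, and since $z\in V(P_{T_{>x}}(z))$ trivially and $z\in[a_w,b_w]$ by \lemref{all-the-obs-about-bst-and-intervals}\itemref{itm:v-in-its-interval}, equality holds; \corref{alternative-x-definitions}\itemref{itm:x-unique-path} applied inside $T_{>x}$ then identifies $z$ as $x_{T_{>x}}(w)$.

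The technical heart, and the step I expect to be the main obstacle, is establishing that inclusion. I would prove it by unpacking the $\textsc{Split}(x)$ construction using the notation defined there: the root-to-$z$ path in $T_{>x}$ first descends the new ``top spine'' $b_1,\dots,b_j$ of those $T$-ancestors of $x$ that exceed $x$ (down to the branching point), and then drops into a subtree that is transplanted verbatim from $T$---either the subtree $R$ that sat below $x$ in $T$ (if $z\in V(R)$), or the right subtree of some $b_i$ in $T$ (if $z$ lies in that $A$-subtree), or nothing further if $z$ is itself one of the $b_j$. In each case every spine vertex is already a $T$-ancestor of the subtree that contains $z$ and hence of $z$, while the portion of the path inside the transplanted subtree agrees with the corresponding segment of $P_T(z)$ because $\textsc{Split}(x)$ preserves the internal structure of those subtrees. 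This is a routine but unavoidable case-check matching the formal $\textsc{Split}$ construction to the geometric picture; once it is done, the prefix-level bookkeeping in the previous paragraph closes the argument.
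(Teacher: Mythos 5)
Your proposal is correct and takes essentially the same route as the paper: both reduce to Case~2 by symmetry, identify $z:=x_T(w)$ via the path characterization in \corref{alternative-x-definitions}\itemref{itm:x-unique-path}, and then argue that the root-to-$z$ path in $T_{>x}$ still meets $[a_w,b_w]$ in exactly $\{z\}$ because all nodes dropped from $P_T(z)$ have value $\le x<a_w$. The only cosmetic difference is that you spend a paragraph re-deriving the inclusion $V(P_{T_{>x}}(z))\subseteq V(P_T(z))$ from the \textsc{Split} construction, whereas this inclusion is already noted (for $T_{<x}$, symmetrically) in the proof of \lemref{split-height} and the paper simply recalls the stronger fact that the new path is the old path with the values $\le x$ deleted.
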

  \begin{proof}
    That the first condition of exactly one of the three cases applies is obvious.  Case~(1) has no specific requirements and Cases~(2) and (3) are symmetric, so we focus on Case~(2), so $x < a_w\le b_w$.

    By \corref{alternative-x-definitions}(\ref{itm:x-unique-path}), $z=x_T(v)$ is the unique node $z$ in $T$ such that the path $P$ from the root to $z$ has exactly one node in $[a_w,b_w]$.
    Recall that by construction
    the path $P'$ from the root to $z$ in $T_{>x}$ is obtained from $P$
    by deleting all values less than or equal to $x$.
    Therefore the path $P'$ in $T_{>x}$ still has exactly one node in $[a_w,b_w]$, namely $z$,
    so $z=x_{T_{>x}}(w)$. This completes the proof of \clmref{x-switch}.
  \end{proof}

  Since all the calls to $\textsc{Split}(x')$ generated by $\textsc{MultiSplit}(x_{i,1},\dots,x_{i,c_i})$ on the subtree $T_{i}$ are called with $x'\in X$ and $[a_w,b_w]\cap X =\emptyset$, \clmref{x-switch} guarantees that
  \begin{align*}
  x_{T_y}(w) = x_{T'}(w) &= z', \textrm{ and}\\
  \sigma_{H^+_y,T'}(v) &= \sigma_{T'}(z').
  \end{align*}

  Finally, by \lemref{node-bal-transitions} there exists a function $B$ and a bitstring
  $\nu(z')$ of length $\Oh(k\log h(T_y))$ such that $B(\sigma_{T_y}(z'), \nu(z')) = \sigma_{T'}(z')$.

  All this justifies the following definition, in the case that $|\sigma_{H^+_{y},T_{y}}(v)| \geq \theta$ and $z'\not\in Z\cup X$:
  \[
   \mu_y^{\textsc{Bal}}(v) := \gamma(\theta),1,\gamma(|\sigma_{T_y}(x_{T_y}(w))|),\gamma(|\nu(x_{T_y}(w))|),\nu(x_{T_y}(w)).
  \]
  Note that in this case $|\mu_y^{\textsc{Bal}}(v)|=\Oh(\log\theta)+\Oh(\log h(T_y)) + \Oh(k\log h(T_y)) = \Oh(k\log h(T_y))$.


  Now we shall argue that
  \[
  \sigma_{H^+_{y},T'}(v) = \sigma_{H^+_{y},T''}(v) = \sigma_{H^+_{y+1},T''}(v).
  \]
  Recall that $T''$ comes as a result of an application of $\textsc{BulkInsert}(I)$ to $T'$
  that attaches some small subtrees to the leaves of $T'$.
  This way, for every $u \in C_{H_y^+}(v) = C_H(v) \subseteq S^+_y \subseteq V(H^+_y)$,
  we have that $x_{T'}(u)$ is a $T''$-ancestor of any node $x$ in $T''$ such that
  $x\in I$ and $x\in [a_u,b_u]$.
  Hence, $x_{T'}(u) = x_{T''}(u)$ and
  $\sigma_{T'}(x_{T'}(u)) = \sigma_{T''}(x_{T''}(u))$.
  Therefore,
  $\sigma_{H^+_{y},T'}(v) =  \sigma_{H^+_{y},T''}(v)$.
  Recall that $\sigma_{H^+_{y},T''}(v)$ and $\sigma_{H^+_{y+1},T''}(v)$ encode paths in $T''$ from the root to the deepest node in $x_{T''}(C_{H^+_{y}}(v))$ and in $x_{T''}(C_{H^+_{y+1}}(v))$, respectively.
  Again, since $v\in S_y$ we have that $C_H(v)\subseteq S^+_y \subseteq V(H^+_y) \cap V(H^+_{y+1})$. Therefore,
  $C_{H^+_{y}}(v) = C_H(v) = C_{H^+_{y+1}}(v)$ and $\sigma_{H^+_{y},T''}(v) =\sigma_{H^+_{y+1},T''}(v)$.

  Next we describe $\mu_y^{\textsc{Del}}(v)$.
  The transition code $\mu_y^{\textsc{Del}}(v)$ serves to move
  from $\sigma_{H^+_{y+1},T''}(v)$ to $\sigma_{H^+_{y+1},T_{y+1}}(v)$.
  An application of $\textsc{BulkDelete}(D)$ to $T''$ results in
  a sequence of individual deletions.
  Consider a single deletion of an element $x$ and let $T^{\text{bef}}$ and $T^{\text{aft}}$ denote the trees before and after the deletion, respectively.

  \begin{clm}\clmlabel{deletion-ancestor}
  For every $u\in S^+_y$, we have
  \[
  \sigma_{T^{\text{aft}}}(x_{T^{\text{aft}}}(u))\preceq\sigma_{T^{\text{bef}}}(x_{T^{\text{bef}}}(u)).
  \]
  \end{clm}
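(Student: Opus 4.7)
The plan is to first exploit the hypothesis $u\in S^+_y$ to ensure that the endpoints $a_u,b_u$ of $u$'s interval are never removed during $\textsc{BulkDelete}(V(T_y)\setminus V(T_{y+1}))$: these endpoints lie in $X_y\cap X_{y+1}\subseteq V(T_y)\cap V(T_{y+1})$, so they avoid $D=V(T_y)\setminus V(T_{y+1})$ and therefore belong to both $V(T^{\text{bef}})$ and $V(T^{\text{aft}})$; in particular, both trees stab $[a_u,b_u]$, so $x_{T^{\text{bef}}}(u)$ and $x_{T^{\text{aft}}}(u)$ are well-defined. Set $z:=x_{T^{\text{bef}}}(u)$ and $z':=x_{T^{\text{aft}}}(u)$, and split the argument by whether the deleted value $x$ equals $z$.

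If $z\neq x$, then $z\in V(T^{\text{aft}})$ and $z$ still has value in $[a_u,b_u]$, so $z\in V(T^{\text{aft}})\cap[a_u,b_u]$ is a $T^{\text{aft}}$-descendant of its lowest common $T^{\text{aft}}$-ancestor $z'$. This gives $\sigma_{T^{\text{aft}}}(z')\preceq\sigma_{T^{\text{aft}}}(z)$, and chaining with $\sigma_{T^{\text{aft}}}(z)\preceq\sigma_{T^{\text{bef}}}(z)$ from \lemref{deletion-signature} yields the required prefix inequality.

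The interesting case is $z=x$. Here the binary search tree property, together with $a_u,b_u\in V(T^{\text{bef}})\setminus\{x\}$ and $a_u,b_u\in[a_u,b_u]$, forces $a_u<x<b_u$, with $a_u$ in $x$'s left subtree and $b_u$ in its right subtree; in particular $x$ has a left child and the deletion runs the predecessor branch. The key structural observation I will need to establish is that this recursive procedure only modifies values along a chain $x=x_0\to x_1\to\cdots\to x_k$ descending through the ``rightmost in the left subtree'' of each $x_i$: positions along the chain get shifted ($x_i$ takes the value $x_{i+1}$ for $i<k$, and the leaf $x_k$ is removed), while the right subtree of $x$ and every strict ancestor of $x$ retain their values and positions. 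Granting this, the position of $x$ in $T^{\text{aft}}$ holds the predecessor $x_1$ of $x$ in $V(T^{\text{bef}})$; since $a_u<x$ and no element of $V(T^{\text{bef}})$ lies strictly between $x_1$ and $x$, we get $a_u\leq x_1<x<b_u$, so $x_1\in[a_u,b_u]$. Moreover, $x$ being the LCA of $V(T^{\text{bef}})\cap[a_u,b_u]$ in $T^{\text{bef}}$ means no strict ancestor of $x$'s position has value in $[a_u,b_u]$, and those ancestors are unchanged in $T^{\text{aft}}$. Hence $z'$ sits exactly at position $\sigma_{T^{\text{bef}}}(x)$ in $T^{\text{aft}}$, giving the even stronger equality $\sigma_{T^{\text{aft}}}(z')=\sigma_{T^{\text{bef}}}(z)$.

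I expect the main obstacle to be pinning down the structural observation in the second case, namely unwinding the recursive form of the deletion to confirm that it touches only the descending ``rightmost-in-left-subtree'' chain from $x$ and leaves the right subtree of $x$ and all strict ancestors of $x$ intact. Once that behaviour is verified, the two cases combine to give $\sigma_{T^{\text{aft}}}(z')\preceq\sigma_{T^{\text{bef}}}(z)$.
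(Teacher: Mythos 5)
Your proof is correct, and for the case $z\neq x$ it takes a genuinely cleaner route than the paper. The paper proves the claim by a three-way case split on whether $x_{T^{\text{bef}}}(u)$ lies on the deletion chain $x_0,x_1,\dots,x_r$ (and if so, where), and in each case it pins down exactly which value becomes $x_{T^{\text{aft}}}(u)$; this requires some careful reasoning about which positions get new values. Your case $z\neq x$ sidesteps all of that: since $z$ survives as a value in $V(T^{\text{aft}})\cap[a_u,b_u]$, it is automatically a $T^{\text{aft}}$-descendant of the new LCA $z'$, so $\sigma_{T^{\text{aft}}}(z')\preceq\sigma_{T^{\text{aft}}}(z)$, and then \lemref{deletion-signature} (applied to the single deletion $\textsc{BulkDelete}(\{x\})$) gives $\sigma_{T^{\text{aft}}}(z)\preceq\sigma_{T^{\text{bef}}}(z)$. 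Chaining these yields the claim without ever having to identify $z'$. This is a real simplification: the paper's cases (2) and (3) become one line. Your case $z=x$ coincides with the paper's first case, with the small refinement that you correctly observe the deletion must take the predecessor branch (since $a_u<x$ forces a left child), whereas the paper phrases it for both branches. The ``structural observation'' you flag as unverified — that deletion only shifts values up along the descending predecessor chain $p_0\to p_1\to\cdots\to p_r$ and removes the leaf $p_r$, leaving all other positions and values (in particular $x$'s right subtree and all strict ancestors of $x$) intact — is exactly how the paper's recursive deletion unwinds, and the paper also asserts it without a proof; it is correct, so your Case~2 argument goes through.
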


  \begin{proof}[Proof of \clmref{deletion-ancestor}]
    See \figref{deletion-t-tree}.  At a global level, the deletion of a value $x$ from $T^{\text{bef}}$ involves finding a sequence of consecutive values $x_0<x_1<\cdots<x_r$ or $x_0>x_1>\cdots>x_r$ where $x=x_0$, $x_r$ is a leaf and $x_{i-1}$ is a $T^{\text{bef}}$-ancestor of $x_{i}$ for each $i\in\{1,\dots,r\}$.  The leaf containing $x_r$ is deleted and, for each $i\in\{0,\dots,r-1\}$, the (value of) node $x_i$ is replaced with (the value in) node $x_{i+1}$.
    The resulting tree is $T^{\text{aft}}$.
    \begin{figure}
      \begin{center}
        \includegraphics{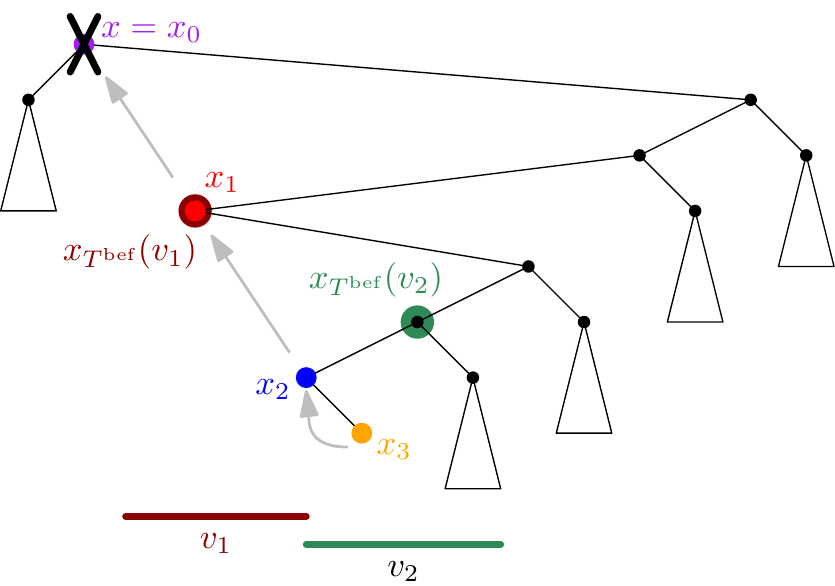}\\[1ex]
        $\Downarrow$\\[1ex]
        \includegraphics{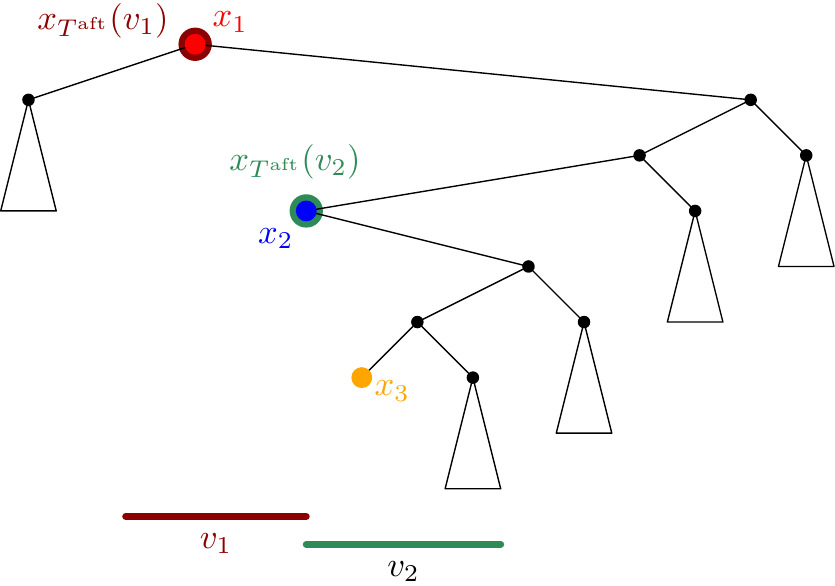}
      \end{center}
      \caption{The effect of a single deletion on $x_{T^{\text{bef}}}(v)$.}
      \figlabel{deletion-t-tree}
    \end{figure}

    First we look at the case $x_{T^{\text{bef}}}(u)=x_0$.
    Recall that $u\in S^+_y$, so $X_{y+1}$ contains both endpoints of the interval representing $u$, say $[a_u,b_u]$.
    This means that $x_0\neq a_u$ and $x_0\neq b_u$ and both endpoints lie in the subtree of $T^{\text{bef}}$ rooted at $x_0$.
    In particular, $x_0$ is not a leaf and $r\geq 1$.
    By~\lemref{all-the-obs-about-bst-and-intervals}\itemref{itm:v-in-its-interval}, we have $a_u < x_0 < b_u$.
    Since $x_1$ is the smallest value in the right subtree of $x_0$ or the largest value in the left subtree of $x_0$, we conclude that
    $a_u\leq x_1\leq b_u$.
    Thus, in this case we have $x_{T^{\text{aft}}}(u)=x_1$ and
    \[
    \sigma_{T^{\text{aft}}}(x_{T^{\text{aft}}}(u)) = \sigma_{T^{\text{aft}}}(x_1) = \sigma_{T^{\text{bef}}}(x_0) = \sigma_{T^{\text{bef}}}(x_{T^{\text{bef}}}(u)).
    \]

    If $x_{T^{\text{bef}}}(u)=x_i$ for some $i\in\{1,\dots,r\}$,
    then $x_{T^{\text{aft}}}(u)=x_i$ (see the interval $v_1$ in \figref{deletion-t-tree}) and
    \[
    \sigma_{T^{\text{aft}}}(x_{T^{\text{aft}}}(u))= \sigma_{T^{\text{aft}}}(x_i) = \sigma_{T^{\text{bef}}}(x_{i-1}) \prec \sigma_{T^{\text{bef}}}(x_{i}) = \sigma_{T^{\text{bef}}}(x_{T^{\text{bef}}}(u)).\]

    Finally, if $\sigma_{T^{\text{bef}}}(x_{T^{\text{bef}}}(u))\neq \sigma_{T^{\text{aft}}}(x_{T^{\text{aft}}}(u))$ and
    $x_{T^{\text{bef}}}(u)\neq x_i$ for all $i\in\{0,\dots,r\}$,
    then the only possibility is that $x_{T^{\text{aft}}}(u)=x_i$ for some $x_i\in [a_u,b_u]$ (see the interval $v_2$ in \figref{deletion-t-tree}).
    This can only happen if $x_{T^{\text{bef}}}(u)$ is a $T^{\text{bef}}$-ancestor of $x_i$ and $x_i$ is a $T^{\text{aft}}$-ancestor of $x_{T^{\text{bef}}}(u)$.
    The latter is equivalent with the fact that $x_{i-1}$ is a $T^{\text{bef}}$-ancestor of $x_{T^{\text{bef}}}(u)$.
    Therefore, we have
    \[
    \sigma_{T^{\text{aft}}}(x_{T^{\text{aft}}}(u)) = \sigma_{T^{\text{aft}}}(x_i) =
    \sigma_{T^{\text{bef}}}(x_{i-1}) \prec \sigma_{T^{\text{bef}}}(x_{T^{\text{bef}}}(u)).
    \]
    This completes the proof of the claim.
  \end{proof}

Since $\textsc{BulkDelete}(D)$ is a sequence of individual deletions, by multiple applications of~\clmref{deletion-ancestor} we get that
  \begin{align*}
    \sigma_{H^+_{y+1},T_{y+1}}(v)
      & = \sigma_{T_{y+1}}(x_{T_{y+1}}(u)) & \text{(for some $u\in C_{H^+_{y+1}}(v)=C_H(v)$)} \\
      & \preceq \sigma_{T''}(x_{T''}(u)) & \text{(by \clmref{deletion-ancestor})} \\
      & \preceq \sigma_{H^+_{y+1},T''}(v). 
  \end{align*}

We define
  \[\mu_y^{\textsc{Del}}(v) := \gamma(|\sigma_{H^{+}_{y+1},T_{y+1}}(v)|).
  \]
Note that $|\mu_y^{\textsc{Del}}(v)| = \Oh(\log h(T_{y+1})) = \Oh(\log h(T_y))$.

  The function $J$ is defined as expected:
  Given $\sigma_{H^+_y,T_{y}}(v)$ and $\mu_y(v)$,
  the function $J$ first decodes the value of $\theta$
  which is always the first block of $\mu_y^{\textsc{Bal}}(v)$.
  If $|\sigma_{H^+_y,T_{y}}(v)| < \theta$, then
  $J$ concludes that $\sigma_{H^+_y,T'}(v) = \sigma_{H^+_y,T_{y}}(v)$.
  Otherwise, in the case $|\sigma_{H^+_y,T_{y}}(v)| \geq \theta$,
  the function $J$ reads the next bit of $\mu_y(v)$.
  If it is $0$ then $J$ decodes the value of $\sigma_{\hat{T_0}}(z')$,
  computes $\sigma_{T_y}(x)$ which is the prefix of $\sigma_{H^+_y,T_{y}}(v)$ of length $\theta$,
  and concludes that $\sigma_{H^+_y,T'}(v) = \sigma_{T_y}(x),\sigma_{\hat{T_0}}(z')$.
  Otherwise, the bit is $1$.
  In this case, first $J$ decodes $|\sigma_{T_y}(x_{T_y}(w))|$ and $\nu(x_{T_y}(w))$.
  Next, $J$ computes $\sigma_{T_y}(x_{T_y}(w))$ which is the prefix of $\sigma_{H^+_y,T_{y}}(v)$ of length $|\sigma_{T_y}(x_{T_y}(w))|$.
  Next, $J$ computes $B(\sigma_{T_y}(x_{T_y}(w)),\nu(x_{T_y}(w))) = \sigma_{T'}(x_{T'}(w))$ and in this case
  $J$ concludes that $\sigma_{H^+_y,T'}(v) = \sigma_{T'}(x_{T'}(w))$.

  Thus, in either case $J$ establishes the value of $\sigma_{H^+_y,T'}(v) = \sigma_{H^+_{y+1},T''}(v)$.
  Now, $J$ looks up $\mu_y^{\textsc{Del}}(v)$ and decodes $|\sigma_{H^{+}_{y+1},T_{y+1}}(v)|$.
  The value of $\sigma_{H^{+}_{y+1},T_{y+1}}(v)$ is simply the prefix of $\sigma_{H^+_{y+1},T''}(v)$ of length $|\sigma_{H^{+}_{y+1},T_{y+1}}(v)|$.

  This completes the proof of the lemma.
\end{proof}

\subsection{The Labels}

We are ready to combine everything together and devise labels for vertices of $G$.

Let $A:(\{0,1\}^{*})^2\to\{0,1\}^*$ be the function given by \lemref{row-code} such that,
using the weight function $w(y):=|T_y|$ for each $y\in\{1,\dots,h\}$,
we have a prefix-free code $\alpha:\{1,\dots,h\}\to\{0,1\}^*$ such that
\begin{align*}
|\alpha(y)|&=\log\left(\textstyle\sum_{i=1}^h|T_i|\right) - \log|T_y| + \Oh(\log\log h)\\
&\leq \log\bigl( 16(t+1)n\bigr) - \log|T_y| + \Oh(\log\log n)\\
&\leq \log n -\log|T_y| + \Oh(\log\log n + \log t),
\end{align*}
for each $y\in\{1,\dots,h\}$, and $A(\alpha(i),\alpha(j))$ outputs $0$, $1$, $-1$, or $\perp$, depending whether the value of $j$ is $i$, $i+1$, $i-1$, or some other value, respectively.

Let $F:(\{0,1\}^{*})^2\to\{0,1\}^*$ be the function given by \lemref{t-tree-labelling}.

Let $J:(\{0,1\}^{*})^2\to\{0,1\}^*$ be the function given by \lemref{interval-transitions} such that
for each $y\in\{1,\dots,h-1\}$ and each $v\in S_{y}$,
there exists a code $\mu_{y}(v)$ with $|\mu_{y}(v)|=\Oh(k\log h(T_{y}))=\Oh(k\log\log n+k \log k)$ such that $J(\sigma_{H^+_y,T_{y}}(v),\mu_{y}(v))=\sigma_{H^+_{y+1},T_{y+1}}(v)$.

Let $z=(v,y)$ be a vertex in $G$.
Recall that $S^+_y \subseteq V(H^+_y)$ and if $i\neq h$ then also $S^+_y\subseteq V(H^+_{y+1})$.
Therefore
$C_{H}(v) = C_{H^+_y}(v) = C_{H^+_{y+1}}(v)$.
Let $d = |C_{H}(v)|$ and let $u_1,\dots,u_d$ be the vertices of $C_H(v)$.
Recall that $d\leq t+1$.
We define $a(z)$ to be an array of $3d$ bits indicating whether
each of the edges between $(v,y)$ and $(u_i,y+\{-1,0,1\})$ are present in $G$.
The label of $z=(v,y)$ is the concatenation of the following bitstrings:

\begin{compactenum}[(L1)]
  \item\label{label-alpha} $\alpha(y)$;
  \item $\gamma(|\sigma_{H^+_y,T_y}(v)|)$ and $\sigma_{H^+_y,T_y}(v)$; 
  \item $\gamma(\varphi(v))$;
  \item $\gamma(d)$;
  \item $\gamma(\varphi(u_i))$ for each $i\in\{1,\dots,d\}$;
  \item $\gamma(d_{T_{y}}(x_{T_{y}}(u_i)))$ for each $i\in\{1,\dots,d\}$;
  \item
  if $y\neq h$ then $1$, $\gamma(d_{T_{y+1}}(x_{T_{y+1}}(u_i)))$ for each $i\in\{1,\dots,d\}$;\\
  if $y=h$ then $0$;
  \item
  if $y\neq h$ then $1,\mu_y(v)$;\\
  if $y=h$ then $0$; and
  \item $a(z)$.
  \end{compactenum}
The length of the components are as follows:
(L1) is of length $\log n -\log|T_y| + \Oh(\log\log n + \log t)$,
(L2) is of length $\log|T_y| + \Oh(k+k^{-1}\log n)$,
(L3) is of length $\Oh(\log t + \log\log n)$,
(L4) is of length $\Oh(\log t)$,
(L5) is of length $\Oh(t\cdot (\log t + \log\log n))$,
(L6) and (L7) are of lengths $\Oh(t\cdot (\log\log n + \log k))$,
(L8) is of length $\Oh(k\log\log n + k\log k)$, and
(L9) is of length $\Oh(t)$.
In total, the label length is $\log n + \Oh(k\log\log n+k^{-1}\log n + k\log k + t \log\log n + t \log k + t \log t)$.
In particular, if $t$ is a fixed constant, the label length is $\log{n} + \Oh( k\log\log{n} + k^{-1}\log{n} + k\log k)$.

We remark that, using the same trick as described at the end of Section~\ref{sec:t-trees}, we can get the length of (L5) down to $\Oh(t\cdot \log\log n)$.
Assuming $k = o(\log{n})$, the total label length then becomes $\log{n} + \Oh((k+t)\log\log{n} + k^{-1}\log{n} + k\log k)$.

\subsection{Adjacency Testing}

First note that from a given label of $z=(v,y) \in V(G)$, we can decode each block of the label.
Note also that once we decode $d=|C_H(v)|$, $\sigma_{H^+_y,T_y}(v)$, $\varphi(v)$, and $\varphi(u_i)$, $d_{T_y}(x_{T_y}(u_i))$ for all $i\in\{1,\dots,d\}$, by \lemref{all-the-obs-about-bst-and-intervals}\itemref{itm:clique-on-a-single-path} we can determine $\sigma_{H^+_y,T_y}(x_{T_y}(w))$ for each $w\in\{v,u_1,\dots,u_d\}$.

Given the labels of two vertices $z_1:=(v_1,y_1)$ and $z_2:=(v_2,y_2)$ in $G$ we test if the vertices are adjacent as follows.
Looking up the value of $A(\alpha(y_1),\alpha(y_2))$, we determine which of the following cases applies:
\begin{enumerate}
  \item $|y_1-y_2|\ge 2$:
  In this case, we immediately conclude that $z_1$ and $z_2$ are not adjacent in $G$ since they are not adjacent even in $H\boxtimes P$.
  \item $y_1=y_2$:
  In this case, let $y:=y_1=y_2$.
  Note that (L2), (L3), (L4), (L5), (L7) contain all the pieces of the labels $\tau_{H^+_y,T_y}(v_1)$ and $\tau_{H^+_y,T_y}(v_2)$ from~\lemref{t-tree-labelling}.
  We compute $F(\tau_{H^+_y,T_y}(v_1),\tau_{H^+_y,T_y}(v_2))$ and determine if $v_1$ and $v_2$ are adjacent in $H^+_y$ (which is an induced subgraph of $H$).
  If $v_1$ and $v_2$ are not adjacent in $H^+_y$, then they are not adjacent in $H$ and as a consequence, $z_1=(v_1,y)$ and $z_2=(v_2,y)$ are not adjacent in $H\boxtimes P$ and thus also not adjacent in $G$.
  If $v_1$ and $v_2$ are adjacent in $H^+_y$ (and thus also in $H$), then $z_1=(v_1,y)$ and $z_2=(v_2,y)$ are adjacent in $H\boxtimes P$.
  If $F(\tau_{H^+_y,T_y}(v_1),\tau_{H^+_y,T_y}(v_2))=1$,
  we identify the position of $v_1$ on the list of vertices in $C_H(v_2)$ (by its $\varphi$-colour)
  and then finally we look up the appropriate bit in $a(z_2)$ to see whether the corresponding edge in $H\boxtimes P$ is present in $G$ or not.
  If $F(\tau_{H^+_y,T_y}(v_1),\tau_{H^+_y,T_y}(v_2))=-1$,
  we identify the position of $v_2$ on the list of vertices in $C_H(v_1)$
  and then finally we look up the appropriate bit in $a(z_1)$ to see whether the corresponding edge in $H\boxtimes P$ is present in $G$ or not.

  \item $y_1=y_2-1$:
  In this case, let $y=y_1$.
  Since $v_1 \in S_y$, by~\lemref{interval-transitions},
  we can compute $J(\sigma_{H^+_y,T_y}(v_1),\mu_y(v_1))=\sigma_{H^+_{y+1},T_{y+1}}(v_1)$.
  Now, $\sigma_{H^+_{y+1},T_{y+1}}(v_1)$ was the only missing piece of $\tau_{H^+_{y+1},T_{y+1}}(v_1)$ and
  just from the label $z_2=(v_2,y+1)$ we have $\tau_{H^+_{y+1},T_{y+1}}(v_2)$.
  We compute $F(\tau_{H^+_{y+1},T_{y+1}}(v_1),\tau_{H^+_{y+1},T_{y+1}}(v_2))$ to test whether $v_1=v_2$ or $v_1v_2\in E(H^+_{y+1})$.
  If $v_1\neq v_2$ and $v_1v_2\not\in E(H^+_{y+1})$, then $z_1$ and $z_2$ are not adjacent in $H\boxtimes P$ so they are not adjacent in $G$ either.
  If $v_1=v_2$ or $v_1v_2\in E(H^+_{y+1})$ then we know that $z_1$ and $z_2$ are adjacent in $H\boxtimes P$.
  In this case, we can now consult the relevant bit of $a(z_1)$ or $a(z_2)$ to determine if $z_1$ and $z_2$ are adjacent in $G$.

  \item $y_2=y_1-1$:  This case is symmetric to the preceding case, with the roles of $z_1$ and $z_2$ reversed.
\end{enumerate}


This completes the proof of our main result.

\begin{thm}\thmlabel{main-product}
  For every fixed $t\in\N$, the family of all graphs $G$ such that $G$ is a subgraph of $H\boxtimes P$ for some $t$-tree $H$ and some path $P$ has a $(1+o(1))\log n$-bit adjacency labelling scheme.  More precisely, for each $k\in\{5,\ldots,\left\lceil\sqrt{\log n / \log\log n}\right\rceil\}$, this graph family has an $f(n)$-bit adjacency labelling scheme with $f(n)=\log{n} + \Oh(k\log\log{n} + k^{-1}\log{n} + k\log k)$.
\end{thm}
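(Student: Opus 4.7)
My plan is to glue together the row-level weighted encoding of \lemref{row-code}, the bounded-treewidth labelling scheme of \lemref{t-tree-labelling}, and the signature-tracking machinery of \lemref{interval-transitions} on top of a single $k$-bulk tree sequence shared across all rows. First, using the observation that every treewidth-$t$ graph is a spanning subgraph of a $t$-tree, I may assume $H$ is a $t$-tree; I then fix an elimination ordering and an interval representation of $H$ as in \lemref{interval-representation} together with a proper colouring $\varphi$ of the associated interval intersection graph. For each $y\in\{1,\dots,h\}$, set $S_y:=\{v:(v,y)\in V(G)\}$, $S^+_y:=\bigcup_{v\in S_y}C_H(v)$, and let $X_y$ denote the set of endpoints of the intervals representing $S^+_y\cup S^+_{y-1}$. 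The combinatorial heart of the construction is to apply \lemref{fractional} to $X_1,\dots,X_h$ to produce a $1$-chunking sequence $V_1,\dots,V_h$ of total size $O(\sum_y|X_y|)=O(tn)$, and to run on it the $k$-bulk tree sequence $T_1,\dots,T_h$ with $k:=\max\{5,\lceil\sqrt{\log n/\log\log n}\,\rceil\}$. By \lemref{bulk-tree-height} each $T_y$ has height $\log|T_y|+O(k+k^{-1}\log n)$, and by construction each $T_y$ stabs $H^+_y:=H[S^+_y\cup S^+_{y-1}]$, so the signatures $\sigma_{H^+_y,T_y}(\cdot)$ and the $\tau$-labels of \lemref{t-tree-labelling} are well-defined.

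The label of $z=(v,y)$ is then the concatenation, in the spirit of \secref{pxp}, of: the row code $\alpha(y)$ from \lemref{row-code} with weights $w(y):=|T_y|$; an encoding of $\sigma_{H^+_y,T_y}(v)$ together with the colours and depths in $T_y$ of the family clique $C_H(v)$ (this is precisely $\tau_{H^+_y,T_y}(v)$ of \lemref{t-tree-labelling}); the analogous depths $d_{T_{y+1}}(x_{T_{y+1}}(u))$ for each $u\in C_H(v)$; the transition code $\mu_y(v)$ supplied by \lemref{interval-transitions}; and a constant-size bit-array $a(z)$ recording, for each of the at most $3(t+1)$ candidate edges from $z$ to vertices $(u,y')$ with $u\in C_H(v)$ and $|y'-y|\le 1$, whether that edge actually belongs to $G$. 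Using $\sum_y|T_y|\le 16(t+1)n$, the row code contributes $\log n-\log|T_y|+O(\log\log n+\log t)$, while $|\sigma_{H^+_y,T_y}(v)|\le h(T_y)=\log|T_y|+O(k+k^{-1}\log n)$; the two dominant terms telescope to $\log n$, leaving an error of $O((k+t)\log\log n+k^{-1}\log n+k\log k)$, which is $o(\log n)$ for fixed $t$ and the chosen $k$.

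For adjacency testing between $z_1=(v_1,y_1)$ and $z_2=(v_2,y_2)$, I would first evaluate $A(\alpha(y_1),\alpha(y_2))$ to determine whether $y_2-y_1\in\{-1,0,1\}$; any other value forces non-adjacency in $H\boxtimes P$. When $y_1=y_2=y$, both labels already expose $\tau_{H^+_y,T_y}(\cdot)$, so a single call to the function $F$ of \lemref{t-tree-labelling} decides whether $v_1v_2\in E(H^+_y)$, and (if so) the bit of $a(z_1)$ or $a(z_2)$ indexed via the $\varphi$-colour of the other endpoint within the family clique finishes the query. When $y_2=y_1+1$, I would use $J$ from \lemref{interval-transitions} to compute $\sigma_{H^+_{y_2},T_{y_2}}(v_1)=J(\sigma_{H^+_{y_1},T_{y_1}}(v_1),\mu_{y_1}(v_1))$, combine it with the pre-stored depths in $T_{y_2}$ to reconstruct $\tau_{H^+_{y_2},T_{y_2}}(v_1)$, then invoke $F$ inside $H^+_{y_2}$ and conclude with $a(\cdot)$; the case $y_2=y_1-1$ is symmetric. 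Parsing the compound label into its blocks is handled by the prefix-freeness of the Elias code $\gamma$ and of $\alpha$.

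The main obstacle, already resolved by \lemref{interval-transitions} (which rests on the \textsc{BulkBalance}/\textsc{BulkInsert}/\textsc{BulkDelete} discipline of \secref{bulk-trees} and on the $1$-chunking property of $V_1,\dots,V_h$), is that $\sigma_{H^+_y,T_y}(v)$ and $\sigma_{H^+_{y+1},T_{y+1}}(v)$ are \emph{not} naively related: the family clique $C_H(v)$ is fixed, but each stabbing node $x_{T_y}(w)$ for $w\in C_H(v)$ can shift as values are inserted, deleted, or moved by rebalancing. The whole point of introducing the bulk tree sequence was to control these shifts so that the transition can be encoded in $O(k\log h(T_y))$ bits; with that in hand the remainder of the proof is bookkeeping, namely summing the block lengths and tuning $k$ so that $k\log\log n$ and $k^{-1}\log n$ balance, yielding $f(n)=\log n+O(\sqrt{\log n\log\log n})$ and, more generally, the stated $f(n)=\log n+O(k\log\log n+k^{-1}\log n+k\log k)$ for any admissible $k$.
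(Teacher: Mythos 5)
Your proposal matches the paper's proof essentially step-for-step: the same reduction to a $t$-tree $H$ with a fixed elimination ordering, interval representation and proper colouring via \lemref{interval-representation}; the same definitions of $S_y$, $S^+_y$, $X_y$ followed by \lemref{fractional} and a $k$-bulk tree sequence; the same label blocks (the row code of \lemref{row-code}, the $\tau$-label of \lemref{t-tree-labelling}, the $T_{y+1}$-depths, the transition code $\mu_y(v)$ of \lemref{interval-transitions}, and a $3d$-bit local-edge array); and the same three-case adjacency test via $A$, $F$, and $J$. The length accounting and the tuning of $k$ are also the same, giving $\log n + \Oh(k\log\log n + k^{-1}\log n + k\log k)$ for fixed $t$.
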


\thmref{main} and \thmref{main-all} are immediate consequences of \thmref{main-product}, \thmref{product-structure}, and \thmref{product-structure-all}.

\section{Conclusion}
\seclabel{conclusion}

We conclude with a few remarks on the computational complexity of our labelling scheme.  Given an $n$-vertex planar graph $G$, finding an 8-tree $H$ (with mapping $f$ as in \lemref{interval-representation} and colouring $\varphi$), a path $P$, and a mapping of $G$ into a subgraph of $H\boxtimes P$ can be done in $\Oh(n\log n)$ time \cite{morin:fast}.  The process of computing the labels of $V(G)$ as described in \secref{pxp} and \secref{hxp} has a straightforward $\Oh(n\log n)$ time implementation.  Thus, the adjacency labels described in \thmref{main} are computable in $\Oh(n\log n)$ time for $n$-vertex planar graphs.

In the discussion on the adjacency test function below, we focus on the case where $t$ is a constant (which is the case in all our applications).
The adjacency testing function can then be implemented in time $\Oh(k)$  in the standard $w$-bit word RAM model, providing for binary words of length $w = \Omega(\log{n})$, bitwise logical operations, bitwise shift operations, and a most-significant-bit operation\footnote{The only purpose of the most-significant-bit operation is allow decoding of the Elias $\gamma$ code in constant time.}.
We note that \thmref{main-product} holds for a range of $k$ from $\Omega(1)$ to $\Oh(\sqrt{\log{n} / \log\log{n}})$. This yields a trade-off between adjacency test time and label length complexities. On one side, by choosing $k = \omega(1)$, we have labels of $(1+o(1))\log{n}$ bits and an adjacency test running in nearly constant time. On the other side, by selecting an adjacency test time complexity of $k = \Oh(\sqrt{\log{n}/\log\log{n}})$, the label length is minimized and has length $\log{n} + \Oh(\sqrt{\log{n}\log\log{n}})$.

%


%
%
%
%

The current result leaves two obvious directions for future work:
\begin{enumerate}
\item The precise length of the labels in \thmref{main} and \thmref{main-product} is, at best, $\log{n} + \Oh(\sqrt{\log{n}\log\log{n}})$. The only known lower bound is $\log{n} + \Omega(1)$. Closing the gap in the lower-order term remains an open problem.

\item \thmref{main-product} implies a $(1+o(1))\log{n}$-bit labelling schemes for any family of graphs that excludes an apex graph as a minor.  Can this be extended to any $K_t$-minor free family of graphs?

\item Planar graphs are a monotone family of graphs\footnote{A family $\mathcal{G}$ of graphs is \emph{monotone} if, for every $G\in \mathcal{G}$ and every (not necessarily induced) subgraph $G'\subseteq G$, $\mathcal{G}$ contains a graph isomorphic to $G'$.} and the number of $n$-vertex labelled planar graphs is $n!2^{O(n)}$ \cite{gimenez.noy:asymptotic}.  Does every monotone family of labelled graphs containing at most $n!2^{O(n)}$ $n$-vertex labelled graphs have a $(1+o(1))\log n$-bit adjacency labelling scheme? Note that this includes $K_t$-minor free graphs \cite{norine.seymour.ea:small}. This open problem is a variant of the \emph{implicit graph conjecture}, which asserts that any hereditary family of graphs that contains at most $2^{O(n\log n)}$ $n$-vertex graphs has an $O(\log n)$ bit adjacency labelling scheme \cite{kannan.naor.ea:implicit,spinrad:efficient}.
\end{enumerate}

\section*{Acknowledgement}

Part of this research was conducted during the Eighth Workshop on Geometry and Graphs, held at the Bellairs Research Institute, January~31--February~7, 2020.  We are grateful to the organizers and participants for providing a stimulating research environment.  We are particularly grateful to Tamara~Mchedlidze and David~Wood for helpful discussions. We thank the anonymous referees for their helpful comments.

\bibliographystyle{plainurl}
\bibliography{labelling}

\end{document}